\tikzset { domaine/.style 2 args={domain=#1:#2} }
\newcommand{\KN}{\mathbin{\bigcirc\mspace{-15mu}\wedge\mspace{3mu}}}
\newtheorem{theorem}{Theorem}
\newtheorem{theo}{Theorem}[section]
\newtheorem*{theo*}{Theorem}
\newtheorem{prop}[theo]{Proposition}
\newtheorem*{prop*}{Proposition}
\newtheorem{lem}{Lemma}[section]
\newtheorem{cor}{Corollary}[section]
\newtheorem*{cor*}{Corollary}
\newtheorem{de}{Definition }[section]
\newtheorem{remark}{Remark}[section]
\newcommand{\nocontentsline}[3]{}
\newcommand{\tocless}[2]{\bgroup\let\addcontentsline=\nocontentsline#1{#2}\egroup}
\title{On the Rigidity of Cosmological Space-Times}
\author{Rodrigo Avalos \\ Email: \href{rdravalos@gmail.com}{rdravalos@gmail.com} \footnote{Universität Potsdam, Institut für Mathematik, Potsdam, Deutschland.}}
\date{}
\begin{document}
\maketitle

\begin{abstract}
In this paper we analyse a family of geometrically well-behaved cosmological space-times $(V^{n+1},g)$, which are foliated by \emph{intrinsically isotropic} space-like hypersurfaces $\{M_t\}_{t\in \mathbb{R}}$, which are orthogonal to a family of co-moving observers defined by a global time-like vector field $U$. In particular, this implies such space-times satisfy several of the well-known criteria for isotropic cosmological space-times, although, in the family in question, the simultaneity-spaces $(M_t,g_t)$ associated to $U$ can have as sectional curvature a sign-changing function $k(t)$. Being this clearly impossible in the FLRW family of standard cosmological space-times, it motivates us to revisit the geometric rigidity consequences of different definitions of isotropy available in the literature. In this analysis, we divide such definition according to whether the isometries involved are taken to be (local) space-time (STI space-times) or (local) space isometries (SI space-times) of $(M_t,g_t)$ for each $t$. This subtlety will be shown to be critical, proving that only when space-time isometries are considered one obtains the well-known rigidity properties associated with isotropic cosmological space-times. In particular SI space-times will be shown to be a strictly larger class than the STI ones, allowing a family of basic cosmological curvature change models which are not even locally isometric to any FLRW space-time.
\end{abstract}

\section{Introduction}

This paper is motivated by a family examples of well behaved Lorentzian manifolds, which within the context of general relativity (GR) can be regarded as \emph{cosmological space-times}, recently introduced in \cite{MiguelCosmologicalST}. These are globally hyperbolic space-times $(V^{n+1},g)$, foliated by hypersurfaces $\{M_t\}_{t\in I}$, $I\subset \mathbb{R}$, orthogonal to a global time-like vector field $U=\partial_t$, where $M_t$ is (for each $t$) isometric to a (Riemannian) space of constant sectional curvature $M_{k(t)}$, where $k:I\mapsto \mathbb{R}$ denotes the corresponding sectional curvature function. One of the remarkable features about these examples is that the sectional curvature function can change sign, and therefore we refer to these models as \emph{basic cosmological curvature change models} (BCCCM). What makes this feature specially remarkable, is that it puts these examples at odds with some very standard claims about rigidity of cosmological space-times. To be more precise, these examples put into evidence that several well-known notions of \emph{isotropy} commonly used in GR are inequivalent, and not all of them lead to the well-known rigidity properties of isotropic cosmological space-times. That is, the claim that an everywhere isotropic cosmological space-time must be (locally) isometric to a standard Friedman-Lemaître-Robertson-Walker (FLRW) space-time is highly dependant on the chosen definition of isotropy among several well-known and highly standard ones, such as the ones used in \cite{BlauGRLN,CarrollBook,CB-Book,MTW,ONeillBook,CarrollLN,WaldBook,Weinberg}.\footnote{Even more directly, some of these claims actually entail a constructive way to obtain (uniquely) a space-time within the FLRW family from the chosen definition of isotropy. The BCCCM family is, for instance, a direct counterexample to some of these claims (see Remark \ref{NonRigRemark1}).}

In order to provide some background on the relevance of the results presented in this paper, our starting point is the
remarkable fact that, in the case of cosmology, there seems to be compelling evidence to believe that the Universe, in very large scales, is highly symmetric. In particular, that it is nearly \emph{isotropic}. The main evidence for this comes from the study of cosmic microwave background radiation. Thus, the starting point in the standard model of cosmology is the assumption that this symmetry is exact, so as to determine solutions approximating the problem, and then study perturbations of these solutions to understand the deviations of isotropy that we actually observe. For this program it is certainly beneficial that isotropic space-times are sufficiently well-characterised within a sufficiently small family of solutions which can be studied together. In the standard model of cosmology, the associated family is the FLRW models, which are given by space-times $(V,g)$, with $V=I\times M$, $I$ an interval in $\mathbb{R}$ parametrised by a coordinate $t$; $M$ a $3$-dimensional manifold; $g=-dt^2+a^2(t)\gamma_{\epsilon}$, where 
\begin{align}\label{SpaceForms}
\begin{split}
M_{\epsilon}\doteq (M,\gamma_{\epsilon})=\begin{cases}
\mathbb{S}^3, \text{ if } \epsilon>0,\\
\mathbb{E}^3, \text{ if } \epsilon=0,\\
\mathbb{H}^3, \text{ if } \epsilon<0,
\end{cases}
\end{split}
\end{align}
and $a:I\mapsto\mathbb{R}^{+}$ is the warping function. Above, $\mathbb{S}^{3},\mathbb{E}^3$ and $\mathbb{H}^3$ denote (respectively) the standard $3$-sphere, Euclidean space and $3$-dimensional hyperbolic space. Therefore, the mathematical model gets reduced to a warped product between $(I,-dt^2)$ and $M_{\epsilon}$, which we compactly write $V=I\times_aM_{\epsilon}$. In the above scenario, the curves $t\mapsto (t,p)$ distinguish a family of time-like geodesics (representing \emph{observers}) for whom \emph{space/space-time looks isotropic}. The warping function is then determined by solving the Einstein equations, typically for a \emph{perfect fluid source}, which get reduced to the so called \emph{Friedman-Lemaître equations}.
To a first approximation, this gives a good cosmological qualitative description. Details about this can be consulted in \cite{CarrollBook,WaldBook,Weinberg,WeinbergCosmology} and more mathematically oriented description of these space-times can be found in \cite{ONeillBook}. The rigidity associated to the FLRW family certainly simplifies the work of the cosmologist to the analysis of perturbations of it. Although this is not a necessary condition for the successful study of cosmology, the larger the idealised family of solutions is, the more complicated it is to single out the solutions that actually model observations. 

The claim that isotropic space-times exhibit the rigidity described above, and thus get reduced to the FLRW family, is a standard one in GR and cosmology, as one can see in classic and modern textbooks, lecture notes and research papers such as \cite{BlauGRLN,CarrollBook,CB-Book,ClarksonMaartens,Maartens01,MTW,CarrollLN,WaldBook,Weinberg,WeinbergCosmology}. It is thus interesting that the definition of isotropy itself is not uniform within such literature. Let us first comment the basic structures imposed on isotropic cosmological space-times $(V^{n+1},g)$ which are common to most definitions. To start with, a common point of agreement is to assume the existence of a foliation of $V^{n+1}$ by space-like hypersufaces $M_t$ which admit a time-like unit normal field $U=\partial_t$ whose integral curves represent a family of observers for whom space-time is isotropic. Thus, one assumes that $V^{n+1}\cong I\times M^n$, $I\subset \mathbb{R}$, $\bar{g}(U,U)=-1$, and we denote by $M^n_t= \{t\}\times M^n$. The \emph{model} for these kind of space-times therefore starts with a space-time metric of the form
\begin{align}\label{CosmologicalMetric1}
g=-dt^2+g_t,
\end{align} 
where $g_t$ is a $(0,2)$-tensor field inducing a time-dependent set of Riemannian metrics on $M^n_t$. One final point of agreement among different definitions of isotropy is that one is concerned about \emph{space-isotropy} as seen by the isotropic observers defined by $U$. We will refer to a space-time which satisfies all the above generally agreed notions as a \emph{cosmological space-time}.\footnote{Notice that, up to this point, a cosmological space-time is defined as a globally hyperbolic space-time for which its \emph{lapse function} is set to one and its \emph{shift vector} to zero, on the a priori basis of \emph{isotropy}.}

Let us now disentangle the definitions of isotropy found in the previous references into two families. On the one hand, some authors characterise isotropy in terms of the existence of symmetries of each space-like slice $(M^n_t,g_t)$ for all $t$, which translates to invariance under rotations in tangent spaces to $M^n_t$. We shall refer to this family as \emph{space-isotropic} (SI) space-times, and we highlight references such as \cite{CarrollBook,CB-Book} which appeal to this criterion.\footnote{See Section \ref{DefsClassifications} for details on these classification.} On the other hand, other authors appeal to the existence of space-time symmetries around any point which leave invariant the vector field $U$ and act as point-wise rotations on $U^{\perp}$. References adopting this second view are, for instance, \cite{ONeillBook,WaldBook,Weinberg}, and we shall refer to it as \emph{space-time isotropy} (STI) definitions. The subtle difference here is whether the isometries around any point $(t,p)$ in a cosmological space-time are isometries of the space-time metric $g$ or the space-metrics $g_t$ (for each $t\in I$). In both cases, the classic claim is that one is lead to the same rigidity: the only complete and simply connected models satisfying the corresponding definition of isotropy are within the FLRW family. The main purpose of this paper is to show that this is actually only true under STI definitions, and that SI definitions are strictly weaker that STI definitions, since there are (well-behaved) Lorentzian manifolds satisfying SI definitions which are non-isometric (even locally) to any FLRW space-time. 

The basis for the above claim is the BCCCM family of space-times recently introduced in \cite{MiguelCosmologicalST}. These are warped-product Lorentzian manifolds, which we denote by $V^{n+1}_{k(t)}$, and have the form $P_{+}\times_{S}\mathbb{S}^{n-1}$, where $P_{+}$ denotes the half-plane $\{(t,r)\in \mathbb{R}^2\: :\: r\geq 0\}$ and the warping function $S:P_{+}\to \mathbb{R}^{+}$ is given in terms of classic solutions to the Jacobi equation.\footnote{See Section \ref{Preliminaries} for further details.} For instance, a simple example of such space-times is given by taking
\begin{align}\label{MiguelsMetric}
g\doteq -dt^2 + dr^2 + \sinh^2(tr)g_{\mathbb{S}^{n-1}}=-dt^2+g_t.
\end{align}
The above space-time satisfies all the requirements of SI definitions, since $(M_t,g_t)$ are (complete and simply connected) Riemannian manifolds invariant under rotations of $\mathbb{S}^{n-1}$, which translates into a point wise $SO(n)$ symmetry. These spaces $(M^n_t,g_t)$ are in particular, for each time $t$, isometric to the standard hyperbolic space $\mathbb{H}_{k(t)}^{n}$ of constant sectional curvature $k(t)=-t^2$. As shown in \cite{MiguelCosmologicalST}, the associated space-times are globally hyperbolic with the level sets $t=cte$ as Cauchy surfaces. This simple example belongs to a wider family (which we describe in Section \ref{Preliminaries}) of globally hyperbolic space-times that satisfy the SI definition of isotropy. Notice that these examples allow the sectional curvature function $k:I\mapsto\mathbb{R}$ to change sign smoothly, which is clearly impossible in the FLRW case.

An interesting point to note here, is that the above family of space-times seems to \emph{violate} the construction of the FLRW family provided in some textbook references, since it can be constructed by taking $U=\partial_t$ as a unit time-like vector, whose integral curves are geodesics, which are orthogonal to the spaces $(M^n_t,g_t)$ which foliate the space-time (as Cauchy surfaces) and, furthermore, each of these Cauchy surfaces is a complete and simply connected space-form. Nevertheless, the resulting space-time $V^{n+1}_{k(t)}$ is not of the form of a FLRW space-time. Although this alone shows that rigidity associated with the \emph{construction} following SI definitions of isotropy does not hold, one can pose an interesting and related geometric question, which is whether these kinds of examples could actually be isometric to some standard FLRW space-time. After all, there is in general no unique family of isotropic observers, and therefore the space-time could have the form of a FLRW for a conjectured second family of isotropic observers. If this were true, one could expect that the SI definition of isotropy, in an appropriate sense, still leads uniquely to the FLRW family. Nevertheless, along this paper we will prove the following result:

\begin{theorem}\label{ThmAIntro}
A BCCCM is (locally) isometric to a FLRW space-time iff the sectional curvature function $k:I\subset \mathbb{R}\mapsto\mathbb{R}$ is constant, in which case the corresponding BCCCM is isometric to $-I\times M^n_k$.
\end{theorem}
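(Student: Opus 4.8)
\emph{Proof proposal.} The ``if'' direction is immediate. Recall that the warping function $S$ of a BCCCM solves, in the $r$--variable and for each fixed $t$, the Jacobi equation $\partial_r^2 S + k(t)\,S = 0$ with $S(t,0)=0$ and $\partial_r S(t,0)=1$ --- this is exactly what makes each slice $(M^n_t,g_t)$ the space form $M^n_{k(t)}$ written in geodesic polar coordinates --- so that $S$ also obeys the first integral $(\partial_r S)^2 + k(t)\,S^2 = 1$. If $k\equiv k_0$ is constant, both the equation and the initial data are $t$--independent, hence $S=S(r)$ and $g=-dt^2+\big(dr^2+S(r)^2 g_{\mathbb{S}^{n-1}}\big)=-dt^2+g_{k_0}$, where $g_{k_0}=dr^2+S(r)^2 g_{\mathbb{S}^{n-1}}$ is precisely the metric of the complete simply connected space form $M^n_{k_0}$. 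Thus $V^{n+1}_{k_0}$ is the Riemannian product $-I\times M^n_{k_0}$, which is an FLRW space-time (with constant scale factor), giving also the explicit model claimed in the statement.

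For the converse I would argue via a conformal invariant. Every FLRW space-time is conformally flat: passing to conformal time exhibits it as a conformal rescaling of the Lorentzian product $\mathbb{R}\times M_\epsilon$ over a space form, which is conformally flat. Since vanishing of the Weyl tensor (for $n+1\ge 4$), respectively of the Cotton tensor (for $n+1=3$), is preserved by local isometries, a BCCCM that is locally isometric to an FLRW space-time is conformally flat on a non-empty open set. The plan is then to compute the conformal curvature of $V^{n+1}_{k(t)}=P_{+}\times_S\mathbb{S}^{n-1}$ and read off what this imposes on $k$.

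Concretely, I would apply the warped-product curvature identities to $g=-dt^2+dr^2+S^2 g_{\mathbb{S}^{n-1}}$ in the orthonormal frame $\{\partial_t,\partial_r,S^{-1}e_i\}$. The base $(P_{+},-dt^2+dr^2)$ being flat, all curvature enters through the base-Hessian of $S$; using $\partial_r^2 S=-k(t)S$ and $(\partial_r S)^2+k(t)S^2=1$, the non-trivial data reduce to $k(t)$, to $S_{tt}/S$, and to the mixed term $S_{tr}/S$ (subscripts denoting partial derivatives). Substituting into the Weyl tensor and using that the $SO(n)$ symmetry forces each independent conformal-curvature component to be an $SO(n)$--invariant scalar built on the fibre directions, the explicit computation is expected to give that every independent component equals a non-zero dimensional constant times
\[
\partial_t\!\left(\frac{S_t}{S}\right)=\frac{S\,S_{tt}-S_t^{\,2}}{S^{2}},
\]
so that conformal flatness is equivalent to $\partial_t(S_t/S)\equiv 0$ (for $n+1=3$ those constants vanish and the same scalar governs the Cotton tensor instead). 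A short ODE argument then closes the proof: $\partial_t(S_t/S)\equiv 0$ gives $\log S(t,r)=t\,\varphi(r)+\psi(r)$ on $\{r>0\}$; feeding this back into the Jacobi equation and its first integral $(\partial_r S)^2+k(t)S^2=1$ and matching powers of $t$ forces $\varphi\equiv 0$, hence $S_t\equiv 0$, hence $k(t)=-\partial_r^2 S/S$ is constant. One then invokes the ``if'' part to identify the model.

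The heart of the argument is this middle step: carrying out the conformal-curvature computation cleanly for arbitrary $n$, and in particular checking that in dimension $n+1>4$ the ``purely spatial'' Weyl components --- those invisible to the electric/magnetic decomposition relative to $\partial_t$ --- are still controlled by the single scalar $\partial_t(S_t/S)$ (here the $SO(n)$--invariance of the fibre is what makes this tractable), together with the separate but entirely parallel treatment of $n+1=3$ via the Cotton tensor. A secondary, bookkeeping point is that local isometry only delivers conformal flatness on an open subset of the $(t,r)$--strip, so strictly the argument first yields that $k$ is constant on the corresponding $t$--interval; global constancy follows as soon as the BCCCM is assumed locally isometric to an FLRW space-time about every point (or globally isometric to one), which is the content of the statement.
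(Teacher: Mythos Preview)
Your approach is essentially the paper's: show FLRW is conformally flat, compute the Weyl tensor of the BCCCM warped product $P_{+}\times_{S}\mathbb{S}^{n-1}$, reduce conformal flatness to the single scalar condition $S\,S_{tt}=S_t^{\,2}$ (equivalently $\partial_t(S_t/S)=0$), and then argue this forces $k$ constant. The paper's Lemma on the Weyl splitting confirms exactly your prediction that every independent component is a dimensional constant times $g_P(\nabla S,\nabla S)-1-S\Box_P S$, which after using $\partial_r^2S=-kS$ and $(\partial_rS)^2+kS^2=1$ collapses to $S\,S_{tt}-S_t^{\,2}$.

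The only genuine difference is the endgame. The paper handles the ODE step by writing $S_t/S$ explicitly via the closed forms $S_{k}(r)=\sin(\sqrt{k}r)/\sqrt{k}$ or $\sinh(\sqrt{-k}r)/\sqrt{-k}$, fixing a reference time $t_0$, and Taylor-expanding the ratio of $S_t/S$ at $t$ and $t_0$ to order $r^2$; the $r^2$--coefficient is a nonzero multiple of $k(t)-k(t_0)$, whence $k\equiv k(t_0)$. Your route---write $\log S=t\varphi(r)+\psi(r)$ and feed back into the Jacobi equation and its first integral---avoids the sign-of-$k$ case split and the explicit trigonometric bookkeeping. It does work, but ``matching powers of $t$'' is imprecise: combining Jacobi with the first integral yields $t\varphi''(r)+\psi''(r)=-e^{-2\psi(r)}e^{-2t\varphi(r)}$, and the actual contradiction is that an affine function of $t$ cannot equal a nontrivial exponential, forcing $\varphi\equiv 0$ (and then $S_t\equiv 0$, so $k=-\partial_r^2S/S$ is $t$--independent). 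You should phrase it that way rather than as a polynomial-coefficient match.

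Two small cleanups. First, the Cotton discussion for $n+1=3$ is unnecessary here: the paper's standing hypothesis is $n\ge 3$, so $n+1\ge 4$ and Weyl vanishing is the correct criterion throughout. Second, your caveat about conformal flatness only on an open set is handled by the paper's convention that ``(locally) isometric'' means around every point, so Weyl vanishes on all of $P_+\times\mathbb{S}^{n-1}$ and no patching is needed.
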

The proof of the above result resides in studying conformal properties of the BCCCM family and showing they are conformally flat only in the rigid cases highlighted in the above theorem. This result stand out because of how standard the treatment of isotropic space-times is within physics. Let us contrast with the fact that STI space-times do have the well-known rigidity for cosmological space-times, although a complete and clear proof of this statement does not seem to be easy to find, nor is it cited within standard literature in physics. In particular, we claim the rigidity follows from \cite[Chapter 12, Proposition 6]{ONeillBook} and arguments laid out after it.\footnote{See Theorem \ref{ONthm} in Section \ref{Preliminaries}.} We shall elaborate further on a few of these subtleties within Section \ref{Preliminaries}, but at this point we would like to highlight that the realisation that not all definitions of isotropy within cosmology are equivalent, and furthermore that there may be small gaps within certain standard arguments, leads to a follow-up natural problem, which is distinguishing those definitions which are in fact equivalent and do imply rigidity with the FLRW family. 

With the above in mind, in Section \ref{DefsClassifications}, we shall introduce definitions of isotropy known from standard literature, which we extract in particular from \cite{CarrollBook,CB-Book,ONeillBook,WaldBook,Weinberg}, and which we believe are quite representative. Among these definitions, we will see that either in the SI or STI categories, one distinguishes authors who introduce isotropy explicitly by the existence of local isometries with special properties on each tangent space, while other introduce this notion through the existence of families of local Killing vector fields with specific point wise properties. We shall refer to the first kind as \emph{type I}-definitions and to second one as \emph{type II}-definitions. The equivalence of these approaches may seem quite intuitive and, actually, this seems to have been accepted quite broadly in the literature. Nevertheless, it does not seem to be easy to find self-contained explicit proofs with clear hypotheses for such equivalence.\footnote{Also, note that even though such equivalences seem to be intuitive, the actual proofs provided in Section \ref{DefsClassifications} are not as trivial as to assume one can do without them.} Putting this together with the above discussion on non-rigidity of SI definitions, we believe it to be important to provide such clear geometric proofs, and therefore in Section \ref{DefsClassifications} we shall prove the following:\footnote{See Theorem \ref{SIuniquenessTHM} and Theorem \ref{STIuniquenessTHM}.}
\begin{theorem}\label{ThmBIntro}
Given a cosmological space-time $(V^{n+1},g)$, then:
\begin{enumerate}
\item It is SI type I iff it is SI type II;
\item It is STI type I iff it is STI type II.
\end{enumerate}
\end{theorem}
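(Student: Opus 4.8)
The plan is to prove both equivalences by the same template: the ``type I $\Rightarrow$ type II'' direction is the hard one and relies on a Killing-extension / averaging argument, while ``type II $\Rightarrow$ type I'' is essentially the exponential-map transport of the local flows generated by the Killing fields. I would first fix notation: around a point $p\in M^n_t$ (resp. a point $(t,p)\in V$), write $\mathfrak{iso}$ for the relevant rotation subgroup $SO(n-1)$ (resp. $SO(n)$ acting on $U^\perp$), and recall that ``isotropy at a point'' in the type I sense means: for every $A$ in that rotation group there is a local isometry (of $g_t$ in the SI case, of $g$ fixing $U$ in the STI case) whose differential at the point is $A$. The type II formulation replaces ``local isometry with prescribed differential'' by ``local Killing field with prescribed value and prescribed antisymmetric covariant derivative at the point''.

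For type II $\Rightarrow$ type I: given the prescribed local Killing fields realizing every element of the Lie algebra $\mathfrak{so}$, their time-$s$ flows are local isometries, and by the standard fact that a local isometry is determined by its $1$-jet at a point together with surjectivity of $\exp$ on a neighborhood of the identity in $SO$, I would show that composing these flows hits every prescribed differential $A$. Concretely, $SO(n-1)$ (resp. $SO(n)$) is connected, hence generated by $\exp(\mathfrak{so})$, so the flows of the Killing fields generate local isometries with arbitrary differential in the rotation group; this gives type I. In the SI case this is applied fibrewise in $t$ with all constructions smooth in $t$; in the STI case one additionally checks the produced isometries leave $U=\partial_t$ invariant, which holds because the generating Killing fields do (they are tangent to the slices and the flow commutes with $\partial_t$).

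For type I $\Rightarrow$ type II, the key step is to manufacture the Killing fields out of the given local isometries. I would use the one-parameter subgroups of the rotation group: for $X\in\mathfrak{so}$, type I gives, for each $s$ in a neighborhood of $0$, a local isometry $\phi_s$ with $d\phi_s|_p=\exp(sX)$. The naive hope is that $\phi_s$ is a flow and its generator is the desired Killing field, but a priori $\phi_s$ need not depend nicely on $s$ nor satisfy the cocycle property. The standard remedy is an averaging / rigidity argument: the set of germs of local isometries fixing $p$ with differential in $SO$ forms a Lie group (by Myers--Steenrod applied locally, the isometry pseudo-group is a Lie group, finite-dimensional), the evaluation ``take differential at $p$'' is a Lie group homomorphism into $SO$, and type I says this homomorphism is surjective onto the relevant rotation subgroup; pulling back a one-parameter subgroup through this homomorphism (using that a surjective Lie group homomorphism admits local sections) yields a genuine smooth one-parameter family of local isometries, whose generator is a local Killing field $Y$ with $Y(p)=0$ and $(\nabla Y)(p)=X\in\mathfrak{so}\subset\mathfrak{so}(T_pM)$. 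Running this over a basis of $\mathfrak{so}$ gives the full family required by type II. In the SI case everything is internal to $(M^n_t,g_t)$; in the STI case one works with $(V,g)$ and must check the constructed Killing fields are tangent to the slices and annihilate $dt$ — this follows because the local isometries in the STI definition fix $U$, hence preserve the slices, hence their generators are slice-tangent.

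The main obstacle, as flagged, is the type I $\Rightarrow$ type II direction: turning an \emph{a priori unstructured} assignment $A\mapsto$ (some local isometry with differential $A$) into a \emph{smooth homomorphic} assignment, i.e. genuine local Killing fields. This is where one must invoke that the local isometry pseudo-group of a Riemannian (resp. Lorentzian) manifold is a finite-dimensional Lie group and that the $1$-jet map at a point is a smooth homomorphism onto a closed subgroup — the Myers--Steenrod theorem and its local/pseudo-group refinements. A secondary technical point, requiring care but not deep, is uniformity in $t$ in the SI statement: one wants the Killing fields (and the neighborhoods on which they are defined) to depend smoothly on $t$, which I would get by applying the construction to the single Riemannian metric $g_t$ with $t$ as a parameter and appealing to smooth dependence of solutions of the Killing equation on parameters.
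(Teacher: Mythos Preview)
Your type II $\Rightarrow$ type I direction is essentially the paper's argument: integrate the Killing fields, use that $E_p$ is an isomorphism onto $\mathfrak{o}(T_pM)$, and invoke surjectivity of $\exp:\mathfrak{o}(n)\to SO(n)$ to realise every rotation as the differential of a local isometry (this is the paper's Lemma \ref{SIuniqueness} and Proposition \ref{RotationalKillingsProp}).

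The type I $\Rightarrow$ type II direction, however, differs substantially from the paper and has a gap in the SI case. You have misread the SI type I definition: it is \emph{not} ``for every $A\in SO(n)$ there is a local isometry of $g_t$ with differential $A$''. Definition \ref{CBIsotropy} is a pure curvature condition --- the sectional curvature of $(M^n_t,g_t)$ at each point is independent of the $2$-plane. There are no isometries handed to you, so your Myers--Steenrod / pseudo-group argument has nothing to act on. The paper's route here is the obvious one you skipped: constant sectional curvature at every point plus Schur's lemma gives local isometry with a space-form, and space-forms carry the $\tfrac{n(n-1)}{2}$ rotational Killing fields explicitly.

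For STI type I $\Rightarrow$ type II, your abstract strategy (germs of local isometries form a Lie group, the $1$-jet map is a surjective homomorphism, pull back one-parameter subgroups) is plausible but delicate --- Myers--Steenrod is Riemannian, and the pseudo-group version in Lorentzian signature needs justification you have not supplied. The paper avoids this entirely: it first invokes O'Neill's rigidity (Theorem \ref{ONthm}) to conclude that an STI type I space-time is already a warped product $I\times_f M^n_k$; then, in geodesic polar coordinates $g_k=dr^2+S_k^2(r)g_{\mathbb{S}^{n-1}}$, the isometries of $\mathbb{S}^{n-1}$ lift to explicit one-parameter families of space-time isometries, whose generators are the required Killing fields with $\overline{E}^\top_q$ surjective. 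So the paper trades your abstract Lie-theoretic step for a structure theorem that reduces everything to a model where the Killing fields are written down by hand. Your approach, if the Lorentzian pseudo-group issues were handled, would be more self-contained (not relying on Theorem \ref{ONthm}); the paper's is shorter and more concrete but leans on that prior rigidity result.
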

Notice that in classic textbooks all the above definitions are claimed to lead uniquely to the same family of space-times. Thus, if one accepted the classical rigidity claims, the above theorem would become automatic, although a false equivalence claim between the STI and SI family would also become automatic. Thus, once we realise that (some of the) classical rigidity claims and arguments leading to them have gaps, it becomes problematic to commit to any such claim without proof checking. That is why we shall commit to the results of \cite[Chapter 12, Proposition 6]{ONeillBook}, put together in Theorem \ref{ONthm}, where the reader can check the validity of the claim quite easily. In particular, to the best of our knowledge, this is the most direct and geometrically transparent proof available. In contrast, the arguments in favour of the rigidity claim associated with the STI type II definition, which can be found in \cite{Weinberg}, are based on long tensor calculus manipulations, which are less geometric in nature, and furthermore the results are typically expressed only in a local manner with certain hypotheses not fully explicit. Therefore, instead of proof checking these other arguments, we prove that both definitions are in fact equivalent and therefore both lead to the desired rigidity properties through Theorem \ref{ONthm}. We also consider this path to be more fruitful, because the proof appeals to geometric constructions which explicitly link the more odd-looking properties demanded for the local family of Killing fields entering in type II definitions, with the properties of the local group of isometries (and their Lie algebras) entering in type I definitions. 

Following the above comments, let us also highlight a key issue which seems to have been sometimes overlooked to conclude rigidity with the FLRW family, which is the  
necessity of establishing that the natural maps $\mu_{s,t}:M^n_s\to M^n_t$ on a cosmological space-time act by homotheties, before concluding that the fact that for each time $t$ the metric $g_t$ is isometric to simply-connected space-form implies that $g_t=a^2(t)\gamma_{\epsilon}$. As can be seen in \cite[Chapter 12, Proposition 6]{ONeillBook}, this statement needs a proof of its own, for which the existence of space-time isometries becomes essential. 
Notice that this action by homotheties is the crucial property which does not allow the $t=cte$ simultaneity spaces to change the sign of their (constant) sectional curvature. The fact that there exist SI isotropic space-times with sign-changing sectional curvature has been noticed in the past, for instance in \cite{Krasinski2,Krasinski1,Stephani}, where solutions related to the family studied in \cite{MiguelCosmologicalST} were analysed, and related studies are also being carried out in \cite{VeraMars}. 
In this context, we believe that Theorems \ref{ThmAIntro} and \ref{ThmBIntro} call for a deeper discussion on which definition of isotropy should be physically favoured. 

Let us finish this section highlighting that the results mentioned above bring about interesting physical questions, most notably, which type of definition should actually be physically favoured. The answer to this question would certainly rely on the way the evidence of isotropy is obtained and thus whether one should infer the existence of space-time symmetries or space symmetries. The determination of isotropy from actual physical measurements does not seem to be a completely settled issue, based on papers such as \cite{ClarksonMaartens,PerlickCosmology,Maartens01}. In those cases, it is particularly interesting that the authors highlight that, since the information we get comes from light travelling along our past light-cone, then the determination of isotropy from measurement should be based on geometric data on such light-cones. In the case of \cite{PerlickCosmology}, the authors propose to analyse \emph{anisotropies in the Hubble-law} in a power expansion, 
although in this case the definition of an isotropic Hubble law seems to be of a different kind than the definitions of isotropy analysed in this paper. Perhaps more directly related to our discussion are some of the claims in \cite{ClarksonMaartens,Maartens01}, because there the authors claim that if certain \emph{initial data} on the past light cone for the characteristic Cauchy problem are \emph{isotropic}, then the whole interior of that past-light cone should also be \emph{isotropic}. Based on our analysis, we merely would like to point out that rigidity properties associated to such a statement will be highly dependent on the specific notion of isotropy one is putting forward. It should also be highlighted that the propagation of symmetries from an initial data set to space-time symmetries of the evolving space-times might be more subtle than what is described in these references. In the case of initial data on space-like Cauchy surfaces this relation is given by the link between KIDs and space-time Killing vectors, established in \cite{FMM,MoncriefSymmetries}.

With all of the above in mind, this paper shall be structured as follows. In Section \ref{Preliminaries}, we shall provide the preliminaries necessary for the main core of the paper. Then in Section \ref{DefsClassifications}, we shall classify the different notions of isotropy discussed above, and in particular prove Theorem \ref{ThmBIntro}. Section \ref{RigiditySection1} represents the core of this paper, where we shall establish a conformal characterisation of the BCCCM family and prove Theorem \ref{ThmAIntro}. For the benefit of the reader and to provide a more self-contained presentation, we also present an appendix with some well-known conformal properties of the FLRW family, for which is not easy to find explicit geometric proofs in the literature.
 



\section{Preliminaries}\label{Preliminaries}

\subsection{A canonical definition for isotropy}


Appealing to the discussion of cosmological space-times given in the introduction, let us recall the following definition, as used in this paper.

\begin{de}[cosmological space-times]\label{CosmoligcalSTDef}
We will say that a Lorentzian manifold $(V^{n+1},g)$, $n\geq 3$, is a cosmological space-time if
\begin{enumerate}
\item $V^{n+1}=I\times M^n$, where $I$ is a connected open interval of $\mathbb{R}$, which we parametrise by some coordinate $t$;
\item $U\doteq \partial_t$ is orthogonal to $M^n$ and $g(U,U)=-1$.
\end{enumerate}
In this setting we denote by $M^n_t\doteq \{t\}\times M^n$, by $g_t$ the properly Riemannian induced metric on $M^n_t$ and by $\pi:V^{n+1}\to M$ the canonical projection on the second factor. Therefore, we can always write $g=-dt^2+g_t$.
\end{de}

Let us now present a highlighted definition for isotropy well-known from standard literature, having been used, for instance, in \cite{ONeillBook,WaldBook}, and which we shall take to be the \emph{default} STI definition. 

\begin{de}[STI cosmological space-times - Type I]\label{WaldIsotropy}
A cosmological space-time is said to be isotropic if for every point $q\in V^{n+1}$, given unit vectors $v,w\in T_{q}V^{n+1}$ tangent to $M^n$, there exists an isometry $\psi=\mathrm{Id}\times \psi_M$, defined (at least) on a neighbourhood $\mathcal{U}\subset V^{n+1}$ of $q$, such that:
\begin{enumerate}
\item $d\psi_q(U)=U_q$;
\item $d\psi_{q}(v)=w$;
\item If $n>3$, we furthermore explicitly ask that given two planes $P_1,P_2\subset T_qV^{n+1}$ which are tangent to $M^n$ there is one such isometry $\psi$ satisfying the first item above, and such that $(d\psi)_q(P_1)=P_2$.
\end{enumerate}
\end{de}



Before moving on, let us highlight that for $n=3$ the second condition in the above definition actually implies the third one, since any plane is uniquely determined as the orthogonal space to a normal vector. Thus, an isometry satisfying the second condition, also preserves the corresponding orthogonal spaces.

Within the main theme of this paper, we distinguish the above definition of space-time isotropy mainly because of two reasons. Firstly, because the intuitive notion of space-isotropy is naturally and transparently introduced, and secondly because it does lead to rigidity with FLRW family due to arguments available in the literature. Interestingly enough, to the best of our knowledge, these arguments have only been fully spelled out in \cite[Chapter 12]{ONeillBook}. There, the reader may find the following result, which we fully write down due to its relevance as motivation for this paper.

\begin{theo}[O'Neill]\label{ONthm}
Let $(V^{n+1},g)$ be an $(n+1)$-dimensional cosmological space-time which is isotropic under Definition \ref{WaldIsotropy}. Then, 
\begin{enumerate}
\item[i.] Each slice $M^n_t=(\{ t\}\times M^n,g_t)$ has constant sectional curvature $k(t)$;
\item[ii.] Given $s,t\in I$, the diffemorphism $\mu_{s,t}:(M^n_s,g_s)\to (M^n_t,g_t)$ defined by $\mu_{s,t}(p)=p$ is an homothety;
\item[iii.] $(V^{n+1},g)$ is globally isometric to a warped product $I\times_{f}M^n_{k}$, with $M^n_{k}=(M^n,\gamma_{k})$ a Riemannian manifold with constant sectional curvature equal to $k=\pm 1$ or $k=0$, and $f:I\to \mathbb{R}^{+}$ a smooth positive function;
\item[iv.] If $M^n_{k}$ is complete and simply connected, then $(V^{n+1},g)$ is globally isometric to a warped product $I\times_{f}M^n_{k}$, with $M^n_{k}$ equal to the unique simply connected space-form of constant sectional curvature $k$.\footnote{These space-forms are precisely the ones described in (\ref{SpaceForms})}
\end{enumerate}
\end{theo}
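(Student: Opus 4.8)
The plan is to establish the four items essentially in order, with item (ii) carrying the whole weight of the argument.

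\emph{Item (i).} Fix $q=(t,p)$ and take a local isometry $\psi=\mathrm{Id}\times\psi_M$ fixing $q$, with $d\psi_q(U)=U_q$, which (by the third condition, or by the second one when $n=3$) carries a prescribed $2$-plane of $T_qM^n_t$ onto any other. Since $\psi$ preserves $U$ it preserves the distribution $U^{\perp}=TM^n_t$, hence the slice $M^n_t$, and restricts there to a local isometry of $(M^n_t,g_t)$ fixing $p$. As sectional curvature is an isometry invariant, all sectional curvatures of $(M^n_t,g_t)$ at $p$ agree; thus $(M^n_t,g_t)$ is pointwise isotropic and, since $n\geq 3$ and $M^n$ is connected, Schur's lemma forces the sectional curvature to be a constant $k(t)$.

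\emph{Item (ii).} This is the crucial point, and it is exactly here that one must use that the isometries are \emph{space-time} isometries. A diffeomorphism of the form $\psi=\mathrm{Id}\times\psi_M$ commutes with the flow of $U$, hence preserves every slice it meets and restricts to a local isometry of $(M^n,g_s)$ for all $s$ in the $t$-range of its domain. Fix $t_{0}\in I$, write $\gamma\doteq g_{t_{0}}$, and apply the second condition at $q=(t_{0},p)$: whenever $v,w\in T_pM^n$ have equal $\gamma$-length there is such a $\psi$ with $d\psi_q(v)=w$, and since $\psi$ is also a $g_t$-isometry (for $t$ near $t_{0}$) one gets $g_t|_p(v,v)=g_t|_p(w,w)$. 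Two positive-definite forms assigning equal lengths to every pair of vectors of equal $\gamma$-length are proportional, so $g_t=\rho(\cdot,t)^{2}\gamma$ near $t_{0}$, for a positive $\rho$ with $\rho(\cdot,t_{0})\equiv 1$. Since both $g_t$ and $\gamma$ are invariant under each such $\psi_M$, so is $\rho(\cdot,t)$; in particular it is invariant under the isotropy subgroup at every $p$, which acts transitively on the $\gamma$-unit sphere of $T_pM^n$, so $d\rho(\cdot,t)$ vanishes identically and $\rho(\cdot,t)=f(t)$ depends on $t$ only (connectedness of $M^n$). A routine open--closed argument in $t$ promotes this to all of $I$, giving $g_t=f(t)^{2}\gamma$ on $V^{n+1}$ with $f>0$ smooth and $f(t_{0})=1$. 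Hence $\mu_{s,t}$ is a homothety of ratio $f(t)/f(s)$, and $k(t)=k(t_{0})/f(t)^{2}$, which in particular cannot change sign.

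\emph{Items (iii)--(iv).} We now have $g=-dt^{2}+f(t)^{2}\gamma$ with $\gamma$ of constant curvature $k(t_{0})$. If $k(t_{0})=0$ put $\gamma_{k}\doteq\gamma$, $\tilde f\doteq f$; if $k(t_{0})\neq 0$ put $\gamma_{k}\doteq|k(t_{0})|\gamma$ (of constant curvature $k\doteq\operatorname{sgn}k(t_{0})=\pm 1$) and $\tilde f\doteq f/\sqrt{|k(t_{0})|}$. Then $(V^{n+1},g)$ is literally the warped product $I\times_{\tilde f}(M^n,\gamma_{k})$, which is (iii). For (iv), if $(M^n,\gamma_{k})$ is complete and simply connected, the Killing--Hopf classification of space-forms identifies it isometrically with $\mathbb{S}^{n}$, $\mathbb{E}^{n}$ or $\mathbb{H}^{n}$ according to $k=1,0,-1$, and this isometry extends by $\mathrm{Id}$ on the first factor to an isometry of the warped products. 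The step I expect to be the real obstacle is item (ii): the temptation is to argue slice by slice, which only yields $g_t\propto\gamma$ \emph{pointwise}; the genuine content is that a single family of space-time isometries controls all the $g_t$ simultaneously, forcing the conformal factor to be invariant under the fibrewise-transitive isotropy group and therefore constant on each slice. This is precisely what breaks for SI space-times — there the slice isometries need not preserve neighbouring slices, so the factor relating $g_t$ to $g_{t_{0}}$ may be genuinely non-constant — which is exactly the loophole the BCCCM family exploits.
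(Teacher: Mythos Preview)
Your argument for (iii)--(iv) is the same as the paper's: once (ii) is in hand, fix a reference slice $\gamma=g_{s}$, write $g_t=h(t)\gamma$ by the homothety, and rescale $\gamma$ to sectional curvature $0$ or $\pm 1$, then invoke Killing--Hopf. The paper, however, does not prove (i) or (ii) at all; it simply cites \cite[Chapter~12, Proposition~6]{ONeillBook} for both and then carries out exactly your reduction. So on (iii)--(iv) you match the paper, while on (i)--(ii) you are supplying what the paper outsources.

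Your sketch of (i) via Schur is standard and correct. For (ii), your idea is right and is the heart of the matter: $\psi=\mathrm{Id}\times\psi_M$ preserves every slice in its domain, so each $\psi_M$ is simultaneously a $g_t$-isometry for a whole $t$-interval, which forces the slice metrics to be conformal with spatially constant factor. One technical wrinkle: the $t$-interval on which a given $\psi$ lives depends a priori on the pair $(v,w)$, so the passage from ``for each $(v,w)$, $g_t|_p(v,v)=g_t|_p(w,w)$ for $t$ near $t_0$'' to ``$g_t|_p\propto\gamma|_p$ for $t$ near $t_0$'' hides a quantifier swap. Two clean fixes: either run your open--closed argument on $\{t:g_t|_p(v,v)=g_t|_p(w,w)\}$ for each fixed $p,v,w$ separately (openness at $t_1$ follows by applying isotropy at $(t_1,p)$ to the suitably rescaled pair); or, more cleanly, differentiate in $t$ --- the $\psi_M$-invariance passes to $\partial_t g_t$, giving $\partial_t g_t=\lambda(t,p)\,g_t$ pointwise from isotropy at $(t,p)$, after which your $d\rho=0$ argument applied to $\lambda$ shows $\lambda=\lambda(t)$ and one integrates. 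The infinitesimal route (via the shape operator of the slices) is essentially O'Neill's argument and avoids all domain bookkeeping. Your closing remark that (ii) is precisely where SI definitions fail is exactly the point the paper stresses.
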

\begin{proof}
The first two items in the theorem are exactly those established in \cite[Chapter 12, Proposition 6]{ONeillBook}. To establish the third one, notice that given some fixed $s\in I$, for any $t\in I$ and any $p\in M$
\begin{align}
g_t(v,v)=h(t,s)g_s(v,v), \: \: \forall \: v\in T_pM,\: v\neq 0,
\end{align}
where $h(t,s)=g_t(\frac{v}{|v|_{g_s}},\frac{v}{|v|_{g_s}})>0$ is independent of the point $p\in M^n$ because of $(ii)$. For $s$ fixed, setting $\gamma\doteq g_s$ we find that
\begin{align}\label{Warping}
g=-dt^2+h\gamma,
\end{align}
where by $(i)$ $(M^n,\gamma)$ is a Riemannian manifold of constant sectional curvature. Notice that if the sectional curvature of $\gamma$ is equal to $C\neq 0$, then $\gamma_{\epsilon}\doteq |C|\gamma$ has sectional curvature equal to $\epsilon\doteq \mathrm{sign}(C)=\pm 1$. Thus, redefining the warping function in (\ref{Warping}) as $f^2(t)=|C|^{-1}h(t,s)>0$, we find 
\begin{align}
g=-dt^2+f^2(t)\gamma_{\epsilon}.
\end{align}
Finally, if $M^n$ is simply connected and $\gamma$ complete, then $(iv)$ follows from well-known characterisation of complete simply connected space-forms (see, for instance, \cite[Chapter 8, Theorem 4.1]{DoCarmo}).
\end{proof}

Let us note that the only step which we believe to be missing from the arguments of \cite{WaldBook} leading to the conclusions $(iii)-(iv)$ of the above theorem, is the proof of the second statement. That is, that the maps $\mu_{s,t}$ act by homotheties. We highlight that, as can be seen in \cite[Chapter 12, Proposition 6]{ONeillBook}, although this is not a particularly involved proof, it does not seem to be self-evident either, and that is why we give preference to the proof in this last reference as what we deem should be regarded as the standard one.

What should be ultimately highlighted from this section, is that Definition \ref{WaldIsotropy} does entail the well-known rigidity of cosmological space-times, captured by claims $(iii)-(iv)$ in Theorem \ref{ONthm}. Having established this and highlighted some of the subtleties which have not always been taken into account in standard literature, within Section \ref{DefsClassifications} we will present three \emph{other} well-known definitions for isotropy within cosmology, all of which appeal to the existence of symmetries which do not allow us to identify special directions in each tangent space to each rest space $M^n_t$. The way each of these definitions is introduced is subtly different, and therefore, due to the existence of the counterexamples to some traditional constructions provided by the BCCCM space-times described in the next section, we will analyse each such definition in detail in Section \ref{DefsClassifications}, and provide a classification which shall ultimately also show that the definition of space-isotropy provided in \cite{Weinberg} is equivalent to the one given in this section, and thus also entails the same consequences as those described by Theorem \ref{ONthm}.  

\subsection{The family BCCCM of cosmological space-times}\label{BCCCM}

In this section we shall introduce the family of space-times analysed in \cite{MiguelCosmologicalST}, and which we denote by BCCCM. For this, let us first consider the following ordinary differential equation problem for a real function defined on some interval $I\subset\mathbb{R}$:
\begin{align}\label{sn-functions}
\begin{split}
\begin{cases}
&f'' + kf=0,\\
&f(0)=0,\: f'(0)=1.
\end{cases}
\end{split}   
\end{align}
The above is nothing more that the Jacobi equation along a unit speed geodesic $\gamma$, in a space of constant sectional curvature $k$, for the components of a (non-trivial) Jacobi field orthogonal to $\gamma$ vanishing at a chosen origin and written in orthonormal frame along $\gamma$. We denote by $S_k:\mathbb{R}\to \mathbb{R}$ the unique solution corresponding to a fixed $k$, which implies
\begin{align}
\begin{split}
S_k(r)=\begin{cases}
\frac{\sin(\sqrt{k}r)}{\sqrt{k}}, &\text{ if } k>0,\\
r, &\text{ if } k=0,\\
\frac{\sinh(\sqrt{-k}r)}{\sqrt{-k}}, &\text{ if } k<0.
\end{cases}
\end{split}
\end{align} 
Using these function, the metric of the space-forms $M^n_k$ can be written in geodesic polar normal coordinates, centred at some chosen origin $p\in M^n_k$, as
\begin{align}
\gamma_k=dr^2+S_k^2(r)g_{\mathbb{S}^{n-1}},
\end{align}
where the coordinate $r$ above stands for the geodesic distance function from a chosen origin $p\in M^n_k$, and $g_{\mathbb{S}^{n-1}}$ stands for the canonical round metric on the $(n-1)$-dimensional sphere. 

Let us now consider a smooth function $k:I\subset \mathbb{R}\to \mathbb{R}$, the manifold $I\times \mathbb{R}^n\backslash\{0\}$, where we parametrise the first factor by a time-coordinate $t\in I$ and we see $\mathbb{R}^n\backslash\{0\}\cong \mathbb{R}^{+}\times S^{n-1}$, where $S^{n-1}$ stands for a topological sphere. We endow such manifold with the Lorentzian metric:
\begin{align}\label{MiguelMetric}
g_k=-dt^2+dr^2+S_{k(t)}^2(r)g_{\mathbb{S}^{n-1}}, \: \: (t,r)\in I\times \mathbb{R}^{+}.
\end{align}
Therefore, the above manifold is diffeomorphic to $P_{+}\times S^{n-1}$, where $P_{+}=\{(t,r)\in I\times \mathbb{R}^{+}\}$, and the Lorentzian manifold is then a warped product given by $V^{n+1}_{k}\doteq P_{+}\times_{S_{k(t)}}\mathbb{S}^{n-1}$, where we equip $P_{+}$ with the metric $g_{P}\doteq -dt^2+dr^2$. Let us notice that these metrics are smooth, since the functions:
\begin{align*}
S:I\times\mathbb{R}^{+}&\to \mathbb{R}^{+},\\
(t,r)&\mapsto S_{k(t)}(r)
\end{align*}
are smooth whenever $k:I\to \mathbb{R}$ is smooth. For a direct proof of this fact, see \cite[Lemma 2.1]{MiguelCosmologicalST}. Also notice that this follows from an application of general results on ODE theory, since the solution to (\ref{sn-functions}) is not only smooth with respect to the initial data, but also with respect to variations in the parameter $k$. Thus, being $S_{k(t)}(r)$ defined as the unique solution to (\ref{sn-functions}) for any fixed $k(t)$, we see that $k\in C^l(I)$, $l\geq 0$, implies that all the $l$-th partial derivatives $\partial^l_tS_{k(t)}(r)$ exist and are continuous, implying joint smoothness $S\in C^l(I\times \mathbb{R}^{+})$. As noted in \cite{MiguelCosmologicalST}, it is important that the smoothness of $S$ on $t$ is linked to the smoothness of $k$ itself and not $\sqrt{\pm k}$, since then one sees that, given a function such as $k(t)=-t^2$ ($\sqrt{-k(t)}=|t|$), the function $S_{k(t)}(r)$ is smooth at $t=0$, even though $t\mapsto |t|$ is not.

Let us highlight that, on the one hand, when the function $k$ satisfies $k(t)\leq 0$ for all $t\in I$, then $V^{n+1}_{k}$ extends smoothly to $r=0$, so $V^{n+1}_{k}\cong (I\times \mathbb{R}^n,g_k)$, and the slices $t=cte$ become Cauchy hypersurfaces due to \cite[Theorem 3.1]{MiguelCosmologicalST}. On the other hand, in regions where $k>0$, the metric (\ref{MiguelMetric}) on $I\times\mathbb{R}^n\backslash\{0\}$, extends smoothly both to $r=0$ and to $r_{\infty}=\frac{\pi}{\sqrt{k(t)}}$, making the slices $t=cte$ in such regions isometric to round spheres of intrinsic sectional curvature $k(t)$ \cite[Theorem 4.9]{MiguelCosmologicalST}.\footnote{In this positive curvature case, the second factor in $I\times\mathbb{R}^n$ is seen as the one point decompatification of a topological sphere $S^n$.} As shall be commented below, this second case becomes much more subtle in the case the function $k$ transition from $k>0$ to $k\leq 0$. Before entering into those details, let us introduce the following definition.

\begin{de}[BCCCM space-times]\label{BCCCM}
Given a smooth function $k:I\to\mathbb{R}$, $I\subset \mathbb{R}$ an open interval, we define the \emph{basic cosmological curvature change models} (BCCCM) as the manifolds $V^{n+1}=I\times \mathbb{R}^n\backslash\{0\}\cong P_{+}\times S^{n-1}$, equipped with a Lorentzian metric of the form (\ref{MiguelMetric}), extended naturally and smoothly to $r=0$, and, on any $t=t_0$ slice for which $k(t_0)>0$, also extended naturally to $r_{\infty}=\frac{\pi}{\sqrt{k(t_0)}}$.
\end{de}

We can present the following fundamental results associated to these BCCCM space-times:

\begin{theo}[BCCCM open models - Theorem 3.1 in \cite{MiguelCosmologicalST}]
Let $V^{n+1}_k$ be a BCCCM space-time for $k\leq 0$. Then (\ref{MiguelMetric}) is a smooth Lorentzian metric on the whole $\mathbb{R}^{n+1}=I\times\mathbb{R}^n$ with slices $t=t_0$ isometric to $\mathbb{E}^n$ if $k(t_0)=0$ and to $\mathbb{H}^n_{k(t_0)}$ when $k(t_0)<0$. Moreover, each slice $t = t_0$ is a Cauchy hypersurface.
\end{theo}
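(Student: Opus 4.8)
The plan is to verify the three assertions in turn: (a) that the expression (\ref{MiguelMetric}) extends to a smooth Lorentzian metric on all of $I\times\mathbb{R}^n$, (b) the isometry type of the slices $\{t=t_0\}$, and (c) that each such slice is a Cauchy hypersurface. For (a), the key observation is that for a fixed parameter value $r\mapsto -S_{k_0}(-r)$ solves the same initial value problem (\ref{sn-functions}) as $S_{k_0}$, so $S_{k_0}$ extends to a smooth \emph{odd} function on $\mathbb{R}$; together with the smooth dependence on the parameter $k$ recalled after (\ref{MiguelMetric}) (see \cite[Lemma 2.1]{MiguelCosmologicalST}), the map $(t,r)\mapsto S_{k(t)}(r)$ is jointly smooth on $I\times\mathbb{R}$ and odd in $r$. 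Hence $S_{k(t)}(r)=r\,w(t,r^2)$ for a jointly smooth $w$ with $w(t,0)=1$, so $S_{k(t)}^2(r)=r^2 u(t,r^2)$ with $u=w^2$ smooth and $u(t,0)=1$. Writing the spatial part of (\ref{MiguelMetric}) in Cartesian coordinates $x=r\omega$ on $\mathbb{R}^n$, in which $dr^2+r^2g_{\mathbb{S}^{n-1}}$ is the flat metric $\delta$ and $r\,dr=\sum_i x^i\,dx^i$, and setting $u-1=r^2 v$ with $v$ jointly smooth, one gets
\begin{align*}
dr^2+S_{k(t)}^2(r)\,g_{\mathbb{S}^{n-1}}&=\delta+\big(u(t,|x|^2)-1\big)\big(\delta-dr^2\big)\\
&=\delta+v(t,|x|^2)\Big(|x|^2\delta-\big(\textstyle\sum_i x^i\,dx^i\big)^2\Big),
\end{align*}
whose right-hand side is manifestly smooth on $I\times\mathbb{R}^n$. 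Its spatial part is positive definite; in fact $u\ge 1$ (since $\sinh(ar)\ge ar$ for $a\ge 0$ when $k(t)<0$, and $u\equiv 1$ when $k(t)=0$), so it is $\ge\delta$. Thus $g_k$ extends to a smooth Lorentzian metric on $I\times\mathbb{R}^n$ with $U=\partial_t$ still unit timelike, and we record the \emph{uniform} comparison $g_t\ge\delta$, valid for every $t$, which drives step (c).

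For (b), freezing $t=t_0$ the induced metric is $dr^2+S_{k(t_0)}^2(r)\,g_{\mathbb{S}^{n-1}}$, which is exactly the expression $\gamma_{k(t_0)}$ of the constant-curvature-$k(t_0)$ model in geodesic polar coordinates recalled in the Preliminaries; by step (a) applied with $t$ fixed this is a complete smooth metric on $\mathbb{R}^n$ of constant sectional curvature $k(t_0)\le 0$, so by the classification of complete simply connected space forms (e.g.\ \cite[Chapter 8, Theorem 4.1]{DoCarmo}) the slice $\{t=t_0\}$ is isometric to $\mathbb{E}^n$ if $k(t_0)=0$ and to $\mathbb{H}^n_{k(t_0)}$ if $k(t_0)<0$.

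For (c), note that $t$ is a temporal function since $g(\nabla t,\nabla t)=-1$, so along any future-directed causal curve $\gamma(s)=(t(s),x(s))$ the coordinate $t$ is strictly increasing, and the causal condition $-\dot t^2+g_t(\dot x,\dot x)\le 0$ together with the uniform bound gives $|\dot x|_\delta\le|\dot x|_{g_t}\le\dot t$. If $\gamma$ is future-inextendible but $t(s)\to t^*<\sup I$, then the Euclidean length of $x$ over the remaining parameter range is at most $t^*-t(s_0)<\infty$, so $x(s)$ converges in $\mathbb{R}^n$ and $\gamma$ would extend through the interior point $\big(t^*,\lim_s x(s)\big)$, a contradiction; hence $t(s)\to\sup I$, and symmetrically $t(s)\to\inf I$ in the past. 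Since $t$ is continuous and strictly monotone along $\gamma$, its image is all of $I$, so $\gamma$ meets $\{t=t_0\}$ exactly once; as $\{t=t_0\}$ is also achronal (being a level set of a time function), it is a Cauchy hypersurface, and in particular $(V^{n+1}_k,g_k)$ is globally hyperbolic.

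The main obstacle is step (c): it requires the causality-theoretic ingredients (temporal functions, the convergence of inextendible causal curves with bounded induced length) to be set up carefully, and crucially the comparison with the \emph{complete} Euclidean metric $\delta$ rather than just with the slicewise-complete family $g_t$ — a merely slicewise-complete family of spatial metrics need not force global hyperbolicity, so it is the uniform bound $g_t\ge\delta$ obtained in (a), itself a consequence of $k\le 0$ via $\sinh(ar)\ge ar$, that does the real work. By contrast, the smoothness in (a) is routine once the oddness of $S_{k_0}$ and the joint smoothness of $S$ are in hand, and (b) is immediate from the Preliminaries combined with (a).
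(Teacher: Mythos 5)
Your proposal is correct, and there is nothing in the paper to compare it against line by line: the paper does not prove this statement, it simply imports it as Theorem 3.1 of the cited reference \cite{MiguelCosmologicalST}. Taken on its own terms, your argument is sound and hits the two genuinely essential points: the smooth extension across $r=0$ via oddness of $S_{k_0}$ and the polar-to-Cartesian rewriting $\delta+v(t,|x|^2)\bigl(|x|^2\delta-(\sum_i x^i\,dx^i)^2\bigr)$, and, for the Cauchy property, the \emph{uniform} comparison $g_t\ge\delta$ coming from $k\le 0$ (i.e.\ $S_{k(t)}(r)\ge r$), which is exactly what prevents spatial runaway of causal curves in finite coordinate time; you correctly note that slicewise completeness alone would not suffice. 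Two minor polish points: the factorisation $S_{k(t)}(r)=r\,w(t,r^2)$ is most painlessly justified not by a parametric Whitney-type argument but by the explicit entire series $S_k(r)=r\sum_{m\ge 0}\frac{(-k r^2)^m}{(2m+1)!}$, which is jointly analytic in $(k,r)$ and also gives $w(t,0)=1$ and the joint smoothness of $S$ for free; and in step (c) the derivative estimates should formally be stated for locally Lipschitz (or piecewise smooth) causal curves with almost-everywhere derivatives, which is the standard setting and changes nothing in the length bound. With those routine caveats, your proof is a complete, self-contained substitute for the external citation, and for step (b) one could even bypass the space-form classification by observing that the exponential map of $\mathbb{H}^n_{k(t_0)}$ (or $\mathbb{E}^n$) at a point pulls the model metric back to precisely $dr^2+S^2_{k(t_0)}(r)g_{\mathbb{S}^{n-1}}$ on all of $\mathbb{R}^n$.
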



As was highlighted in \cite{MiguelCosmologicalST}, the cases where the spatial curvature function $k(t)$ changes sign and is positive somewhere work in the same way from the local viewpoint but are quite subtle from the global one. This can be partially understood by noticing that with the extension of the metric (\ref{MiguelMetric}) both to $r=0$ and to $r_{\infty}=\frac{\pi}{\sqrt{k(t)}}$ in such regions, one sees that in these positive curvature regions $V^{n+1}_{k}\cong I\times \mathbb{S}_{k(t)}^{n}$. Nevertheless, on any region where $k(t)\leq 0$ the extension to $r_{\infty}$ is clearly impossible, and the $t=cte$ hypersurfaces are topologically $\mathbb{R}^n$ and gometrically isometric to either $\mathbb{E}^n$ ($k=0$) or $\mathbb{H}_{k(t)}^n$ ($k(t)<0$). We therefore see that, if the function $k$ is somewhere positive and then changes sign, the \emph{intrinsically isotropic observers} defined by the global time-like vector field $\partial_t$ will experiment a topological change on their common simultaneity spaces. To highlight the subtleties this imposes, it implies that such $t=cte$ simultaneity spaces cannot all be Cauchy hypersurfaces \cite{GerochSplittingThm}. To deal with this case, in \cite{MiguelCosmologicalST} the author introduces a variation of the BCCCM space-times introduced in Definition \ref{BCCCM}, which are defined as \emph{basic cosmological topological curvature change models} (BCTCCM),\footnote{See \cite[Definition 4.8]{MiguelCosmologicalST}.} and are constructed to accommodate a curvature change from $k(t)>0$ for $t<0$ to $k(t)=0$ for $t\geq 0$. In particular they obey the following properties:

\begin{theo}[Theorem 4.9 in \cite{MiguelCosmologicalST}]
Any BCTCCM is a smooth spacetime satisfying:
\begin{enumerate}
\item All the slices $t = t_0$ have constant curvature isometric to the sphere of extrinsic radius $\frac{\pi^2}{k(t)}$ if $t_0 < 0$ and to $\mathbb{E}^n$ otherwise;
\item It is globally hyperbolic, with Cauchy hypersurfaces homeomorphic to $S^n$. In particular, the slices $t=t_0<0$ are Cauchy.
\end{enumerate}
\end{theo}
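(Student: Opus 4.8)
The plan is to establish the three assertions in turn — that $g_k$ is a genuine smooth Lorentzian metric on the manifold prescribed in \cite[Definition 4.8]{MiguelCosmologicalST}, that the slices are as claimed, and that the resulting space-time is globally hyperbolic with the $t_0<0$ slices Cauchy — by reducing everything to elementary properties of the profile $S_{k(t)}(r)$ solving (\ref{sn-functions}) and to the causal structure of metrics of the form $-dt^2+g_t$.

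For smoothness: away from $r=0$ and from the \emph{south-pole locus} $\{(t,r): k(t)>0,\ r=\pi/\sqrt{k(t)}\}$, the claim is immediate from the smoothness of $(t,r)\mapsto S_{k(t)}(r)$, \cite[Lemma 2.1]{MiguelCosmologicalST} (recalling that the smooth datum is $k$ itself, not $\sqrt{\pm k}$). Along $r=0$ I would run, for each fixed $t$, the classical polar-coordinate criterion using $S_{k(t)}(0)=0$ and $S_{k(t)}'(0)=1$; along the south-pole locus — present only where $k>0$, i.e. for $t<0$ — I would use instead $S_{k(t)}(\pi/\sqrt{k(t)})=0$ and $S_{k(t)}'(\pi/\sqrt{k(t)})=-1$. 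The genuinely new point, relative to the BCCCM case, is the transition at $t=0$: since $k$ is smooth and vanishes identically on $[0,\infty)$, it vanishes to infinite order at $t=0^-$, so $\pi/\sqrt{k(t)}\to\infty$ and the south-pole locus escapes to $r=\infty$, leaving the full range $r\in[0,\infty)$ for $t\ge0$; one must then check that gluing the collapsed $\mathbb{S}^{n-1}$ factors in this way yields a smooth manifold — homeomorphic to $\mathbb{S}^n\times\mathbb{R}$ with the would-be south-pole ray over $[0,\infty)$ deleted — carrying $g_k$ as a smooth Lorentzian metric.

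The slices are then immediate: for fixed $t_0$ the induced metric on $\{t=t_0\}$ is $dr^2+S_{k(t_0)}^2(r)g_{\mathbb{S}^{n-1}}$, which by the definition of $S_k$ is exactly the constant-curvature-$k(t_0)$ space form in geodesic polar coordinates — for $t_0<0$ the $r$-range is $[0,\pi/\sqrt{k(t_0)}]$, both ends being smooth poles, so the completed slice is the round $n$-sphere of constant curvature $k(t_0)$, while for $t_0\ge0$ it is $dr^2+r^2g_{\mathbb{S}^{n-1}}$ on $[0,\infty)$, i.e. $\mathbb{E}^n$. For global hyperbolicity I would first note that $g(\nabla t,\nabla t)\equiv-1$, so $t$ is a smooth time function and each $\Sigma_{t_0}:=\{t=t_0\}$ is spacelike, hence achronal; moreover any causal curve has $\dot t\neq0$ (its velocity would otherwise be spacelike), so $t$ is strictly monotone along it and the curve reparametrizes by $t$. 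Fixing $t_0<0$, the goal is to show $\Sigma_{t_0}$ is Cauchy: given an inextendible causal curve, oriented to the future and written $\gamma(t)=(t,x(t))$ over a maximal $t$-interval $(t_-,t_+)\subseteq I$ with $|\dot x(t)|_{g_t}\le1$, it suffices to show $t_\pm$ are the endpoints of $I$, for then $t_0\in(t_-,t_+)$ ($I$ being open) and $\gamma$ meets $\Sigma_{t_0}$, exactly once by monotonicity. If $t_+<\sup I$, then along $\gamma$ one has $r(t)\le r(t_1)+(t-t_1)$, so $\gamma$ stays in a set $\{r\le R\}$ bounded away from the escaping south-pole locus and, near $t_+$, relatively compact with a metric uniformly equivalent to a fixed complete one (the round-sphere metric of a reference level if $t_+\le0$, the flat Euclidean metric if $t_+>0$, where it is essential that for $t\ge0$ the spatial metric is \emph{exactly} Euclidean and hence non-shrinking); thus $x(t)$ has finite length and converges as $t\to t_+$, making $\gamma$ extendible — a contradiction. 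The same argument at $t_-$ gives $t_-=\inf I$. Hence $\Sigma_{t_0}$ is Cauchy, $(V,g)$ is globally hyperbolic, and by Geroch's theorem \cite{GerochSplittingThm} every Cauchy hypersurface is homeomorphic to $\Sigma_{t_0}\cong\mathbb{S}^n$; in particular each slice $t=t_0<0$ is Cauchy.

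I expect the main obstacle to be precisely the smoothness-and-manifold claim across $t=0$: everywhere else the argument is either a routine polar-coordinate check or the standard causal-curve analysis for a metric $-dt^2+g_t$, whereas making rigorous the sense in which the compact round slices degenerate smoothly into complete non-compact Euclidean ones — and confirming that the resulting puncture is causally invisible, so that no inextendible causal curve can end there — is where the real work of \cite[Definition 4.8 and Theorem 4.9]{MiguelCosmologicalST} resides.
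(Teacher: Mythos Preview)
This theorem is not proved in the paper under review: it is quoted verbatim as Theorem~4.9 of \cite{MiguelCosmologicalST} and stated without proof, serving only as background on the BCTCCM family. There is therefore no ``paper's own proof'' against which to compare your proposal.

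That said, your outline is a reasonable reconstruction of what such a proof must contain, and you have correctly located the genuine difficulty. The slice identification and the Cauchy argument are standard: the bound $|\dot r|\le 1$ from the causal condition, the trapping of $\gamma$ in a relatively compact region when $t_+<\sup I$, and the extraction of a limit point all follow the usual template for metrics of the form $-dt^2+g_t$. Your polar-coordinate treatment at $r=0$ and at the antipodal locus $r=\pi/\sqrt{k(t)}$ is likewise the expected one. The substantive issue, as you say, is the smooth-manifold structure across $t=0$, where the compact spherical slices open up into $\mathbb{E}^n$; this hinges on the precise construction in \cite[Definition~4.8]{MiguelCosmologicalST}, which the present paper does not reproduce, so a full verification cannot be carried out from the information available here. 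One small point worth tightening in your causal argument: you should also verify that an inextendible causal curve cannot terminate at the deleted south-pole ray over $t\ge 0$ --- you allude to this in your final paragraph, but the bound $r(t)\le r(t_1)+|t-t_1|$ alone does not immediately rule it out, since that ray sits at $r=\infty$ only in the limit $t\to 0^{-}$ and is absent (rather than at infinity) for $t\ge 0$.
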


Although these BCTCCM are not exactly equal to the models we have described in Definition \ref{BCCCM}, they are isometric to such models in a neighbourhood of $t\geq 0$, with a curvature function $k:(-\epsilon,\infty)\to \mathbb{R}$, $\epsilon>0$, which changes sign at $t=0$ from positive to zero curvature.\footnote{For details, see equations (15)-(17) in \cite{MiguelCosmologicalST}, as well as equation (21) in the proof of Theorem 4.9 therein.} Let us furthermore notice that in \cite[Section 4.3]{MiguelCosmologicalST} it is explained how the above mentioned results for BCTCCM and the open model can be combined to produce a transition from positive to negative curvature, and thus accommodate curvature functions $k:I\to\mathbb{R}$ which change sign from $k<0$, say for $t<0$, to $k>0$, for $t>0$. These models end up being again globally hyperbolic with compact Cauchy surfaces, and are again isometric to (\ref{MiguelMetric}) in a neighbourhood of $t\geq 0$. 

With all of the above, we see that the family of BCCCM introduced in Definition \ref{BCCCM} provides us with a well-behaved (globally hyperbolic with complete Cauchy surfaces) family of cosmological space-times, where the simultaneity spaces associated with the observers defined by the integral curves of $\partial_t$ are \emph{intrinsically maximally symmetric}.\footnote{See Section \ref{DefsClassifications} for further details on what is meant by intrinsically maximally symmetric.} After classifying different notions for isotropy of cosmological space-times in Section \ref{DefsClassifications}, the \emph{non-trivial} BCCCM space-times will be shown to be non-isometric (even locally) to any FLRW space-time in Section \ref{RigiditySection1}.

\subsection{Some geometric conventions}

In order to analyse the curvature properties of the BCCCM space-times introduced in the previous section, we shall interpret our space-time as the warped product $P_{+}\times_{S_{k(t)}(r)}\mathbb{S}^{n-1}$, where $P_{+}$ denotes the half plane with coordinates $(t,r)$ in $\mathbb{R}^2$ defined by the condition $r>0$ and furnished with the flat metric $g_{P}\doteq -dt^2+dr^2$, while $\mathbb{S}^{n-1}$ denotes the round unit sphere. Let us also recall that, in this context, we refer to vector fields on the warped product which are tangent to the \emph{base} (in our case $P_{+}$) as \emph{horizontal}, and to those tangent to the \emph{fibre} (in our case $\mathbb{S}^{n-1}$) as \emph{vertical}.\footnote{For further details, see \cite[Chapter 7]{ONeillBook}.} Also, to avoid any ambiguity, let us make explicit the curvature conventions we follow in this text, where, given an $(n+1)$-dimensional semi-Riemannian manifold $(M^{n+1},g)$ and denoting by $\nabla$ its associated Riemannian connection, the curvature tensor is defined as:
\begin{align*}
R(X,Y)Z=\nabla_X\nabla_YZ - \nabla_Y\nabla_XZ - \nabla_{[X,Y]}Z, \text{ for all } X,Y,Z\in \Gamma(TM).
\end{align*}
Also, given an arbitrary coordinate system $\{x^i\}$ on $M$, we label its components as follows:
\begin{align*}
\begin{split}
R^{i}_{jkl}&=dx^{i}(R(\partial_k,\partial_l)\partial_j)=\partial_k\Gamma^{i}_{lj}-\partial_l\Gamma^{i}_{kj} +  \Gamma^{i}_{ku}\Gamma^u_{jl}- \Gamma^{i}_{lu}\Gamma^u_{jk}.\\
\end{split}
\end{align*}
From this we get the Ricci tensor from the following contraction:
\begin{align*}
\mathrm{Ric}_{ij}\doteq R^l_{ilj}.
\end{align*}
We shall also consider the $(0,4)$-curvature tensor, given by\footnote{Due to the symmetries of the curvature tensor, our $(0,4)$-curvature tensor agrees with the one from \cite{Besse}, and thus our Weyl tensor also agrees with his. Therefore one can compare expressions with \cite[Definition 1.117]{Besse}.}
\begin{align*}
R(V,X,Y,Z)&=g(R(Y,Z)X,V).
\end{align*}
With these conventions, one has that the Weyl tensor is given by
\begin{align}\label{WeylTensor}
\begin{split}
W(V,X,Y,Z)&=g(R(Y,Z)X,V) - \frac{1}{n-1}\mathrm{Ric}\KN g (V,X,Y,Z) + \frac{R_{g}}{2n(n-1)}g\KN g(V,X,Y,Z),
\end{split}
\end{align}
where $\KN:\Gamma(S_2M)\times \Gamma(S_2M)\mapsto \Gamma(T^0_4M)$, $S_2M$ the bundle of symmetric $(0,2)$-tensor fields, is the Kulkarni-Nomizu product defined by
\begin{align}
h\KN k(V,X,Y,Z)\doteq h(V,Y)k(X,Z) + h(X,Z)k(V,Y) - h(V,Z)k(X,Y) - h(X,Y)k(V,Z),
\end{align}
for all $h,k\in S_2M$.
Notice that (\ref{WeylTensor}) amounts to the following coordinate expression
\begin{align}
\begin{split}
W_{\rho\mu\lambda\nu}&=R_{\rho\mu\lambda\nu} - \frac{1}{n-1}(R_{\rho\lambda}g_{\mu\nu} + R_{\mu\nu}g_{\rho\lambda} - R_{\rho\nu}g_{\mu\lambda} - R_{\mu\lambda}g_{\rho\nu}) - \frac{R_{g}}{n(n-1)}(g_{\rho\nu}g_{\mu\lambda} - g_{\rho\lambda}g_{\mu\nu}).
\end{split}
\end{align}

In this context, let us also recall the following definition (see, for instance, \cite[Definition 1.164]{Besse}).
\begin{de}\label{ConformallFlatness}
A semi-Riemannian manifold $(V,g)$ is said to be conformally flat if, for any $x\in V$, there exists a neighbourhood $\mathcal{U}$ of $x$ and a (smooth) function $f$ on $\mathcal{U}$, such that $(\mathcal{U},e^{2f}g)$ is flat.
\end{de}

Finally, let us recall the following well-known result:\footnote{See, for instance, \cite[Theorem 1.165]{Besse} and \cite{Eis} for more details.}
\begin{theo}\label{ConformallFlatnessTHM}
An $n$-dimensional semi-Riemannian manifold $(M^n,g)$, $n\geq 4$, is conformally flat if and only if its Weyl tensor vanishes.
\end{theo}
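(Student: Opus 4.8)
\emph{Proof strategy.} My plan is to handle the two implications separately. The direction ``conformally flat $\Rightarrow W=0$'' I would obtain directly from the conformal invariance of the Weyl tensor. The converse is the substantive one: the plan there is to reduce the problem to the local solvability of a first-order overdetermined PDE for the conformal factor, whose integrability obstruction is the Cotton tensor, and the hypothesis $n\geq 4$ will enter precisely at the step where the vanishing of the Cotton tensor is deduced from $W=0$ via the contracted second Bianchi identity.

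For the first implication I would recall that under a conformal change $\tilde g=e^{2f}g$ the $(0,4)$-curvature tensor changes only by an overall factor $e^{2f}$ and by correction terms of the shape $(\,\cdot\,)\KN\tilde g$ assembled from $\mathrm{Hess}\,f$, $df\otimes df$ and $|df|_g^2\,g$. Since, just as in (\ref{WeylTensor}), the Weyl tensor is the component of $R$ which is trace-free for the Kulkarni--Nomizu decomposition, all such corrections are annihilated and one gets $\tilde W=e^{2f}W$; in particular $\tilde W$ and $W$ vanish simultaneously. Hence if each point of $M^n$ has a neighbourhood on which $e^{2f}g$ is flat, then $W$ vanishes on each such neighbourhood, hence on all of $M^n$.

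For the converse, suppose $W\equiv 0$. I would use the decomposition $R=W+P\KN g$, where $P\doteq \frac{1}{n-2}\big(\mathrm{Ric}-\frac{R_{g}}{2(n-1)}\,g\big)$ is the Schouten tensor of the $n$-dimensional metric $g$ (the $n$-dimensional form of (\ref{WeylTensor})), so the hypothesis becomes $R=P\KN g$. Since $h\mapsto h\KN g$ is injective on symmetric $2$-tensors for $n\geq 3$, and since the Weyl part is conformally invariant, $\tilde g=e^{2f}g$ has vanishing curvature tensor --- i.e. is flat --- if and only if its Schouten tensor $\tilde P$ vanishes, equivalently if and only if the $1$-form $\omega\doteq df$ solves the Riccati-type system
\begin{align*}
\nabla\omega = \omega\otimes\omega - \tfrac12|\omega|_g^2\,g + P
\end{align*}
(signs modulo the curvature conventions fixed above). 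I would then prolong this system: its right-hand side is symmetric, so the equation already forces $d\omega=0$ and hence $\omega=df$ locally; differentiating once more and skew-symmetrising eliminates the second derivatives of $\omega$ and leaves a Frobenius integrability condition, which a direct computation identifies with the vanishing of the Cotton tensor $C_{ijk}\doteq\nabla_iP_{jk}-\nabla_jP_{ik}$. This is where $n\geq 4$ enters: the contracted second Bianchi identity expresses the divergence of $W$ as $(n-3)$ times the Cotton tensor (up to sign and index order), so $W\equiv 0$ forces $C\equiv 0$. The system is then integrable; through each point there is a local solution $\omega=df$, and for this $f$ one has $\tilde W=0$ and $\tilde P=0$, whence $\tilde R=\tilde W+\tilde P\KN\tilde g=0$, so $(\mathcal{U},e^{2f}g)$ is flat and $(M^n,g)$ is conformally flat in the sense of Definition \ref{ConformallFlatness}.

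\emph{Main obstacle.} The delicate part is the prolongation/Frobenius step: checking that the system for $\omega$ genuinely closes up with no further hidden integrability conditions, and that the resulting obstruction matches the Cotton tensor exactly. It is also here that one sees why $n=3$ must be excluded --- the Weyl tensor is identically zero in three dimensions, hence cannot control the Cotton tensor, which is then an independent obstruction. In the semi-Riemannian setting the computations are formally identical to the Riemannian ones, the signature of $g$ playing no role; complete details can be found in \cite[Theorem 1.165]{Besse} and \cite{Eis}.
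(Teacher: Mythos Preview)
Your sketch is correct and is exactly the classical argument (conformal invariance of $W$ for one direction; Schouten--Cotton integrability for the other, with the factor $(n-3)$ from the contracted Bianchi identity pinpointing why $n\geq 4$ is needed). The paper, however, does not supply a proof of this theorem at all: it is stated as a well-known result and the reader is simply referred to \cite[Theorem 1.165]{Besse} and \cite{Eis} --- the same references you yourself cite at the end of your proposal --- so there is nothing to compare beyond noting that your outline is precisely the proof found in those sources.
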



\section{A classification of isotropy definitions}\label{DefsClassifications}

In this section we shall start contrasting Definition \ref{WaldIsotropy}, which we have taken as the default characterisation of isotropic cosmological space-times, with other definitions well-known from the literature, all of which have been claimed to lead to the rigidity properties $(iii)$-$(iv)$ of Theorem \ref{ONthm}. We shall start this discussion introducing two SI definitions, which appeal to the existence of symmetries on each $M^n_t$ which capture the essence of \emph{space-isotropy}.

\begin{de}[SI cosmological space-times - Type I]\label{CBIsotropy}
A cosmological space-time is said to be isotropic if for every point $q\in V^{n+1}$, $q=(t,p)$ with $p=\pi(q)$, the (intrinsic) sectional curvature $k_{p}(P)$ of $(M^n_t,g_t)$ at $p$ is independent of the plane $P\subset T_pM^n_t$.
\end{de}

The above definition is actually extracted from the discussion around equation (2.1) in \cite[Chapter V, Section 2]{CB-Book}. More precisely, we quote:\footnote{See \cite[Page 107]{CB-Book}.}${}^{,}$\footnote{In the quotation below, the superscripts $3$ and $4$ are used to indicate that the corresponding metric is defined on a manifold of the corresponding dimensionality. Just as we do when writing the metrics $g_t$ in Definition \ref{CosmoligcalSTDef}, there is a slight abuse of notation for the ${}^{3}\!g$ parts.}
\begin{quote}
A cosmos satisfying the cosmological principle is a Lorentzian manifold $(\mathbb{R}\times M,{}^{4}\!g)$ with a metric of type ${}^{4}\!g=-dt^2+{}^{3}\!g$, such that, for each $t$, the Riemannian manifold $(M,{}^{3}\!g)$, that is \emph{the universe}, \emph{is isotropic and homogeneous}.
\end{quote}
Within Definition \ref{CBIsotropy} we are neglecting the reference to homogeneity since it is not relevant for our discussion.\footnote{In contrast to the definition of isotropy, the definition of homogeneity is (more) uniform through standard literature. In particular, a Riemannian manifold is said to be (locally) homogeneous if it admits a transitive group of (local) isometries.} Putting together the above quote with the definition of isotropy of a Riemannian manifold given in \cite[Chapter V, Definition 3.1]{CB-Book}, one obtains Definition \ref{CBIsotropy} above, and also one finds that the classic Schur's Lemma implies that if $(M,{}^{3}\!g)$ is isotropic around every point, then it is a space of constant sectional curvature and thus (locally) homogeneous.\footnote{See, for instance, \cite[Chapter V, Theorem 3.4 and Theorem 3.7]{CB-Book}.} The additional hypothesis of simply connectedness would also imply that isotropy at every point implies global homogeneity.

Although a priori somewhat different, we also classify the definition of isotropy provided in \cite{CarrollBook} within the SI category. In this case, the author presents the following \emph{definition} for \emph{isotropic cosmological space-times}:\footnote{See \cite[Page 329]{CarrollBook}.}
\begin{quote}
To describe the real world, we are forced to give up the ``perfect" Copernican principle, which implies symmetry throughout space and time, and postulate something more forgiving. It turns out to be straightforward, and consistent with observation, to posit that the universe is spatially homogeneous and isotropic, but evolving in time. In general relativity this translates into the statement that the 
universe can be foliated into spacelike slices such that each three-dimensional slice is maximally symmetric. We therefore consider our spacetime to be $\mathbb{R}\times \Sigma$, where $\Sigma$ is maximally symmetric.
\end{quote}

One should put together the above quote with the definition of \emph{maximally symmetric space}, which in this reference is given in \cite[Chapter 3, Section 3.9]{CarrollBook}.\footnote{One should also put together the above quotation from \cite{CarrollBook} with the definition of isotropy provided in the same reference in Chapter 8, Section 8.1, page 323. Putting all these things together, it seems clear to us that the intended definition of isotropy makes reference to the existence of isometries of $(M_t,g_t)$ for each $t$, in contrast to Definition \ref{WaldIsotropy}.} In this context, a $d$-dimensional (semi-)Riemannian manifold $(M^d,g)$ is said to be maximally symmetric if it admits in a neighbourhood of any point the \emph{maximum number of linearly independent Killing vector fields that is possible for a $d$-dimensional space.} For a rigorous proof that such maximum number is $N_{max}(d)=\frac{d(d+1)}{2}$ see \cite[Chapter 9, Lemma 28]{ONeillBook}.\footnote{In \cite{Weinberg} this result is used, although it seems to be established only for analytic metrics and their corresponding analytic Killing vector fields.} In this case, disentangling homogeneity from isotropy is not as direct as in the previous one. Nevertheless, in \cite[Chapter 3, Section 3.9]{CarrollBook}, the author is clear in specifying that this maximum number of local Killing fields consist on a family of $d$-\emph{translational} Killing fields plus a family of $\frac{d(d-1)}{2}$ \emph{rotational} Killing fields, where the latter produce isometries fixing a given point $p\in M^n$ and rotating any given unit vector $v\in T_pM$ into another chosen unit vector $w\in T_pM$. Once again, putting this together with the discussion in \cite[Chapter 8, Section 8.1, page 323]{CarrollBook}, it seems clear to us that this second family of $\frac{d(d-1)}{2}$ \emph{rotational} Killing fields is what captures the intended contribution of isotropy within the discussion of FLRW space-times in this reference.\footnote{Let us also draw the reader's attention to a similar discussion presented in the classic textbook \cite[Chapter 13]{Weinberg}. Part of our presentation below overlaps with the discussion in this reference, although through quite different techniques and lines of argument. In particular, our intention is to present clear-cut geometric results, with explicit hypotheses, which validate certain intuitions around how these so-called rotational Killing fields do represent rotations on a general setting. Our presentation is tailored for our purposes and we believe contributes to bridging small gaps in the existing literature.} To make all this more mathematically clear, let us highlight the following result, whose proof can be consulted in \cite[Chapter 9, Lemma 28]{ONeillBook}.
\begin{lem}\label{MaximumKilling}
Let $(M^d,g)$ be a semi-Riemannian manifold. We denote by $\iota(M)$ the Lie algebra of Killing vector fields on $(M^d,g)$. Then, given $p\in M$, the linear map $E:\iota(M)\to T_pM\times\mathfrak{o}(T_pM)$, given by
\begin{align}
E(X)=(X_p,\nabla X_p)
\end{align}
is injective, where $\mathfrak{o}(T_pM)$ denotes the set of antisymmetric endomorphisms of $T_pM$. In particular $\mathrm{dim}(\iota(M))\leq N_{max}(d)$.
\end{lem}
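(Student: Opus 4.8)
The plan is to show that $E$ is a well-defined injective linear map; once this is done the bound is immediate, since $\dim\big(T_pM\times\mathfrak{o}(T_pM)\big)=d+\tfrac{d(d-1)}{2}=\tfrac{d(d+1)}{2}=N_{max}(d)$, so $\dim\iota(M)\le N_{max}(d)$. Linearity of $E$ is clear. To see that $E$ really takes values in $T_pM\times\mathfrak{o}(T_pM)$, observe that the defining condition $\mathcal{L}_X g=0$, rewritten through the Levi-Civita connection, is the Killing equation $g(\nabla_Y X,Z)+g(\nabla_Z X,Y)=0$ for all $Y,Z$; evaluated at $p$ this says precisely that the endomorphism $Y\mapsto\nabla_Y X|_p$ is skew-adjoint with respect to $g_p$, i.e. $\nabla X_p\in\mathfrak{o}(T_pM)$.

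The substance of the lemma is injectivity, and the key input is the classical second-order identity obeyed by any Killing field: writing $A_X\doteq\nabla X$ for the $(1,1)$-tensor $Y\mapsto\nabla_Y X$, one has a curvature identity of the schematic form $\nabla_Y A_X = R(\cdot,X)\cdot$, i.e. $\nabla^2 X$ is algebraically determined by the curvature tensor contracted with $X$ (the precise placement of arguments and sign being fixed once and for all by the curvature conventions adopted in this paper). I would derive this by combining the Killing equation with the Ricci (commutation) identity for second covariant derivatives of $X$, cyclically permuting the three free indices, adding two of the permuted identities and subtracting the third, and using the skew-symmetry $\nabla_a X_b=-\nabla_b X_a$ together with the first Bianchi identity; this expresses $\nabla_a\nabla_b X_c$ purely in terms of $R$ and $X$. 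This is the one genuinely computational step and the point I expect to be the main obstacle — not conceptually deep, but requiring care with the index manipulations and signs.

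Granting that identity, the pair $(X,A_X)$ satisfies, along any smooth curve $\gamma$, a closed linear first-order system: $\nabla_{\dot\gamma}X=A_X(\dot\gamma)$ and $\nabla_{\dot\gamma}A_X$ is given by the curvature expression above applied to $X$ and $\dot\gamma$. Hence, by uniqueness for linear ordinary differential equations along geodesics, if $E(X)=0$, that is $X_p=0$ and $\nabla X_p=0$, then $(X,A_X)$ vanishes identically along every geodesic issuing from $p$, and therefore on a whole normal neighbourhood of $p$. Consequently the set $\{q\in M : X_q=0 \text{ and } \nabla X_q=0\}$ is open; it is trivially closed; and since $M$ is connected it equals $M$, so $X\equiv 0$. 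Thus $\ker E=\{0\}$, $E$ is injective, and the dimension bound follows.
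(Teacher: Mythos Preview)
Your proposal is correct and follows the standard argument. Note that the paper does not actually prove this lemma; it merely cites \cite[Chapter 9, Lemma 28]{ONeillBook} for the proof. The argument you outline --- deriving the second-order Killing identity $\nabla_Y(\nabla X)=R(\,\cdot\,,X)Y$ (up to sign conventions), recasting $(X,\nabla X)$ as the solution of a linear first-order ODE system along curves, and invoking uniqueness to conclude that $E(X)=0$ forces $X\equiv 0$ on a connected $M$ --- is precisely the proof given in O'Neill's book, so your approach coincides with the one the paper defers to. The only small point to flag is that connectedness of $M$ is being used tacitly in the open--closed argument at the end; this is implicit in the paper's setting but worth stating explicitly if you write the proof out in full.
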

We shall denote by $\iota_p(M)\doteq\{X\in \iota(M) \: :\: X_p=0 \text{ for a given } p\in M\}$. In this case, the restriction
\begin{align*}
E_p\doteq E|_{\iota_p(M)}:\iota_p(M)\to \mathfrak{o}(T_pM)
\end{align*} 
is a linear injective map by the above lemma, such that, for each element $X\in \iota_p(M)$, $E_p(X)=\nabla X_p\in \mathfrak{o}(T_pM)$. To make perfect sense of the notion of isotropy described above from the quote of \cite{CarrollBook} disentangled from homogeneity, one would like to prove that, given a Riemannian manifold $(M^n,g)$, if $\mathrm{dim}(\iota_p(M^n))=\frac{n(n-1)}{2}$, then the collection of the corresponding 1-parameter groups of isometries $\varphi^X_s$, $X\in \iota_p(M)$, act as arbitrary point wise rotations on $T_pM$. That is, given an arbitrary element $A\in SO(T_pM)$, there is a Killing field $X\in \iota_p(M)$ such that $(d\varphi^X_s)_p= A$. That this holds is the content of the following result:
\begin{lem}\label{SIuniqueness}
Let $(M^n,g)$ be a Riemannian manifold, $p\in M$, $\mathcal{U}_p$ a neighbourhood of $p$ and assume that $\mathrm{dim}(\iota_p(\mathcal{U}_p))=\frac{n(n-1)}{2}$. Let $\varphi^X_s$ denote the associated flow of $X\in \iota_p(\mathcal{U}_p)$. Then, the set 
\begin{align}\label{SO-Surjectivity}
\{(d\varphi^X_s)_p\: :\: X\in \iota(\mathcal{U}_p) \}=SO(T_pM).
\end{align}
In particular, $E_p$ is an isomorphism.
\end{lem}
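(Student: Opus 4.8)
The strategy is to leverage Lemma \ref{MaximumKilling}, which already gives that $E_p:\iota_p(\mathcal{U}_p)\to\mathfrak{o}(T_pM)$ is injective, together with the dimension hypothesis $\dim(\iota_p(\mathcal{U}_p))=\frac{n(n-1)}{2}=\dim\mathfrak{o}(T_pM)$. Since an injective linear map between vector spaces of equal finite dimension is an isomorphism, $E_p$ is surjective; hence every antisymmetric endomorphism $B\in\mathfrak{o}(T_pM)$ arises as $\nabla X_p$ for some Killing field $X\in\iota_p(\mathcal{U}_p)$. This already disposes of the last sentence of the statement, so the real content is translating surjectivity of $E_p$ at the Lie-algebra level into surjectivity onto $SO(T_pM)$ at the group level, i.e. establishing \eqref{SO-Surjectivity}.

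The key observation is the identity governing how a Killing flow acts on the tangent space at a fixed point: if $X\in\iota_p$ (so $X_p=0$) with flow $\varphi^X_s$, then $\varphi^X_s(p)=p$ for all $s$, and the differential $A(s)\doteq(d\varphi^X_s)_p:T_pM\to T_pM$ satisfies the linear ODE $\frac{d}{ds}A(s) = (\nabla X_p)\circ A(s)$ with $A(0)=\mathrm{Id}$; consequently $(d\varphi^X_s)_p = \exp\!\big(s\,\nabla X_p\big)$, the matrix exponential of the antisymmetric endomorphism $s\,\nabla X_p$. (This is the standard fact that $\nabla X$ is the linearization of the flow; one derives it by differentiating the Killing relation, or cites it from O'Neill.) Therefore the set on the left of \eqref{SO-Surjectivity} contains $\{\exp(s B): s\in\mathbb{R},\ B\in\mathrm{image}(E_p)\}=\{\exp(B):B\in\mathfrak{o}(T_pM)\}$, using surjectivity of $E_p$ and absorbing $s$ into $B$. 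Finally, the exponential map $\exp:\mathfrak{o}(T_pM)\to SO(T_pM)$ is surjective because $SO(T_pM)\cong SO(n)$ is a compact connected Lie group; hence every element of $SO(T_pM)$ is realized, giving the inclusion $\supseteq$ in \eqref{SO-Surjectivity}. The reverse inclusion $\subseteq$ is immediate since each $\varphi^X_s$ is an isometry fixing $p$, so $(d\varphi^X_s)_p\in O(T_pM)$, and it lies in the identity component $SO(T_pM)$ because it is connected to $\mathrm{Id}$ through the path $s'\mapsto(d\varphi^X_{s'})_p$.

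\textbf{Main obstacle.} The dimension-count and the linear-algebra step are routine; the one point requiring care is the passage from infinitesimal generators to group elements, namely (a) justifying $(d\varphi^X_s)_p=\exp(s\nabla X_p)$ rigorously — which needs that $X_p=0$ guarantees $p$ is a fixed point of the whole flow so that differentials compose as a one-parameter subgroup of $GL(T_pM)$ — and (b) invoking surjectivity of the exponential on $SO(n)$, which relies on compactness and connectedness. A secondary subtlety worth a remark is that completeness of the Killing fields is not assumed (the flows $\varphi^X_s$ may only be locally defined), but since $p$ is a fixed point the flow is defined for all $s\in\mathbb{R}$ in an arbitrarily small neighbourhood of $p$ by standard ODE theory, so the one-parameter-group identity $\varphi^X_{s+s'}=\varphi^X_s\circ\varphi^X_{s'}$ and hence the exponential formula hold for all $s$; I would include a sentence to that effect.
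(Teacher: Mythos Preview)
Your proof is correct and follows essentially the same approach as the paper: both establish that $E_p$ is an isomorphism by the dimension count together with Lemma~\ref{MaximumKilling}, derive the formula $(d\varphi^X_s)_p=\exp(s\,E_p(X))$ (you via the linear ODE, the paper via the abstract one-parameter-subgroup characterisation combined with Proposition~\ref{RotationalKillingsProp}), and then invoke surjectivity of $\exp:\mathfrak{o}(n)\to SO(n)$. Your additional remark on the flow being defined for all $s$ near the fixed point is a welcome clarification that the paper leaves implicit.
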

\begin{proof}
Since $X_p=0$, by uniqueness of integral curves, we see that $\varphi^X_s(p)=p\: \forall \: s$, and thus $(d\varphi^X_s)_p:T_pM\to T_pM$ is a 1-parameter group of linear isometries of $(T_pM,g_p)$. Picking an orthonormal basis $\{e_i\}_{i=1}^n$ for $T_pM$ to represent $(d\varphi^X_s)_p$, we see that the matrix $((d\varphi^X_s)_p)^j_i=g_p(e_j,(d\varphi^X_s)_p(e_i))$ stands for a continuous 1-parameter group of linear isometries of Euclidean $n$-dimensional space which belongs to the connected component of the identity. That is, these are elements of $SO(n)$. We still need to see that every element in $SO(n)$ can be realised in this manner, but for that one can first recall that any smooth one parameter subgroup $\gamma (s)$ of a Lie group $G$ must be of the from $\exp(s\gamma'(0))$, where $\exp:\mathfrak{g}\to G$ denotes the exponential map associated to the Lie group $G$ and $\mathfrak{g}$ its Lie algebra. This therefore implies that $(d\varphi^X_s)_p=\exp(s\gamma'(0))$, for some $\gamma'(0)\in \mathfrak{o}(T_pM)$. Also, using Proposition \ref{RotationalKillingsProp} established below, we see that 
\begin{align*}
\frac{d(d\varphi^X_s)_p(v)}{ds}\Big\vert_{s=0}&=D_vX \text{ for any } v\in T_pM,
\end{align*}
where in the last equality $D:\Gamma(TM)\times\Gamma(TM)\to \Gamma(TM)$ stands for the Riemannian connection on $(M^n,g)$. Thus, referring to the linear map on $E_p:T_pM\to T_pM$ which acts by $v\mapsto D_vX$, we find
\begin{align}
\frac{d(d\varphi^X_s)_p}{ds}\Big\vert_{s=0}&=E_p(X)=\gamma'(0),
\end{align}
and hence $(d\varphi^X_s)_p=\exp(sE_p(X))$.
By hypothesis, there are $\frac{n(n-1)}{2}=\mathrm{dim}(\mathfrak{o}(T_pM))$ linearly independent Killing vectors in $\iota_p(M)$, which implies that the map $E_{p}:\iota_p(M)\to \mathfrak{o}(T_pM)$ defined in Lemma \ref{MaximumKilling} is an isomorphism. Thus, given $\sigma_p\in \mathfrak{o}(T_pM)$, there is some $X\in \iota_p(M)$ such that $E_p(X)=\sigma_p$, and the set of isometries of $(T_pM,g|_{T_pM})$ given by
\begin{align*}
\{(d\varphi^X_s)_p \: : \: X\in \iota(\mathcal{U}_p)\}&=\{\exp(sE_p(X)):T_pM\to T_pM \: : \: X \in \iota_p(M)\}=\exp(\mathfrak{o}(T_pM)).
\end{align*}
Since $\exp:\mathfrak{o}(n)\to SO(n)$ is surjective, then (\ref{SO-Surjectivity}) follows.
\end{proof}

\begin{prop}\label{RotationalKillingsProp}
Let $(M^d,g)$ be a semi-Riemannian manifold. Fix a point $q\in M$, let $X\in \iota_q(M)$ and denote by $\varphi_{\theta}$ the flow of $X$. Then, the linear map $(d\varphi_{\theta})_q:T_qM\to T_qM$ obeys the following identity:
\begin{align}
\frac{d(d\varphi_{\theta})_q(Y)}{d\theta}\Big\vert_{\theta=0}=\nabla_{Y_q}X, \: \forall \: Y\in \Gamma(T\mathcal{U}_q),
\end{align}
where $\mathcal{U}_q$ is an arbitrary neighbourhood of $q$.
\end{prop}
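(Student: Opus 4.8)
The plan is to compute the derivative directly from the definition of the flow, exploiting the fact that the differential of an isometry intertwines with the Levi-Civita connection along the flow, and that $q$ is a fixed point of $\varphi_\theta$. First I would fix a vector $Y_q \in T_qM$ and pick any smooth curve $c:(-\delta,\delta)\to M$ with $c(0)=q$ and $c'(0)=Y_q$; by the chain rule it suffices to understand $\theta \mapsto (d\varphi_\theta)_q(Y_q)$ as a curve in $T_qM$. The key observation is that for each fixed $u$ in the domain of $c$, the map $\theta \mapsto \varphi_\theta(c(u))$ is the integral curve of $X$ through $c(u)$, so $\frac{\partial}{\partial\theta}\varphi_\theta(c(u)) = X_{\varphi_\theta(c(u))}$. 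Consider then the two-parameter map $f(\theta,u)=\varphi_\theta(c(u))$, with $f(\theta,0)=q$ for all $\theta$ since $X_q=0$. The vector field $V(\theta):=\frac{\partial f}{\partial u}(\theta,0) = (d\varphi_\theta)_q(Y_q)$ along the (constant) curve $\theta\mapsto q$ is exactly the object we want to differentiate.

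Next I would use the symmetry of the connection (torsion-freeness) to swap the order of covariant differentiation: along the map $f$ we have $\frac{D}{\partial\theta}\frac{\partial f}{\partial u} = \frac{D}{\partial u}\frac{\partial f}{\partial\theta} = \frac{D}{\partial u}\big(X \circ f\big)$, where on the right we are taking the covariant derivative of the vector field $X$ pulled back along $f$. Evaluating at $u=0$ and $\theta=0$: since $\theta\mapsto q$ is constant, the covariant derivative $\frac{D}{\partial\theta}V(\theta)\big|_{\theta=0}$ coincides with the ordinary derivative $\frac{d}{d\theta}(d\varphi_\theta)_q(Y_q)\big|_{\theta=0}$ in the fixed vector space $T_qM$; and the right-hand side $\frac{D}{\partial u}(X\circ f)\big|_{(\theta,u)=(0,0)}$ is, by definition of the covariant derivative of $X$ along the curve $u\mapsto c(u)$, precisely $\nabla_{c'(0)}X = \nabla_{Y_q}X$. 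This yields the claimed identity $\frac{d(d\varphi_\theta)_q(Y)}{d\theta}\big|_{\theta=0} = \nabla_{Y_q}X$.

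A couple of technical points need care. One must justify that $\frac{D}{\partial\theta}V(\theta)$ at the fixed point really is the naive componentwise derivative of $(d\varphi_\theta)_q(Y_q)$ in $T_qM$: this follows because along a constant curve the pullback connection has vanishing Christoffel symbols, so parallel transport is the identity and covariant differentiation reduces to ordinary differentiation of the components in any fixed basis of $T_qM$. One must also be slightly careful that although $Y$ is given as a vector field on a neighbourhood $\mathcal{U}_q$, only its value $Y_q$ at $q$ enters, since $(d\varphi_\theta)_q$ is a linear map on $T_qM$ and $\nabla_{Y_q}X$ depends only on $Y_q$; hence choosing the auxiliary curve $c$ with $c'(0)=Y_q$ loses no information. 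The main obstacle, such as it is, is purely bookkeeping: setting up the two-parameter variation $f(\theta,u)$ cleanly and invoking the symmetry lemma $\frac{D}{\partial\theta}\partial_u f = \frac{D}{\partial u}\partial_\theta f$ with the correct identification of the right-hand side as $\nabla X$ evaluated along $c$; there is no deep difficulty beyond keeping the two differentiation directions and the fixed-point condition $f(\theta,0)\equiv q$ straight.
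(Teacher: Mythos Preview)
Your proof is correct and takes a genuinely different route from the paper's. The paper argues via the Lie derivative: it writes $(d\varphi_{-\theta}^{-1})_q(Y_q)=\varphi_{-\theta}^{*}(Y)|_q$, differentiates in $\theta$ using a standard identity for the $\theta$-derivative of the pull-back (citing \cite{AMR}) to obtain a Lie bracket $-[d\varphi_{-\theta}^{-1}(X),d\varphi_{-\theta}^{-1}(Y)]|_q$, then uses torsion-freeness together with the fact that $d\varphi_{-\theta}^{-1}(X)$ vanishes at $q$ to convert this bracket into $\nabla_{(d\varphi_\theta)_q(Y_q)}\,d\varphi_\theta(X)$; finally the flow-invariance $d\varphi_\theta(X)=X\circ\varphi_\theta$ reduces this to $\nabla_{(d\varphi_\theta)_q(Y_q)}X$, which at $\theta=0$ is $\nabla_{Y_q}X$. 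Your approach instead sets up the two-parameter variation $f(\theta,u)=\varphi_\theta(c(u))$ and invokes the symmetry lemma $\tfrac{D}{\partial\theta}\partial_u f=\tfrac{D}{\partial u}\partial_\theta f$ directly, with the observation that along the constant curve $\theta\mapsto q$ the covariant $\theta$-derivative is the ordinary one in $T_qM$.

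Your argument is shorter and more self-contained: it avoids the pull-back/Lie-derivative machinery, the flow-invariance step, and the somewhat delicate bracket manipulation at the fixed point. It also makes transparent that the identity needs only torsion-freeness and $X_q=0$, not the Killing property. The paper's route, by contrast, yields as a by-product the formula valid for all $\theta$, namely $\tfrac{d}{d\theta}(d\varphi_\theta)_q(Y_q)=\nabla_{(d\varphi_\theta)_q(Y_q)}X$, which is then specialised to $\theta=0$; your variation argument gives only the $\theta=0$ statement (which is all the proposition claims). Both proofs ultimately hinge on torsion-freeness---yours through the symmetry lemma, the paper's through rewriting the Lie bracket as a difference of covariant derivatives.
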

\begin{proof}
Since we are concerned with a local result, we assume that $\mathcal{U}_q$ is small enough so that $X\in \iota_q(M)$ is non-zero on $\mathcal{U}_q\backslash\{q\}$ and $\varphi_{\theta}\vert_{\mathcal{U}_q}$ is a diffeomorphism onto its image. Consider then the pull-back $\varphi_{\theta}^{*}$ acting on vector fields $Y\in \Gamma(T\mathcal{U}_q)$ as 
\begin{align*}
(\varphi^{*}_{\theta}(Y))_{\varphi^{-1}_{\theta}(m)}=(d\varphi^{-1}_{\theta})_m(Y_m), \: \forall \: m\in \mathcal{U}_q,
\end{align*}
In particular, since $\varphi^{-1}_{\theta}(q)=q$, then $\varphi^{*}_{\theta}\vert_q:T_qM\to T_qM$. Also, using that $\varphi^{-1}_{\theta}=\varphi_{-\theta}$ and that $d\varphi^{-1}_{\theta}$ is smooth jointly on $(\theta,m)\in (-\delta,\delta)\times \mathcal{U}_q$,\footnote{See, for instance, \cite[Chapter 4, Theorem 4.1.5]{AMR}.} it follows from \cite[Chapter 6, Theorem 6.4.1]{AMR} that
\begin{align*}
\frac{d(d\varphi^{-1}_{-\theta})_q(Y_q)}{d\theta}&=\frac{d\varphi^{*}_{-\theta}(Y)}{d\theta}\big\vert_{q}=d\varphi^{-1}_{-\theta}(\pounds_{-X_{\varphi_{-\theta}(q)}}Y)\big\vert_{q}=-[d\varphi^{-1}_{-\theta}(X),d\varphi^{-1}_{-\theta}(Y)]\vert_{q}=\nabla_{(d\varphi^{-1}_{-\theta})_q(Y_q)}d\varphi^{-1}_{-\theta}(X).
\end{align*}
Then, since $d\varphi^{-1}_{-\theta}(Y)=d\varphi_{\theta}(Y)$, we find that
\begin{align}\label{IsotropyEquiv.00}
\frac{d(d\varphi_{\theta})_q(Y_q)}{d\theta}=\nabla_{(d\varphi_{\theta})_q(Y_q)}d\varphi_{\theta}(X) \:\:\: \forall \: Y\in \Gamma(T\mathcal{U}_q).
\end{align}
We can simplify the above as follows. Given any point $m$ in the domain of $\varphi_{\theta}$, we may write $X_m=\frac{d\varphi_s}{ds}(0)$ where $\alpha(s)=\varphi_s(m)$ stands for the integral curve of $X$ starting at $\alpha(0)=m$. Then
\begin{align*}
(d\varphi_{\theta})_m(X_m)=(d\varphi_{\theta})_m\left(\frac{d\varphi_s}{ds}(0)\right)=\frac{d(\varphi_{\theta}\circ\varphi_s)}{ds}\vert_{s=0}=\frac{d\varphi_{\theta+s}}{ds}\vert_{s=0}=X_{\varphi_{\theta}(m)}.
\end{align*}
Therefore (\ref{IsotropyEquiv.00}) is equivalent to
\begin{align}\label{IsotropyEquiv.0}
\frac{d(d\varphi_{\theta})_q(Y_q)}{d\theta}=\nabla_{(d\varphi_{\theta})_q(Y_q)}X \:\:\: \forall \: Y\in \Gamma(T\mathcal{U}_q).
\end{align}
For $\theta=0$, the above gives us
\begin{align}\label{IsotropyEquiv.01}
\frac{d(d\varphi_{\theta})_q(Y_q)}{d\theta}\Big\vert_{\theta=0}=\nabla_{(d\varphi_{0})_q(Y_q)}X=\nabla_{Y_q}X
\end{align}
\end{proof}

Lemma \ref{SIuniqueness} above, and in particular its proof, justify why Killing fields which fix a given point can actually be seen as generating point wise rotations. A similar (and actually easier) result also justifies why a (non-trivial) Killing field satisfying $\nabla X\vert_p=0$ can be interpreted as generating a translational symmetry. With all this information, we can now write down what we consider is the definition of isotropy clearly intended in \cite{CarrollBook} in a way which is more useful for our purposes.

\begin{de}[SI cosmological space-times - Type II]\label{CarrollIsotropyDef}
A cosmological space-time $(V^{n+1},g)$ is said to be isotropic if, for each $t$, the Riemannian manifolds $(M^n_t,g_t)$ representing the rest-spaces of the observers defined by $U=\partial_t$, in a neighbourhood of any point $p\in M^n_t$, admit a maximal family of rotational Killing fields. That is, if given $p\in M^n_t$, there is a neighbourhood $\mathcal{V}_p\subset M^n_t$, such that $\mathrm{dim}(\iota_p(\mathcal{V}_p))=\frac{n(n-1)}{2}$. 
\end{de}

As we have previously stated, we classify both Definition \ref{CBIsotropy} and Definition \ref{CarrollIsotropyDef} as SI-definitions because they make reference to isometries of the rest spaces associated with the observers, and not to space-time isometries. In particular, these two definitions are equivalent, as shown below.

\begin{theo}\label{SIuniquenessTHM}
A cosmological space-time $(V^{n+1},g)$ is SI type I iff it is SI type II.
\end{theo}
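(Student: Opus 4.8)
The plan is to prove the two implications by relating both definitions to the constant-sectional-curvature condition, using Schur's Lemma in one direction and Lemma \ref{SIuniqueness} together with Lemma \ref{MaximumKilling} in the other. Fix $t$ and write $(M,\gamma)\doteq(M^n_t,g_t)$; since the statement is local we work in a neighbourhood of an arbitrary $p\in M$.

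First I would show SI type II $\Rightarrow$ SI type I. Assume $\dim(\iota_p(\mathcal{V}_p))=\tfrac{n(n-1)}{2}$. By Lemma \ref{SIuniqueness}, the restricted evaluation map $E_p:\iota_p(\mathcal{V}_p)\to\mathfrak{o}(T_pM)$ is an isomorphism and $\{(d\varphi^X_s)_p:X\in\iota_p(\mathcal{V}_p)\}=SO(T_pM)$. Now take any two planes $P_1,P_2\subset T_pM$; pick $A\in SO(T_pM)$ with $A(P_1)=P_2$, and a Killing field $X$ with $(d\varphi^X_s)_p=A$ for a suitable $s$. The flow $\varphi^X_s$ is a local isometry of $(M,\gamma)$ fixing $p$, hence it preserves sectional curvatures: $k_p(P_1)=k_p(\varphi^X_s\text{-image of }P_1)=k_p(P_2)$. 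Since $P_1,P_2$ are arbitrary, $k_p$ is independent of the plane, which is exactly Definition \ref{CBIsotropy}. (One should note that the hypothesis of Definition \ref{CarrollIsotropyDef} is that this holds near every $p$, so we indeed get isotropy at every point.)

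For the converse, SI type I $\Rightarrow$ SI type II, suppose the sectional curvature $k_p(P)$ is independent of $P$ for every $p$ in a neighbourhood. By the classical Schur Lemma (cf.\ the references cited after Definition \ref{CBIsotropy}), $(M,\gamma)$ has constant sectional curvature $c$ on that neighbourhood, hence is locally isometric to the model space-form $M^n_c$. But $M^n_c$ is maximally symmetric: it admits locally the full $\tfrac{n(n+1)}{2}$-dimensional algebra of Killing fields, $\tfrac{n(n-1)}{2}$ of which vanish at any prescribed point (the rotational ones about that point). Transporting this family through the local isometry to $(M,\gamma)$, we obtain a neighbourhood $\mathcal{V}_p$ of $p$ with $\dim(\iota_p(\mathcal{V}_p))\geq\tfrac{n(n-1)}{2}$; combined with the upper bound $\dim(\iota_p(\mathcal{V}_p))\leq\dim(\mathfrak{o}(T_pM))=\tfrac{n(n-1)}{2}$ from Lemma \ref{MaximumKilling}, equality holds, which is Definition \ref{CarrollIsotropyDef}.

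The main obstacle I anticipate is the careful handling of the \emph{local} nature of everything: one must ensure that the Killing fields produced on the model space-form actually descend to honest Killing fields on a genuine open subset of $(M^n_t,g_t)$ (not merely germs), that $\iota_p$ is computed on a small enough $\mathcal{V}_p$ so that the dimension count is sharp, and that Schur's Lemma is being applied on a connected neighbourhood. A secondary point worth making explicit is that a local isometry fixing $p$ acts on $T_pM$ by an element of $O(T_pM)$ and in particular sends planes to planes, which is what legitimises the curvature-preservation argument in the first implication; since we only invoke flows of Killing fields, the relevant differentials all lie in $SO(T_pM)$, so no orientation subtlety arises. None of these steps is deep, but spelling them out cleanly is the substance of the proof.
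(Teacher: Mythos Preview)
Your proposal is correct and follows essentially the same approach as the paper: for SI type II $\Rightarrow$ SI type I you use Lemma \ref{SIuniqueness} to realise any element of $SO(T_pM)$ as the differential of a local isometry and then conclude equality of sectional curvatures, and for SI type I $\Rightarrow$ SI type II you invoke Schur's Lemma to pass to constant curvature and then transport the rotational Killing fields from the model space-form, exactly as the paper does. Your anticipated obstacles are all routine and the paper glosses over them in the same way.
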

\begin{proof}
From Definition \ref{CBIsotropy}, if $(V^{n+1},g)$ is SI type I, then $(M_t^n,g_t)$ must have constant sectional curvature, so each $(M_t^n,g_t)$ is locally isometric to a simply connected space-form, all of which are (locally) maximally symmetric.\footnote{An explicit proof could also be written through a minor adaptation of the first part of the proof of Theorem \ref{STIuniquenessTHM} below.}

To see the converse, consider some fixed $(M^n_t,g_t)$, $p\in M^n_t$ and two planes $P_1,P_2\subset T_pM^n_t$. Let $\{v_1,v_2\}$ and $\{w_1,w_2\}$ be two orthonormal bases generating $P_1$ and $P_2$ respectively. Let us complete these bases as orthonormal bases for $T_pM^n_t$ with the same orientation, given by $B_1=\{v_1,v_2,v_3,\cdots,v_n\}$ and $B_2=\{w_1,w_2,w_3,\cdots,w_n\}$. Then, there is matrix $A\in SO(T_pM^n_t)$ associated to the change of basis $B_1$ to $B_2$ mapping $v_i\mapsto \omega_i$ for $i=1,2$. If $(V^{n+1},g)$ is SI type II, there is a neighbourhood $\mathcal{V}_p\subset M^n_t$ such that $\mathrm{dim}(\iota_p(\mathcal{V}_p))=\frac{n(n-1)}{2}$. Then Lemma \ref{SIuniqueness} implies there is an isometry $\varphi$ of $(\mathcal{V}_p,g_t)$ such that $d\varphi_p=A$, so in particular $(d\varphi)_p(v_i)=w_i$, $i=1,2$. This in turn implies that the sectional curvatures of these planes are equal. That is, $k_p(P_1)=k_p(P_2)$, and therefore the claim follows.
\end{proof}

\begin{remark}\label{NonRigRemark1}
Consider a FLRW space-time, so that $V^{n+1}\cong I\times_{f}M^n_k$ satisfies claim (iv) in Theorem \ref{ONthm}. In this case, it follows directly that the slices $M^n_t$, which are homothetic to $M^n_k$, are spaces of constant sectional curvature and thus satisfy Definition \ref{CBIsotropy} and hence also Definition \ref{CarrollIsotropyDef}. This last claim is also obvious since these simultaneity spaces are known to be (intrinsically) maximally symmetric. So clearly, the FLRW family is contained within the SI family of space-times. The usual claim is that there is a (local) converse to this statement. Nevertheless, notice that any BCCCM model described in Section \ref{BCCCM} satisfies the SI criteria for isotropy. For instance, using Definition \ref{CBIsotropy}, this is obvious since the only thing to be checked is the independence of $k_p(P)$ on $P\subset T_pM^n_t$, which in these cases follows from the fact that $M^n_t\cong M^n_{k(t)}$ and have therefore constant sectional curvature. This by itself already implies that a space-time can be constructed satisfying the SI criteria for isotropy and without looking like a FLRW space-time. 

The above should provide some caution to classical claims which start with SI definitions, showing that even after knowing that each slice $M^n_t$ associated with the isotropic observers is a space of constant sectional curvature, one has still some work to do to prove that the space-time splits (even locally) as a warped product $I\times_fM^n_k$. Nevertheless, in the next section we will that actually see any non-trivial BCCCM space-time is not even locally isometric to any FLRW, so that there is no hope in reconciling SI definitions of isotropy with rigidity with the FLRW family. 
\end{remark}

Let us now differentiate the above (equivalent) SI definitions of isotropy from the one used in \cite[Chapter 13, Section 5]{Weinberg}. This definition is quite similar to the one in \cite{CarrollBook}, but it takes into consideration that the Killing fields which make the space-slices maximally symmetric, must be induced by space-time Killing fields. That is, given a point $q\in V^{n+1}$ and a neighbourhood $\mathcal{U}_q\subset V^{n+1}$, Killing vector fields in this family defined on $\mathcal{U}_q$ are supposed to be tangent to the space-like symmetric hypersurfaces, say $M^n_t\cap \mathcal{U}_q$ with $t\in I$, and are actually taken to be \emph{maximal}, in the sense that $(M^n_t\cap \mathcal{U}_q,g_t)$ is then maximally symmetric. Once again, this last condition will imply also the existence of local translational symmetries, and therefore (from the beginning) one assumes local homogeneity. Being only concerned with isotropy, from the above discussion we now know how to distinguish the subset of $\iota(M_t)$ which represents point wise rotations and we can therefore preserve the intended definition of isotropy in \cite{Weinberg} by demanding the following:

\begin{de}[STI cosmological space-time - Type II]\label{WeinbergIsotropy}
A cosmological space-time is said to be isotropic at a point $q=(t,p)\in V^{n+1}$, $p\in M^n$, if there exists a neighbourhood $\mathcal{U}_q\subset V^{n+1}$ and a subset $\overline{\iota_q(\mathcal{U}_q)}$ of $\iota_q(\mathcal{U}_q)$ 
such that 
\begin{enumerate}
\item $\overline{\iota_q(\mathcal{U}_q)}=\{X\in \iota_q(\mathcal{U}_q) \: : \: X=X^{\top} \}$, where $X^{\top}=d\pi(X)$. That is, $\overline{\iota_q(\mathcal{U}_q)}$ consists of elements of $\iota_q(\mathcal{U}_q)$ which are tangent to each $M^n_t$; 
\item The map $\overline{E}^{\top}_q:\overline{\iota_q(\mathcal{U}_q)}\to \mathfrak{o}(T_pM^n_t)$, given by 
\begin{align}
\begin{split}
\overline{E}^{\top}_q(X)=(\nabla X)^{\top}:T_{\pi(q)}M^n_t&\to T_{\pi(q)}M^n_t,\\
v&\mapsto  (\nabla_{\bar{v}} X)^{\top}
\end{split}
\end{align}
is surjective, where again $Y^{\top}=d\pi(Y)$ denotes the projection of $Y\in \Gamma(T\mathcal{U}_q)$ tangential to each $M_t$ and $\bar{v}$ denotes any extension of $v$ to a vector field on $\mathcal{U}_q$. 
\end{enumerate}

Finally, we say that a cosmological space-time is isotropic if it is isotropic at every point $q=(t,p)\in V^{n+1}$.
\end{de}

\begin{remark}
In \cite{Weinberg} there is no actual explicit definition written as in Definition \ref{WeinbergIsotropy} above. There, the author actually provides the following definition of isotropy in page 378:
\begin{quote}
A metric space is said to be isotropic about a given point $X$ if there exist infinitesimal isometries (13.1.3) that leave the point $X$ fixed, so that $\xi^{\lambda}(X) = 0$, and for which the first derivatives $\xi_{\lambda;\nu}$ take all possible values, subject only to the antisymmetric condition (13.1.5).
\end{quote}
In the above quote, the referred equation (13.1.3) is used to define the the cited isometries as generated by the Killing fields $\xi$. Then, this Killing field is represented by its components $\xi^{\lambda}$ in an arbitrary coordinate system, and the following notation is used for covariant differentiation $\xi_{\lambda;\nu}\doteq \nabla_{\nu}\xi_{\lambda}$. Nevertheless, when studying cosmological isotropic space-times, the author analyses spaces with maximally symmetric subspaces, and not space-times which are isotropic under the above quoted definition. This in done in \cite[Section 5 of Chapter 13]{Weinberg}, where in the case of interest, in which the maximally symmetric subspace is a hypersurface $M^n$, the author considers the existence of a family of space-time Killing vector fields in a neighbourhood of a point $q\in V^{n+1}$, which induce on $M^n_t$ a family of $\frac{n(n+1)}{2}$ independent Killing fields. Looking at equations (13.5.1)-(13.5.3) we see that these Killing fields are taken to be tangent to each $M^n_t$ (our condition 1 in Definition \ref{WeinbergIsotropy}), and also that within the $\frac{n(n+1)}{2}$-dimensional family, there is a subfamily such that 
\begin{align}\label{WeinbergRotCondition}
\xi_{k;l}=g(\partial_k,\nabla_l\xi)
\end{align}
can take arbitrary antisymmetric values at $q$, where $\{x^0,x^k\}_{k=1}^n$ is taken to be a system of coordinates for $V^{n+1}$ around $q$ which is adapted to $M^n_t$.\footnote{That such system of coordinates is adapted to $M^n$ means that $\{x^{k}\}_{k=1}^n$ is a system of coordinates for $M^n$ around $p=\pi(q)$.} From our discussion associated to Lemma \ref{SIuniqueness}, we distinguish this subfamily as the one which is actually associated to rotational symmetry and thus to isotropy, which can be disentangled from the $n$-additional Killing fields intended to generate translational isometries associated with homogeneity. Let us then notice that (\ref{WeinbergRotCondition}) is precisely our condition 2 in Definition \ref{WeinbergIsotropy}. The attentive reader can recognise that Definition \ref{WeinbergIsotropy} is the definition actually used in \cite[Section 5 of Chapter 13]{Weinberg} in order to pursue the claim that the associated cosmological space-times must be in the FLRW family.
\end{remark}

Noticing that the Killing fields in the above definition will generate local space-time symmetries, which act on $U^{\perp}$ by construction, already makes contact between Definition \ref{WeinbergIsotropy} and Definition \ref{WaldIsotropy}. Along the same lines of Theorem \ref{SIuniquenessTHM}, we would like to prove that Definition \ref{WeinbergIsotropy} is equivalent to our default Definition \ref{WaldIsotropy}. Before doing this, we once more highlight that, if one were to be convinced that both these STI definitions lead to rigidity withing the FLRW family, then equivalence would follow immediately, since any FLRW space-time satisfies both of them (and clearly the SI-definitions as well). Nevertheless, because of the subtleties that we have exposed and the non-rigidity that we have commented for the SI-definitions (which shall be ultimately proven in Theorem \ref{BCCCMConfFlatThm}), we take only the rigidity associated to Definition \ref{WaldIsotropy} as given by Theorem \ref{ONthm} for granted.

\begin{theo}\label{STIuniquenessTHM}
A cosmological space-time $(V^{n+1},g)$ is STI of type I if and only if it is STI of type II.
\end{theo}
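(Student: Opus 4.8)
The plan is to prove both implications directly, the common engine being an explicit description of the flows generated by the space-time Killing fields occurring in Definition \ref{WeinbergIsotropy}. The key preliminary observation I would establish first is that a space-time Killing field $X$ which is tangent to every slice $M^n_t$ has, in the product coordinates $(t,x^i)$ of Definition \ref{CosmoligcalSTDef}, coefficients $X^i$ independent of $t$: writing the Killing equations for $g=-dt^2+g_t$, the $(t,i)$-components reduce to $\partial_t X_i=2\Gamma^j_{ti}X_j$, and raising an index this collapses to $\partial_t X^i=0$. Hence $[X,\partial_t]=0$, the flow $\psi^X_\theta$ commutes with the flow of $U$, preserves $t$, and therefore splits as $\psi^X_\theta=\mathrm{Id}\times\psi^M_\theta$ with $\psi^M_\theta$ a $t$-independent isometry of every $(M^n_t,g_t)$; in particular $(d\psi^X_\theta)_q$ fixes $U_q$ and maps $T_pM^n_t$ to itself. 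Moreover $X$ restricts to a Killing field of each $(M^n_t,g_t)$, and by the Gauss formula $(\nabla^g_v X)^\top=\nabla^{g_t}_v(X|_{M^n_t})$ for $v$ tangent to the slice, so the map $\overline{E}^\top_q$ of Definition \ref{WeinbergIsotropy} is precisely the intrinsic map $E_p$ of Lemma \ref{MaximumKilling} for the slice $M^n_t$, evaluated on $X|_{M^n_t}$.

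For \emph{STI type II $\Rightarrow$ STI type I}, I would re-run the argument of Lemma \ref{SIuniqueness} one level up. Fixing $q=(t,p)$ and $X\in\overline{\iota_q(\mathcal{U}_q)}$, Proposition \ref{RotationalKillingsProp} applied to $X$ on $V$ gives $\big|_{\theta=0}$-derivative of $(d\psi^X_\theta)_q(v)$ equal to $\nabla^g_v X$ for $v\in T_pM^n_t$; since $(d\psi^X_\theta)_q$ preserves $T_pM^n_t$ this derivative is tangential, hence equals $\overline{E}^\top_q(X)(v)$, so $(d\psi^M_\theta)_p=\exp(\theta\,\overline{E}^\top_q(X))$ as an element of $SO(T_pM^n_t,g_t)$. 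Surjectivity of $\overline{E}^\top_q$ onto $\mathfrak{o}(T_pM^n_t)$ together with surjectivity of $\exp:\mathfrak{o}(n)\to SO(n)$ then shows every $A\in SO(T_pM^n_t)$ is realised as $(d\psi^X_\theta)_q$ restricted to $T_pM^n_t$ for a space-time isometry $\psi^X_\theta=\mathrm{Id}\times\psi^M_\theta$ fixing $U_q$. Finally, transitivity of $SO(n)$ on unit vectors and, for $n>3$, on $2$-planes produces the isometry required in Definition \ref{WaldIsotropy}.

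For \emph{STI type I $\Rightarrow$ STI type II}, I would invoke Theorem \ref{ONthm}: around any $q$ the space-time is (globally, hence locally) isometric to a warped product $I\times_f M^n_k$ with $(M^n_k,\gamma)$ of constant sectional curvature. Shrinking to a neighbourhood $\mathcal{V}_p\subset M^n_k$ on which $(M^n_k,\gamma)$ is locally maximally symmetric (true for any constant-curvature space), each $X\in\iota_p(\mathcal{V}_p)$ lifts canonically to a space-time Killing field $\widetilde X$ on $I\times\mathcal{V}_p$ tangent to the slices and vanishing at $q$ — this is the standard fact that fibre Killing fields lift in a warped product. These lifts lie in $\overline{\iota_q(\mathcal{U}_q)}$ with $\mathcal{U}_q=I\times\mathcal{V}_p$; since on each slice $g_t=f(t)^2\gamma$ has the same Levi-Civita connection as $\gamma$ (the warping factor is constant along the slice) and antisymmetry of an endomorphism is unaffected by a conformal rescaling, so that $\mathfrak{o}(T_pM^n_t,g_t)=\mathfrak{o}(T_pM^n_t,\gamma)$, the preliminary observation identifies $\overline{E}^\top_q(\widetilde X)$ with $\nabla^\gamma X_p=E_p(X)$. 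By Lemma \ref{SIuniqueness} the intrinsic $E_p:\iota_p(\mathcal{V}_p)\to\mathfrak{o}(T_pM^n_k)$ is onto, hence so is $\overline{E}^\top_q$, verifying Definition \ref{WeinbergIsotropy} at $q$; since $q$ was arbitrary, the space-time is STI type II.

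The main obstacle, and the step I would write out in full, is the preliminary lemma: showing that a space-time Killing field tangent to the slices must commute with $\partial_t$, so that its flow genuinely has the product form $\mathrm{Id}\times\psi^M_\theta$ and acts block-diagonally on $\mathbb{R}U_q\oplus T_pM^n_t$. The rest is bookkeeping — relating the space-time map $\overline{E}^\top_q$ to the intrinsic $E_p$ of the slice via the Gauss formula, and in the forward direction tracking the conformal factor $f(t)^2$ when identifying the relevant spaces of antisymmetric endomorphisms. One further point to treat with care is that Definition \ref{WeinbergIsotropy} demands surjectivity of $\overline{E}^\top_q$ on the \emph{entire} space of tangent Killing fields vanishing at $q$, which is automatic once it holds on the subspace of lifted fibre Killing fields.
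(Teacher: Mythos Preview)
Your proposal is correct and follows essentially the same strategy as the paper: both invoke Theorem \ref{ONthm} for type I $\Rightarrow$ type II, and both run the argument of Lemma \ref{SIuniqueness} together with Proposition \ref{RotationalKillingsProp} and surjectivity of $\exp:\mathfrak{o}(n)\to SO(n)$ for type II $\Rightarrow$ type I. The differences are cosmetic: your preliminary computation showing $[X,\partial_t]=0$ makes explicit a step the paper asserts without proof (that a Killing field tangent to $U^\perp$ has flow of product form $\mathrm{Id}\times\phi^X_s$, hence fixes $U_q$), and for the forward direction you lift fibre Killing fields abstractly from $M^n_k$ whereas the paper writes the warped product in geodesic polar coordinates and uses rotations of $\mathbb{S}^{n-1}$ explicitly --- both produce the same space-time Killing fields.
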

\begin{proof}

Let us first prove that Definition \ref{WaldIsotropy} implies Definition \ref{WeinbergIsotropy}. With this in mind, let us assume isotropy under Definition \ref{WaldIsotropy}, consider an arbitrary point $q\in V^{n+1}$, and prove that $V^{n+1}$ is isotropic at $q$ under Definition \ref{WeinbergIsotropy}. Being such a point arbitrary, this will establish that $V^{n+1}$ is isotropic under Definition \ref{WeinbergIsotropy}. In order to construct the associated Killing fields in a neighbourhood of $q$, notice that Definition \ref{WaldIsotropy} implies via Theorem \ref{ONthm} that the space-time is a warped product of the form $I\times_fM^n_k$, with $M^n_k=(M^n,g_k)$ an $n$-dimensional Riemannian manifold of constant sectional curvature $k$. Thus, given $p\in M^n_k$ there is a small geodesic ball $B_{\epsilon}(p)$ and geodesic polar normal coordinates on $B_{\epsilon}(p)\backslash \{p\}\cong (0,\epsilon)\times S^{n-1}$, where the metric $g_k$ of $M^n_k$ has the form
\begin{align*}
g_k=dr^2+S^2_k(r)g_{\mathbb{S}^{n-1}},
\end{align*}
with $S_k$ given as in Section \ref{Preliminaries}. Therefore, the (smooth) isometries $\phi$ of $\mathbb{S}^{n-1}$ give rise to (smooth) space-time isometries of $g=-dt^2+f^2(t)g_k$ in a neighbourhood of $q=(t,p)$, given by $\varphi:I\times(0,\epsilon)\times \mathbb{S}^{n-1}$, $\varphi(t,r,s)=(t,r,\phi(s))$. Since the isometry group of $\mathbb{S}^{n-1}$ is $O(n)$,\footnote{That is, the orthogonal transformations of $\mathbb{E}^{n}$ restricted to $\mathbb{S}^{n-1}$.} we see that for each element $\mathcal{R}\in O(n)$, there is a space-time isometry of $I\times_fM^n_k$ around $q$ such that $\varphi|_{M^n_k}=\mathcal{R}$. Thus, if we consider an arbitrary smooth 1-parameter subgroup of rotations $\mathcal{R}_{\theta}$ and denote by $\varphi_{\theta}$ the associated smooth 1-parameter group of space-time isometries, we can then denote by $X_{\theta}$ be the associated (local) Killing field. Since $\varphi_{\theta}(q)=q$ $\forall \theta$, then $X_{\theta}(q)=\frac{d\varphi_{\theta}}{d\theta}(q)=0$ and $X_{\theta}\in \overline{\iota_q(\mathcal{U}_q)}$. To check the second condition in Definition \ref{WeinbergIsotropy}, we need to compute $(\nabla X\big)^{\top}\vert_{\pi(q)}$. Given $V\in \Gamma(TM)$, and denoting by $\bar{V}$ an extension to $\mathcal{U}_q$, we can do this appealing to Proposition \ref{RotationalKillingsProp}, and we get
\begin{align}\label{IsotropyEquiv.01}
(\nabla_{\bar{V}_q}X)^{\top}&=\left(\frac{d(d\varphi_{\theta})_q(\bar{V}_q)}{d\theta}\Big\vert_{\theta=0}\right)^{\top}=\frac{d{\mathcal{R}_{\theta}}(V_p)}{d\theta}\Big\vert_{\theta=0}
\end{align}
where by construction the right-hand side in the above expression stands for an arbitrary element of $\mathfrak{o}(n)$, which proves that one can construct Killing fields $X_{\theta}$ from $\varphi_{\theta}$ which satisfy Definition \ref{WeinbergIsotropy}.

\medskip
Now, to see that Definition \ref{WeinbergIsotropy} implies Definition \ref{WaldIsotropy}, let $X$ be a Killing field satisfying Definition \ref{WeinbergIsotropy} at some fixed but arbitrary $q\in V^{n+1}$, and denote by $\varphi^{X}_s$ the associated 1-parameter family of local isometries. Notice that $X_q=0$ implies that $\varphi_s(q)=q$ due to uniqueness of integral curves $\gamma_s$ through $q$ with initial condition $\varphi_0(q)=\gamma_0=q$. Furthermore, being the Killing vector fields $X$ tangent to $U^{\perp}$, one sees that $\varphi^X_s=\mathrm{Id}\times \phi_s^X$, where $\phi_s^X=\pi\circ \varphi_s^X$ and $\pi:V^{n+1}\to M^n$ is the canonical projection, which implies that $(d\varphi^X_s)_q(U_q)=U_q$. Furthermore, denoting by $r=(t,m)\in \mathcal{U}_q$, then $X_t^{\top}(m)\doteq d\pi_{r}(X(r))=X(r)$ defines a tangent vector field induced on $\mathcal{V}_{p}\subset M^n_t$, $p=d\pi(q)$. Then, let $Y,Z\in \Gamma(TM^n_t)$, denote by $\bar{Y},\bar{Z}$ extensions to a neighbourhood of $M^n_t$ in $V^{n+1}$, notice that $X$ is defined in such a neighbourhood inducing $X^{\top}_t$ on $M^n_t$, and denote by $D:\Gamma(TM^n_t)\times \Gamma(TM^n_t)\to \Gamma(TM^n_t)$ the induced Riemannian connection on $M^n_t$ by $\nabla$. With these notations, we get
\begin{align*}
\pounds_{X_t^{\top}}g_t(Y,Z)&=g_t(D_YX_t^{\top},Z)+g_t(Y,D_ZX_t^{\top}),\\
&=g(\nabla_{\bar{Y}}X,\bar{Z})+g(\bar{Y},\nabla_{\bar{Z}}X),\\
&=\pounds_{X}g(\bar{Y},\bar{Z})=0.
\end{align*}
That is, $X^{\top}_t$ is a Killing vector field of $(M^n_t,g_t)$ in a neighbourhood $\mathcal{V}_p\subset M^n_t$ of $p=\pi(q)$, which actually satisfies $X^{\top}_t\in \iota_p(\mathcal{V}_{p})$ and whose associated local flow is given by $\overline{\phi}^X_s=\phi^X_s\circ i$, where $i:\mathcal{V}_{p}\to \mathcal{U}_q$ denotes the canonical inclusion. Since by hypotheses we can generate $\frac{n(n-1)}{2}$ such linearly independent fields on $\mathcal{V}_{p}$, then $\mathrm{dim}(\iota_p(\mathcal{V}_{p}))=\frac{n(n-1)}{2}$ and then Lemma  \ref{SIuniqueness} implies that 
\begin{align}\label{STIEquivalenceFinal.1}
\{(d\overline{\phi}^X_s)_p \: : \: X\in \overline{\iota_p(\mathcal{U}_q)}\}=SO(T_pM^n_t),
\end{align}
Thus, since 
\begin{align}\label{STIEquivalenceFinal.2}
(d\overline{\phi}^X_s)_p(v)=(d\phi^X_s)_p(i(v))=(d\varphi^X_s)_p(i(v)), \: \forall \: v\in T_pM^n_t,
\end{align}
we see that, given any pair of vectors $v,w\in T_pM^n$, there is a \emph{local isotropy Killing field} $X$ such that the associated local isometry $\varphi_s^{X}=\mathrm{Id}\times \phi_s^{X}$ fixes $q$ and $U_q$ while it rotates $v$ into $w$.

Finally, having established (\ref{STIEquivalenceFinal.1})-(\ref{STIEquivalenceFinal.2}), the same argument as in the second part of Theorem \ref{SIuniquenessTHM} shows that given any two plains $P_1,P_2\subset T_pM_t^n$, there is a \emph{local isotropy Killing field} $X$ such that the associated local isometry $\varphi_s^{X}=\mathrm{Id}\times \phi_s^{X}$ fixes $q$ and $U_q$ while it maps $P_1$ onto $P_2$.
\end{proof}

With the above proof, we have now classified all four definitions provided in this paper into the SI and STI categories, and in particular, due to Theorem \ref{ONthm}, we know the STI definitions lead to (local) rigidity with the FLRW family. In the next section, exploring the BCCCM family described in Section \ref{BCCCM}, we shall prove that the SI category is strictly weaker than the STI one, allowing for a whole new family of space-times which are non isometric (even locally) to any FLRW space-time.

Before finishing this section, let us highlight that, although obviously many concepts close to the ones exposed above have been used in standard literature, and furthermore many of the above results seem to have been intuited, it seems clear for instance from Remark \ref{NonRigRemark1}, that some of these intuitions were misplaced. Furthermore, one could point out to other classic textbooks (such as \cite[Chapter 27), Section 27.3]{MTW}, highly insightful lecture notes (for instance \cite{CarrollLN}), and recent research papers on the topic (such as \cite{Maartens01}), where the notion of isotropy is not explicitly defined in a precise enough manner to decide whether one should have in mind an SI or STI definition. Notice that, as highlighted in Remark \ref{NonRigRemark1}, the ultimate objective around these definitions in all these references is at stake depending on this subtlety. Given the range of different presentations and claims in standard references and research papers through the years, we consider that the existence of examples such as BCCCM family and their consequences have been simply overlooked and hence that the above classification, mainly put together with the next section, provide relevant results to this area.  

\section{Conformal properties of BCCCM models}\label{RigiditySection1}

In general, it might be highly non-trivial to compare two given space-times and decide on the existence of potential (local) isometries. Nevertheless, in the case of BCCCM space-times and FLRW space-times, we shall appeal to the conformal structure of each of them. In particular, FLRW space-times are known to be conformally flat.\footnote{For the sake of completeness, we have provided such a proof in the Appendix to this paper. We highlight that, for instance, in \cite[Chapter 7]{WaldBook}, this is left as an exercise, although appealing different methods.} In view of Theorem \ref{ConformallFlatnessTHM}, this provides us with a clear-cut criterion to establish that BCCCM space-times are not even locally isometric to FLRW space-times by inspection of their Weyl tensor. We shall see below in Theorem \ref{BCCCMConfFlatThm}, that there exists a highly rigid obstruction on the curvature function $k:I\to \mathbb{R}$ for a BCCCM space-time to be conformally flat. 

With the above discussion in mind, we start by presenting the following decompositions for the curvature tensor, Ricci tensor and scalar curvature of a BCCCM space-time.

\begin{lem}\label{CurvatureSplittings}
Consider a BCCCM space-time given by $V^{n+1}_{k}\cong P_{+}\times_{S_{k}}\mathbb{S}^{n-1}$. If $X,Y,Z$ are horizontal vector fields and $U,V,W$ stand for vertical vector fields, then the curvature tensor (modulo symmetries) is given by
\begin{align}\label{CurvatureSplitting}
\begin{split}
R_g(X,Y)Z&=0,\\
R_g(X,V)Y&=S^{-1}(t,r)\nabla^2S(X,Y)V,\\
R_g(X,Y)V&=R_g(V,W)X=0,\\
R_g(V,X)W
&=Sg_{\mathbb{S}^{n-1}}(V,W)\nabla_X\mathrm{grad} S\\
R_g(V,W)U
&=R_{\mathbb{S}^{n-1}}(V,W)U + g(\nabla S,\nabla S)(g_{\mathbb{S}^{n-1}}(V,U)W-g_{\mathbb{S}^{n-1}}(W,U)V)
\end{split}
\end{align}

The above, in particular, implies the following splitting for the Ricci tensor:
\begin{align}\label{RicciSplitting}
\begin{split}
\mathrm{Ric}_{g}(X,Y)&=-(n-1)S^{-1}\nabla^2S(X,Y),\\
\mathrm{Ric}_{g}(X,V)&=0,\\
\mathrm{Ric}_{g}(V,W)
&=\mathrm{Ric}_{g_{\mathbb{S}^{n-1}}}(V,W)-S^2g_{\mathbb{S}^{n-1}}(V,W)S^{\#}=(n-2-S^{2}S^{\#})g_{\mathbb{S}^{n-1}}(V,W)
\end{split}
\end{align}
where
\begin{align}\label{RicciSplittingNotation}
S^{\#}&=S^{-1}\Box_{P}S + \frac{n-2}{S^{2}}g_{P}(\nabla S,\nabla S),
\end{align}
and $\Box_{P}=-\partial_t^2+\partial^2_r$ stands for the D'Alambertian on $\mathrm{P}$. Finally, all this implies that the scalar curvature satisfies the following identities:
\begin{align}\label{WPScalCurv}
\begin{split}
R_{g}&=S^{-2}\left(R_{\mathbb{S}^{n-1}} - (n-1)(2S\Box_{P}S + (n-2)g_{P}(\nabla S,\nabla S))\right),\\
&=\frac{n-1}{S^{2}}\big( (n-2)(1 - g_{P}(\nabla S,\nabla S)) - 2S\Box_{P}S\big),
\end{split}
\end{align}
\end{lem}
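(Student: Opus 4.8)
The plan is to obtain the curvature formulae as a direct specialisation of the standard warped-product curvature identities (see \cite[Chapter 7, Proposition 42 and Corollary 43]{ONeillBook}) to the situation at hand, where the base is $(P_+,g_P)$ with $g_P=-dt^2+dr^2$ flat, the fibre is the unit sphere $(\mathbb{S}^{n-1},g_{\mathbb{S}^{n-1}})$, and the warping function is $S=S_{k(t)}(r)$. First I would recall O'Neill's warped-product curvature formulae in the form: for horizontal $X,Y,Z$ and vertical $V,W,U$ on $B\times_f F$,
\begin{align*}
R(X,Y)Z &= R^B(X,Y)Z,\\
R(V,X)Y &= \frac{\mathrm{Hess}\,f(X,Y)}{f}\,V,\\
R(X,Y)V &= 0,\qquad R(V,W)X=0,\\
R(X,V)W &= \frac{g_F(V,W)}{f}\,\nabla^B_X(\mathrm{grad}\,f),\\
R(V,W)U &= R^F(V,W)U - \frac{g_B(\mathrm{grad}\,f,\mathrm{grad}\,f)}{f^2}\big(g_F(V,U)W-g_F(W,U)V\big)\cdot f^2,
\end{align*}
being careful to track the curvature-sign convention fixed in the excerpt (which agrees with \cite{Besse}). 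Since $g_P$ is flat, $R^B\equiv 0$, which kills the first term and the $R(V,W)U$ fibre-correction simplifies because the fibre is the unit sphere with $R^F=R_{\mathbb{S}^{n-1}}$. The only genuine bookkeeping is to rewrite $\mathrm{grad}\,f$, $\mathrm{Hess}\,f$ and $\Delta_B f$ in terms of the flat D'Alembertian $\Box_P=-\partial_t^2+\partial_r^2$ and the indefinite inner product $g_P(\nabla S,\nabla S)=-(\partial_tS)^2+(\partial_rS)^2$; here one must be mindful that $g_P$ is Lorentzian, so the "Laplacian'' of the warping function on the base is $\Box_P S$ rather than a Riemannian Laplacian.

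Next I would contract to get the Ricci identities in (\ref{RicciSplitting}). Tracing $R_g$ over an orthonormal frame split into the two horizontal base directions and the $(n-1)$ vertical sphere directions: the horizontal-horizontal Ricci component picks up only the $R(V,X)Y$ term summed over the $n-1$ vertical directions, giving $\mathrm{Ric}_g(X,Y)=-(n-1)S^{-1}\nabla^2S(X,Y)$ (the sign coming from the Lorentzian trace over the timelike direction, consistent with the stated conventions); the mixed component vanishes by the two vanishing identities; and the vertical-vertical component combines the trace of $R(X,V)W$ over the two base directions (producing the term $-S^2 g_{\mathbb{S}^{n-1}}(V,W)\,S^{-1}\Box_P S$ together with a $g_P(\nabla S,\nabla S)$ piece) with the trace of $R(V,W)U$ over the remaining $n-2$ vertical directions, which yields $\mathrm{Ric}_{g_{\mathbb{S}^{n-1}}}(V,W)=(n-2)g_{\mathbb{S}^{n-1}}(V,W)$ minus a curvature-norm term. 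Collecting the base-direction contributions into the single scalar $S^{\#}$ defined by (\ref{RicciSplittingNotation}) is then just algebra, and gives the compact form $(n-2-S^2 S^{\#})g_{\mathbb{S}^{n-1}}(V,W)$. Tracing once more — horizontal part contributing $-(n-1)S^{-1}\Box_P S$ (using $\mathrm{tr}_{g_P}\nabla^2 S=\Box_P S$) and the vertical part contributing $S^{-2}$ times $\mathrm{Ric}_{g_{\mathbb{S}^{n-1}}}$ traced, i.e. $S^{-2}R_{\mathbb{S}^{n-1}}$, minus the $S^{\#}$-terms — and substituting $R_{\mathbb{S}^{n-1}}=(n-1)(n-2)$ produces both lines of (\ref{WPScalCurv}).

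I do not expect a serious obstacle: the result is a computation, not a theorem with hidden content. The one place requiring genuine care — and the likely source of sign errors — is the consistent handling of the Lorentzian signature of the base throughout: every trace over base directions carries a relative minus sign on the $\partial_t$ slot, which is exactly what turns the Riemannian Laplacian of the standard warped-product formulae into $\Box_P$ and what fixes the signs in the Ricci and scalar-curvature expressions. A secondary point is to verify that the curvature-sign convention of the excerpt (with $R(V,X,Y,Z)=g(R(Y,Z)X,V)$) matches the convention under which O'Neill's formulae are stated, so that, e.g., the sign of the $\mathrm{Hess}\,f/f$ term is correct; once that calibration is done by checking a single known case (say, a round-sphere warped product, or the de Sitter slicing), the rest follows mechanically. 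The smoothness of $S$ as a function of $(t,r)$, needed so that all these derivatives make sense even across $k(t)=0$, is already guaranteed by \cite[Lemma 2.1]{MiguelCosmologicalST} as recalled in Section \ref{Preliminaries}.
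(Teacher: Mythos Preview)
Your proposal is correct and matches the paper's own proof essentially line for line: the paper likewise obtains (\ref{CurvatureSplitting}) and (\ref{RicciSplitting}) by direct specialisation of \cite[Chapter 7, Proposition 42 and Corollary 43]{ONeillBook} to the flat base $(P_+,g_P)$ and spherical fibre, and then computes (\ref{WPScalCurv}) by tracing over an orthonormal frame $\{E_\alpha\}$ with $E_0=\partial_t$, $E_1=\partial_r$. Your explicit flag about calibrating O'Neill's curvature-sign convention against the one fixed here is exactly the point the paper records in a footnote, so no further adjustment is needed.
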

\begin{proof}
The expressions given in (\ref{CurvatureSplitting})-(\ref{RicciSplitting}) follow directly from the ones in \cite[Chapter 7, Proposition 42 and Corollary 43]{ONeillBook} respectively.\footnote{In the case of (\ref{CurvatureSplitting}), notice the difference in convention sign for the curvature tensor.} To establish (\ref{WPScalCurv}), Take $\{E_{\alpha}\}_{\alpha=0}^n$ a local orthonormal frame, with $E_{0}=\partial_t,E_1=\partial_r$. Then, setting $\epsilon_{\alpha}\doteq g(E_{\alpha},E_{\alpha})$:
\begin{align*}
R_{g}&=\sum_{\alpha=0}^{n}\epsilon_{\alpha}\mathrm{Ric}_g(E_{\alpha},E_{\alpha})=-\mathrm{Ric}_g(\partial_{t},\partial_{t})+\mathrm{Ric}_g(\partial_{r},\partial_{r}) + \sum_{a=1}^{n-1}\mathrm{Ric}_g(E_{a},E_{a}).
\end{align*}
We can appeal to (\ref{RicciSplitting}) to rewrite the above as
\begin{align*}
R_g&=(n-1)S^{-1}(\nabla_t\nabla_tS - \nabla_r\nabla_rS) - S^2S^{\#} \sum_{a=1}^{n-1}g_{\mathbb{S}^{n-1}}(E_{a},E_{a}) + \sum_{a=1}^{n-1}\mathrm{Ric}_{\mathbb{S}^{n-1}}(E_{a},E_{a}),\\
&=S^{-2}R_{\mathbb{S}^{n-1}} -(n-1)S^{-1}\Box_{P}S - (n-1)S^{\#},\\
&=S^{-2}\left(R_{\mathbb{S}^{n-1}} - (n-1)(2S\Box_{P}S + (n-2)g_{P}(\nabla S,\nabla S))\right),
\end{align*}
which establishes the first line in (\ref{WPScalCurv}). The second one follows simply by noticing that $R_{\mathbb{S}^{n-1}}=(n-1)(n-2)$.
\end{proof} 

With the above lemma at hand, we can know establish the following decomposition for the Weyl tensor:
\begin{lem}\label{WeylSplittingLemma}
Consider a BCCCM space-time given by $V^{n+1}_{k}\cong P_{+}\times_{S_{k}}\mathbb{S}^{n-1}$. If $X_1,\cdots,X_4$ are horizontal vector fields and $V_1,\cdots,V_4$ stand for vertical vector fields, then the Weyl tensor (modulo symmetries) is given by
\begin{align}\label{WeylSplitting}
\begin{split}
W(X_1,X_2,X_3,X_4)&=\frac{n-2}{n}S^{-2}\left(g_{P}(\nabla S,\nabla S) - 1 - S\Box_{P}S \right)(X^{\flat}_1\wedge X^{\flat}_2)\otimes (X^{\flat}_3\wedge X^{\flat}_4),\\
W(X_1,X_2,X_3,V)&=0,\\
W(X_1,X_2,V_1,V_2)&=0,\\
W(X_1,V_1,X_2,V_2)&=\frac{(n-2)}{n(n-1)}\left( g_{P}(\nabla S,\nabla S)) -1 - S\Box_PS\right)g_{\mathbb{S}^{n-1}}(V_1,V_2)g(X_1,X_2),\\
W(X_1,V_1,V_2,V_3)&=0,\\
W(V_1,V_2,V_3,V_4)
&=-\frac{S^2(g_P(\nabla S,\nabla S) - S\Box_PS -1)}{n(n-1)}g_{\mathbb{S}^{n-1}}\KN g_{\mathbb{S}^{n-1}}(V_1,V_2,V_3,V_4),
\end{split}
\end{align}
where we have denoted by $X_i^{\flat}$ the 1-form metrically equivalent to $X_i$ and by  $X^{\flat}_i\wedge X^{\flat}_j$ the exterior product of the two 1-forms $X^{\flat}_i$ and $X^{\flat}_j$.
\end{lem}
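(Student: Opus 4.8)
The plan is to compute the Weyl tensor directly from its defining formula (\ref{WeylTensor}) by inserting the curvature, Ricci and scalar-curvature splittings obtained in Lemma \ref{CurvatureSplittings}, evaluated on the three essentially distinct types of arguments dictated by the warped-product structure: all-horizontal, mixed horizontal-vertical, and all-vertical (the remaining component types vanish because the relevant pieces of $R_g$, $\mathrm{Ric}_g$ and $g$ vanish, e.g. $g(X,V)=0$ and $R_g(X,Y)V=0$, which immediately kills $W(X_1,X_2,X_3,V)$, $W(X_1,X_2,V_1,V_2)$ and $W(X_1,V_1,V_2,V_3)$). Throughout I will write $Q\doteq g_P(\nabla S,\nabla S)-1-S\,\Box_P S$ so that the common conformal scalar factor appearing in all nonzero components becomes transparent; a first preliminary step is therefore to re-express the scalar curvature (\ref{WPScalCurv}) and the quantity $S^{\#}$ from (\ref{RicciSplittingNotation}) in terms of $Q$, noting $R_g = \tfrac{n-1}{S^2}\big((n-2)(1-g_P(\nabla S,\nabla S))-2S\Box_P S\big)$ and $S^2 S^{\#} = S\,\Box_P S + (n-2)g_P(\nabla S,\nabla S)$, so that $n-2-S^2 S^{\#} = (n-2) - S\Box_P S - (n-2)g_P(\nabla S,\nabla S)$.

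For the all-vertical component I would use that $(M_t,g_t)$ restricted to the fibre behaves like a sphere of varying radius: from (\ref{CurvatureSplitting}), $R_g(V,W)U = R_{\mathbb{S}^{n-1}}(V,W)U + g(\nabla S,\nabla S)\big(g_{\mathbb{S}^{n-1}}(V,U)W - g_{\mathbb{S}^{n-1}}(W,U)V\big)$, and since $\mathbb{S}^{n-1}$ has constant curvature one, $R_{\mathbb{S}^{n-1}} = \tfrac12 g_{\mathbb{S}^{n-1}}\KN g_{\mathbb{S}^{n-1}}$ in the $(0,4)$ convention; combined with $g|_{\text{fibre}} = S^2 g_{\mathbb{S}^{n-1}}$ and $\mathrm{Ric}_g(V,W) = (n-2-S^2 S^{\#})g_{\mathbb{S}^{n-1}}(V,W)$, the three terms in (\ref{WeylTensor}) collapse (after collecting the $g_{\mathbb{S}^{n-1}}\KN g_{\mathbb{S}^{n-1}}$ contributions) into $-\tfrac{S^2 Q}{n(n-1)}\,g_{\mathbb{S}^{n-1}}\KN g_{\mathbb{S}^{n-1}}(V_1,V_2,V_3,V_4)$. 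For the mixed component $W(X_1,V_1,X_2,V_2)$ I would substitute $R_g(V,X)W = S\,g_{\mathbb{S}^{n-1}}(V,W)\nabla_X\mathrm{grad}\,S$, use $g\big(\nabla_{X_2}\mathrm{grad}\,S, X_1\big) = \nabla^2 S(X_1,X_2)$, feed in $\mathrm{Ric}_g(X,Y) = -(n-1)S^{-1}\nabla^2 S(X,Y)$ for the Ricci term touching the two horizontal slots, $\mathrm{Ric}_g(V_1,V_2)$ for the vertical slots, and $R_g$ for the scalar term; the $\nabla^2 S$ pieces must cancel (this is where $\Box_P S = \mathrm{tr}_P\nabla^2 S$ enters, using that $P$ is two-dimensional and flat so the full Hessian trace equals the D'Alembertian), leaving the stated multiple of $g_{\mathbb{S}^{n-1}}(V_1,V_2)g(X_1,X_2)$. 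The all-horizontal component is the most delicate: here $R_g(X,Y)Z=0$ so $W$ on four horizontal arguments is built purely from the Ricci and scalar terms, and one must recognise that on the two-dimensional base $P$ the tensor $\nabla^2 S$ restricted to the $(0,4)$ Kulkarni--Nomizu structure can be rewritten, again via $\Box_P S = \mathrm{tr}_P \nabla^2 S$ and the fact that in dimension two any symmetric $2$-tensor $h$ satisfies $h\KN g_P = (\mathrm{tr}_{g_P}h)\, g_P\KN g_P / 1$ on the one-dimensional space of $2$-forms — more precisely $h\KN g_P$ acting on $(X^\flat_1\wedge X^\flat_2)\otimes(X^\flat_3\wedge X^\flat_4)$ reduces to $(\mathrm{tr}_{g_P}h)$ times the corresponding expression for $\tfrac12 g_P\KN g_P$; this is what converts the Hessian-dependent Ricci contribution into the purely scalar factor $\tfrac{n-2}{n}S^{-2}Q$.

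The main obstacle I anticipate is precisely this last reduction: keeping the index bookkeeping of the Kulkarni--Nomizu products straight across the three argument types while exploiting the low dimension of the base, and in particular verifying that all $\nabla^2 S$-dependent contributions (which are \emph{not} proportional to the metric in general) do in fact reorganise into the single scalar $Q$ times a purely metric tensor. Concretely one needs the identity, valid because $\dim P = 2$, that for horizontal $X_1,\dots,X_4$ one has $\nabla^2 S \KN g_P\,(X_1,X_2,X_3,X_4) = (\Box_P S)\big(g_P(X_1,X_3)g_P(X_2,X_4) - g_P(X_1,X_4)g_P(X_2,X_3)\big)$, i.e. the Hessian may be replaced by $\tfrac12(\Box_P S)\,g_P$ inside such a product; once this is in hand the horizontal Weyl component drops out after routine collection of terms, and by symmetry of the Weyl tensor the remaining component types listed in (\ref{WeylSplitting}) follow either from the computations above or vanish trivially. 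I would close by remarking that all six displayed expressions carry the \emph{same} scalar factor $Q = g_P(\nabla S,\nabla S)-1-S\Box_P S$ up to dimensional constants, which is the structural fact that the next result will exploit to characterise conformal flatness.
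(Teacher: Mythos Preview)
Your plan is essentially the same as the paper's: insert the curvature, Ricci and scalar splittings from Lemma \ref{CurvatureSplittings} into the Weyl formula (\ref{WeylTensor}) case by case, with the trivial vanishing of $W(X_1,X_2,X_3,V)$, $W(X_1,X_2,V_1,V_2)$ and $W(X_1,V_1,V_2,V_3)$ for exactly the reasons you name. The only genuine difference is in the all-horizontal component: the paper simply expands $X_i = X_i^t\partial_t + X_i^r\partial_r$ and reduces everything to the single scalar $W(\partial_t,\partial_r,\partial_t,\partial_r)$, whereas you invoke the invariant dimension-two identity $h\KN g_P = \tfrac12(\mathrm{tr}_{g_P}h)\,g_P\KN g_P$ to replace $\nabla^2 S$ by $\tfrac12(\Box_P S)g_P$ inside the Kulkarni--Nomizu product. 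Both routes are equivalent and produce the same factor $\tfrac{n-2}{n}S^{-2}Q$; yours is coordinate-free, the paper's is marginally shorter.

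One small correction: in the mixed component $W(X_1,V_1,X_2,V_2)$ the cancellation of the $\nabla^2 S$ pieces is \emph{not} where $\Box_P S$ enters. The curvature term $g(R(X_2,V_2)V_1,X_1) = -S\,g_{\mathbb{S}^{n-1}}(V_1,V_2)\nabla^2 S(X_1,X_2)$ cancels directly against $-\tfrac{1}{n-1}\mathrm{Ric}_g(X_1,X_2)g(V_1,V_2) = +S\,g_{\mathbb{S}^{n-1}}(V_1,V_2)\nabla^2 S(X_1,X_2)$ algebraically, with no trace taken; the $\Box_P S$ appears only afterwards through the surviving $\mathrm{Ric}_g(V_1,V_2)$ and $R_g$ terms, which are already scalar in $S^{\#}$ and $R_g$. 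This does not affect the outcome.
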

\begin{proof}
Appealing to (\ref{WeylTensor}) and the Lemma \ref{CurvatureSplittings}, we start computing the first identity in (\ref{WeylSplitting}). First, notice that at each point the space of horizontal vectors is two dimensional. Thus, we can expand $\{X_i\}_{i=1}^4$ in an orthonormal basis for these two dimensional spaces, such us $\{\partial_t,\partial_r\}$, so that 
\begin{align*}
X_i&=X_i^{t}\partial_t + X_i^{r}\partial_r.
\end{align*}
Using these expansions, one can rewrite 
\begin{align*}
W(X_1,X_2,X_3,X_4)
&=W(X_1^{t}\partial_t + X_1^{r}\partial_r,X_2^{t}\partial_t + X_2^{r}\partial_r,X_3^{t}\partial_t + X_3^{r}\partial_r,X_4^{t}\partial_t + X_4^{r}\partial_r),\\
&=(X^{\flat}_1\wedge X^{\flat}_2)\otimes (X^{\flat}_3\wedge X^{\flat}_4)W(\partial_t,\partial_r ,\partial_t,\partial_r),
\end{align*}
Then, appealing to (\ref{WeylTensor}) and the Lemma \ref{CurvatureSplittings}, we can compute
\begin{align*}
W(\partial_t,\partial_r,\partial_t,\partial_r)
&= - \frac{1}{n-1}(\mathrm{Ric}(\partial_t,\partial_t)g(\partial_r,\partial_r) + \mathrm{Ric}(\partial_r,\partial_r)g(\partial_t,\partial_t)) + \frac{R_{g}}{n(n-1)}g(\partial_r,\partial_r)g(\partial_t,\partial_t),\\
&= S^{-1}(\nabla^2 S(\partial_t,\partial_t) - \nabla^2 S(\partial_r,\partial_r)) - \frac{R_{g}}{n(n-1)},\\
&=-S^{-1}\Box_{P}S - \frac{S^{-2}}{n}((n-2)(1-g_{P}(\nabla S,\nabla S))-2S\Box_{P}S),\\
&=\frac{n-2}{n}S^{-2}\left(g_{P}(\nabla S,\nabla S) - 1 - S\Box_{P}S \right).
\end{align*}
Therefore, we find
\begin{align}
W(X_1,X_2,X_3,X_4)=\frac{n-2}{n}S^{-2}\left(g_{P}(\nabla S,\nabla S) - 1 - S\Box_{P}S \right)(X^{\flat}_1\wedge X^{\flat}_2)\otimes (X^{\flat}_3\wedge X^{\flat}_4)
\end{align}

Concerning the second identity in (\ref{WeylSplitting}), using the orthogonality between horizontal and vertical vector fields, we have that
\begin{align*}
W(X_1,X_2,X_3,V)&=g(R(X_3,V)X_2,X_1) - \frac{1}{n-1}(\mathrm{Ric}(X_2,V)g(X_1,X_3) - \mathrm{Ric}(X_1,V)g(X_2,X_3)),
\end{align*}
Applying Lemma \ref{CurvatureSplittings} the Ricci terms are seen to vanish, and the first term gets transformed into
\begin{align}
W(X_1,X_2,X_3,V)&=S^{-1}\nabla^2S(X_2,X_3)g(V,X_1)=0
\end{align}

For the third identity in (\ref{WeylSplitting}), appealing to the same properties as above, we have that
\begin{align}
W(X_1,X_2,V_1,V_2)
&=g(R(V_1,V_2)X_2,X_1)=0,
\end{align}
where the last identity follows from the third identity in (\ref{CurvatureSplitting}).

For the fourth identity, using Lemma \ref{CurvatureSplittings} once more, we get
\begin{align*}
W(X_1,V_1,X_2,V_2)&=g(R(X_2,V_2)V_1,X_1) - \frac{1}{n-1}(\mathrm{Ric}(X_1,X_2)g(V_1,V_2) + \mathrm{Ric}(V_1,V_2)g(X_1,X_2)) \\
&+ \frac{R_{g}}{n(n-1)} g(V_1,V_2)g(X_1,X_2),\\
&=-Sg_{\mathbb{S}^{n-1}}(V_1,V_2)g(\nabla_{X_2}\mathrm{grad}S,X_1) + S^{-1}g(V_1,V_2)\nabla^2 S(X_1,X_2) \\
& - \frac{1}{n-1}(n-2-SS^{\#})g_{\mathbb{S}^{n-1}}(V_1,V_2)g(X_1,X_2)  \\
&+ \frac{S^{-2}\big((n-2)(1-g_{P}(\nabla S,\nabla S))-2S\Box_PS\big)}{n}g(V_1,V_2)g(X_1,X_2),\\
&=\frac{(n-2)}{n(n-1)}\left( g_{P}(\nabla S,\nabla S)) -1 - S\Box_PS\right)g_{\mathbb{S}^{n-1}}(V_1,V_2)g(X_1,X_2)
\end{align*}
where we have used that $S^2S^{\#}=S\Box_PS+(n-2)g_{P}(\nabla S,\nabla S)$ (which follows from (\ref{RicciSplittingNotation})), and the first line in the second identity vanishes from the definition of the Hessian.

Concerning the fifth identity in (\ref{WeylSplitting}), appealing again to the orthogonality of horizontal and vertical vectors, as well as the second identity in (\ref{RicciSplitting}), we have
\begin{align*}
W(X_1,V_1,V_2,V_3)
&=g(R(V_2,V_3)V_1,X_1)=0,
\end{align*}
where the last identity follows from (\ref{CurvatureSplitting}) since $R(V_2,V_3)V_1$ is vertical.

Finally, for the last identity in (\ref{WeylSplitting}), we have
\begin{align*}
W(V_1,V_2,V_3,V_4)&=g(R(V_3,V_4)V_2,V_1) -\frac{1}{n-1}\mathrm{Ric}_g\KN g(V_1,V_2,V_3,V_4) +\frac{R_g}{2n(n-1)}g\KN g(V_1,V_2,V_3,V_4)\\
&=S^2g_{\mathbb{S}^{n-1}}(R_{\mathbb{S}^{n-1}}(V_3,V_4)V_2,V_1) - \frac{S^2g_{P}(\nabla S,\nabla S)}{2}g_{\mathbb{S}^{n-1}}\KN g_{\mathbb{S}^{n-1}}(V_1,V_2,V_3,V_4) \\
& - \frac{S^2(n-2-S^2S^{\#})}{n-1}g_{\mathbb{S}^{n-1}}\KN g_{\mathbb{S}^{n-1}}(V_1,V_2,V_3,V_4) \\
&+\frac{S^{2}\big((n-2)(1-g_{P}(\nabla S,\nabla S)) -2S\Box_PS\big)}{2n}g_{\mathbb{S}^{n-1}}\KN g_{\mathbb{S}^{n-1}}(V_1,V_2,V_3,V_4)\\
&=S^2g_{\mathbb{S}^{n-1}}(R_{\mathbb{S}^{n-1}}(V_3,V_4)V_2,V_1) - \frac{S^2}{2}hg_{\mathbb{S}^{n-1}}\KN g_{\mathbb{S}^{n-1}}(V_1,V_2,V_3,V_4)
\end{align*}
where we have defined
\begin{align}\label{WeylSplitting.1}
h\doteq g_{P}(\nabla S,\nabla S) + \frac{2(n-2-S^{2}S^{\#})}{n-1} + \frac{2S\Box_PS + (n-2)g_{P}(\nabla S,\nabla S)  + 2-n}{n}
\end{align}
Furthermore, since $\mathbb{S}^{n-1}$ is a space of constant sectional curvature equal to one, we also know that
\begin{align*}
g_{\mathbb{S}^{n-1}}(R_{\mathbb{S}^{n-1}}(V_3,V_4)V_2,V_1)
&=\frac{1}{2}g_{\mathbb{S}^{n-1}}\KN g_{\mathbb{S}^{n-1}}(V_1,V_2,V_3,V_4)
\end{align*}
and therefore
\begin{align*}
W(V_1,V_2,V_3,V_4)&=\frac{S^2}{2}(1 - h)g_{\mathbb{S}^{n-1}}\KN g_{\mathbb{S}^{n-1}}(V_1,V_2,V_3,V_4).
\end{align*}

Using (\ref{WeylSplitting.1}), we can compute that
\begin{align*}
h-1
&=\frac{2}{n(n-1)}(g_P(\nabla S,\nabla S)  - S\Box_PS -1),
\end{align*}
which finally implies that
\begin{align}
W(V_1,V_2,V_3,V_4)&=-\frac{S^2(g_P(\nabla S,\nabla S) - S\Box_PS -1)}{n(n-1)}g_{\mathbb{S}^{n-1}}\KN g_{\mathbb{S}^{n-1}}(V_1,V_2,V_3,V_4)
\end{align}
\end{proof}

The above lemma gives us the following direct characterisation for \emph{conformal flatness} of the space-times $V_{k(t)}=P_{+}\times_{S_{k(t)}}\mathbb{S}^{n-1}$, which we shall be further exploited below.


\begin{cor}\label{CondFlatCoro}
Consider a BCCCM space-time given by $V^{n+1}_{k}\cong P_{+}\times_{S_{k}}\mathbb{S}^{n-1}$. Then, $V^{n+1}_{k}$ is conformally flat iff
\begin{align}\label{ConfFlatCondition}
S\partial^2_tS=\left(\partial_tS\right)^2.
\end{align}
\end{cor}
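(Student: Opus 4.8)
The plan is to read off from Lemma~\ref{WeylSplittingLemma} that every block of the Weyl tensor of $V^{n+1}_{k}$ is a fixed tensor multiplied by the single scalar function
\begin{align*}
\Phi \doteq g_{P}(\nabla S,\nabla S) - 1 - S\Box_{P}S
\end{align*}
on $P_{+}$. Indeed, the horizontal--horizontal, mixed, and vertical--vertical components displayed there equal, respectively, $\tfrac{n-2}{n}S^{-2}\Phi\,(X_1^{\flat}\wedge X_2^{\flat})\otimes(X_3^{\flat}\wedge X_4^{\flat})$, $\tfrac{n-2}{n(n-1)}\Phi\, g_{\mathbb{S}^{n-1}}(V_1,V_2)g(X_1,X_2)$, and $-\tfrac{S^{2}}{n(n-1)}\Phi\, g_{\mathbb{S}^{n-1}}\KN g_{\mathbb{S}^{n-1}}(V_1,\dots,V_4)$, while the remaining blocks vanish identically. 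Since $n\geq 3$, $S>0$ on $\{r>0\}$, and the tensor factors above are not identically zero (evaluate the first on $\partial_t,\partial_r$), the Weyl tensor vanishes identically if and only if $\Phi\equiv 0$. Because $\dim V^{n+1}=n+1\geq 4$ (recall $n\geq 3$), Theorem~\ref{ConformallFlatnessTHM} then reduces conformal flatness to the scalar identity $\Phi\equiv 0$.

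The second and main step is to simplify $\Phi$ using that, for each fixed $t$, the profile $r\mapsto S_{k(t)}(r)$ solves the Jacobi equation~(\ref{sn-functions}) in the $r$-variable, i.e.\ $\partial_r^2S=-k(t)S$ with $S(t,0)=0$ and $\partial_rS(t,0)=1$. Writing $g_{P}(\nabla S,\nabla S)=-(\partial_tS)^2+(\partial_rS)^2$ and $\Box_{P}S=-\partial_t^2S+\partial_r^2S$, and substituting the Jacobi relation, one obtains
\begin{align*}
\Phi = S\,\partial_t^2 S - (\partial_t S)^2 + \bigl((\partial_r S)^2 + k(t)S^2 - 1\bigr).
\end{align*}
The bracketed term is identically zero: multiplying $\partial_r^2S+k(t)S=0$ by $2\,\partial_rS$ gives $\partial_r\bigl((\partial_rS)^2+k(t)S^2\bigr)=0$, and evaluating at $r=0$ with the initial conditions yields the first integral $(\partial_rS)^2+k(t)S^2\equiv 1$. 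Hence $\Phi = S\,\partial_t^2S-(\partial_tS)^2$, and conformal flatness is equivalent to $S\,\partial_t^2S=(\partial_tS)^2$, as claimed.

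I do not expect a serious obstacle: everything rests on the algebraic reduction afforded by Lemma~\ref{WeylSplittingLemma} together with the elementary first-integral identity for the Jacobi equation. The only points needing a line of care are that the tensor factors appearing in Lemma~\ref{WeylSplittingLemma} are genuinely nonvanishing, so that vanishing of the Weyl tensor is equivalent to (and not merely implied by) vanishing of $\Phi$, and that $S>0$ on the region under consideration so the $S^{\pm 2}$ prefactors are harmless; by continuity of the Weyl tensor, verifying the identity on the dense open set where the fibre is nondegenerate suffices.
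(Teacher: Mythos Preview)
Your proposal is correct and follows essentially the same route as the paper: reduce conformal flatness, via Lemma~\ref{WeylSplittingLemma} and Theorem~\ref{ConformallFlatnessTHM}, to the vanishing of the scalar $\Phi=g_P(\nabla S,\nabla S)-1-S\Box_PS$, and then simplify $\Phi$ using the Jacobi equation and its first integral $(\partial_rS)^2+k(t)S^2=1$. You are slightly more explicit than the paper in checking that the tensor factors in the Weyl blocks are nonvanishing (so that $W=0\Leftrightarrow\Phi=0$ is genuinely an equivalence) and in deriving the first integral, which the paper simply cites from Section~\ref{BCCCM}.
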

\begin{proof}
From Lemma \ref{WeylSplittingLemma}, one has that $V_{k(t)}$ is conformally flat iff
\begin{align}\label{ConfFlat.1}
g_P(\nabla S,\nabla S) - S\Box_PS -1=0.
\end{align}

We can rewrite (\ref{ConfFlat.1}) noticing that $dS=-\partial_tSdt + \partial_rSdr$, which implies
\begin{align*}
g_P(\nabla S,\nabla S)=-(\partial_tS)^2 + (\partial_rS)^2,\;\; S\Box_PS=-S\partial_t^2S + S\partial^2_rS.
\end{align*}
But also, from its definition, we know that $S\partial^2_rS=-k(t)S^2$ and that $(\partial_rS)^2+kS^2=1$ (see Section \ref{BCCCM}). Therefore,
\begin{align*}
g_P(\nabla S,\nabla S) - S\Box_PS -1&=-(\partial_tS)^2 + (\partial_rS)^2 +S\partial_t^2S - S\partial^2_rS-1,\\
&=-(\partial_tS)^2 + S\partial_t^2S  + (\partial_rS)^2+ kS^2-1,\\
&=-(\partial_tS)^2 + S\partial_t^2S.
\end{align*}
The above together with (\ref{ConfFlat.1}) implies our claim.
\end{proof}

The above corollary can be now used to explicitly show that (generically) the BCCCM family of cosmological space-times is not conformally flat. This follows from the following theorem.

\begin{theo}\label{BCCCMConfFlatThm}
A BCCCM space-time given by $V^{n+1}_{k}\cong P_{+}\times_{S_{k}}\mathbb{S}^{n-1}$ is conformally flat iff $k:I\subset \mathbb{R}\to \mathbb{R}$ is a constant function. 
\end{theo}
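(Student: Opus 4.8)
The plan is to start from the clean differential criterion provided by Corollary \ref{CondFlatCoro}, namely that $V^{n+1}_k$ is conformally flat if and only if $S\,\partial_t^2 S = (\partial_t S)^2$ holds identically on $P_+$. The first step is the easy direction: if $k$ is constant, then $S_{k(t)}(r) = S_k(r)$ is independent of $t$, so $\partial_t S \equiv 0$ and the identity holds trivially; equivalently, one recognises the BCCCM with constant $k$ as the warped product $-I\times M^n_k$, which is a FLRW space-time and hence conformally flat by the results recalled in the appendix and Theorem \ref{ConformallFlatnessTHM}. So the content is entirely in the converse.

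For the converse, I would fix $r>0$ and read $S\,\partial_t^2 S = (\partial_t S)^2$ as an ODE in $t$ for the function $u(t)\doteq S_{k(t)}(r)$. This is the equation $u\,u'' = (u')^2$, i.e. $(\log u)'' = u''/u - (u'/u)^2 = 0$, so $\log u$ is affine in $t$ and therefore $u(t) = S_{k(t)}(r) = A(r)e^{B(r)t}$ for functions $A(r)>0$ and $B(r)$ that may a priori depend on $r$. Evaluating at $t=0$ gives $A(r) = S_{k(0)}(r)$, so for every $t\in I$ and every $r>0$ we get $S_{k(t)}(r) = S_{k(0)}(r)\,e^{B(r)t}$. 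The remaining task is to show this forces $B\equiv 0$, which then gives $S_{k(t)}(r) = S_{k(0)}(r)$ for all $t$, and hence (reading off from $S\,\partial_r^2 S = -k(t)S^2$, or from $(\partial_r S)^2 + k S^2 = 1$ at any fixed $r$) that $k(t) = k(0)$ is constant.

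To pin down $B$, I would exploit the boundary/structural constraints on $S_{k(t)}$ near $r=0$: from the ODE \eqref{sn-functions} one has $S_{k(t)}(r) = r - \tfrac{k(t)}{6}r^3 + O(r^5)$ as $r\to 0$, uniformly enough in $t$ to differentiate term by term. Plugging the factorised form $S_{k(t)}(r) = S_{k(0)}(r)e^{B(r)t}$ into this expansion: the leading term $r$ on both sides forces $e^{B(r)t}\to 1$ as $r\to 0$ for each $t$, i.e. $B(r)\to 0$; matching the $r^3$ coefficients then gives a relation of the form $-k(t)/6 = -k(0)/6 + (\text{coefficient from }B(r)\text{ expansion})$. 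Alternatively — and more robustly — I would use the constraint $(\partial_r S_{k(t)})^2 + k(t)S_{k(t)}^2 = 1$, which holds for all $r$ and all $t$: substituting $S_{k(t)} = S_{k(0)}e^{Bt}$ and its $r$-derivative $\partial_r S_{k(t)} = (S_{k(0)}' + S_{k(0)}B')e^{Bt}$, and using that the same identity holds at $t=0$, yields after dividing by $e^{2Bt}$ an equation that must hold for all $t$; expanding in $t$ (or just comparing two values of $t$) isolates $B'$ and $B$ and forces $B\equiv 0$. The main obstacle I anticipate is exactly this last elimination step: one must be careful that $B(r)$ is genuinely a function of $r$ alone (justified by the ODE-in-$t$ argument at each fixed $r$, given smoothness of $S$ in $(t,r)$ from \cite[Lemma 2.1]{MiguelCosmologicalST}), and then combine the two first-order structural identities satisfied by $S_{k(t)}$ in the $r$-variable to rule out any nonzero exponential rate; handling the $k>0$ case, where $S_{k(t)}$ also has the far boundary $r_\infty = \pi/\sqrt{k(t)}$, requires noting that $S_{k(t)}(r_\infty)=0$ would conflict with $S_{k(0)}(r)e^{B(r)t}>0$ on the common domain unless the curvature — and hence that boundary — does not move, which again pushes toward $k$ constant.
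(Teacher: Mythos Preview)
Your reduction to the ODE $u u'' = (u')^2$ in $t$ at fixed $r$, hence $(\log u)'' = 0$ and $S_{k(t)}(r) = A(r)e^{B(r)t}$, is exactly the content of the paper's integration step (the paper phrases it as $\partial_t\log|S'| = \partial_t\log S$ and integrates once, which amounts to the same thing). From that point the two arguments diverge.

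The paper computes $\partial_t S/S$ explicitly from the closed formulas for $S_{k(t)}$, picks a reference time $t_0$ with $k,k'\neq 0$ nearby, forms the ratio of $(\partial_t S/S)$ at $t$ and at $t_0$, and observes that this ratio must equal $\frac{k'_0}{k_0}\frac{k}{k'}$, which is independent of $r$; Taylor expanding the left-hand side near $r=0$ then shows its $r^2$-coefficient is proportional to $k(t)-k(t_0)$, forcing $k=k_0$ on a neighbourhood and yielding a contradiction.

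Your route---feeding $S=A(r)e^{B(r)t}$ back into the Jacobi constraint in the $r$-variable---is a genuinely different and perfectly legitimate way to close the argument, and it avoids the explicit $\partial_t S$ computations that the paper carries out case by case. Two points need tightening, though. First, your $r$-derivative is missing a factor of $t$: it should read $\partial_r S_{k(t)} = \bigl(S_{k(0)}' + S_{k(0)}B'\,t\bigr)e^{Bt}$, and that $t$ matters for the subsequent separation. Second, matching only the $r^3$-coefficients in the Taylor expansion near $r=0$ is not enough: it yields $k(t)=k(0)-6b_2 t$ linear (with $b_2$ the $r^2$-coefficient of the even function $B$), not constant. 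You need one more order, or---cleaner---plug $S=Ae^{Bt}$ into the second-order equation $\partial_r^2 S + k(t)S=0$, which gives
\[
\bigl(2A'B' + AB''\bigr)\,t + A(B')^2\,t^2 \;=\; -\bigl(k(t)-k(0)\bigr)A.
\]
Since the left-hand side has $r$-dependent coefficients and the right-hand side is a function of $t$ times $A(r)$, both $(B')^2$ and $(2A'B'+AB'')/A$ must be constant in $r$; the first forces $B'$ constant, and then the second forces $A'/A$ constant, which fails for $A=S_{k(0)}$ unless $B'=0$. Hence $B\equiv 0$ (from $B(0)=0$) and $k$ is constant. That is the clean closing you were gesturing at with the first-integral argument; your final remark about the moving boundary $r_\infty=\pi/\sqrt{k(t)}$ in the $k>0$ case is then unnecessary.
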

\begin{proof}
Clearly, if the sectional curvature $k$ is a constant function, then (\ref{ConfFlatCondition}) is satisfied and the result follows from Corollary \ref{CondFlatCoro}. To see the converse, let us assume that (\ref{ConfFlatCondition}) is satisfied. One can directly compute that 
\begin{align*}
\frac{\partial_tS}{S}&=\begin{cases}
\frac{1}{2}\frac{k'}{k}\left(\sqrt{-k}r\frac{\cosh(\sqrt{-k}r)}{\sinh(\sqrt{-k}r)} - 1 \right), \text{ if  } k<0,\\
-\frac{1}{6}k'r^2, \text{ if } k=0,\\
\frac{1}{2}\frac{k'}{k}\left(\sqrt{k}r\frac{\cos(\sqrt{k}r)}{\sin(\sqrt{k}r)} - 1 \right), \text{ if  } k>0
\end{cases}
\end{align*}
If $k\not\equiv cte$, then there is some $t_0$ where the functions $k,k'\neq 0$ in a neighbourhood of it. Notice that in this neighbourhood, $\partial_tS(t,r)=0$ iff
\begin{align}\label{ConfromalFlatRigidity.0}
\begin{split}
0=\begin{cases}
\tanh(\sqrt{-k(t)}r) - \sqrt{-k(t)}r , \text{ if  } k<0,\\
\tan(\sqrt{k(t)}r) - \sqrt{k(t)}r, \text{ if  } k>0
\end{cases}
\end{split}
\end{align}
In both of the above cases, the equation can only be satisfied for $\sqrt{\pm k(t)}r$ equal to distinguished fixed values. That is, we must have $\sqrt{\pm k(t)}r=c$ for some constant $c$. Since $k'(t)\neq 0$ in this interval, if $r_0\neq 0$,\footnote{Notice that for $r=0$ (\ref{ConfromalFlatRigidity.0}) is clearly satisfied for all $t$.} there must be some time $t_0$ where this equality fails at $(t_0,r_0)$, and restricting to a small enough neighbourhood of $(t_0,r_0)$ such equality must fail for all $t$ in it. Therefore, we can assume that, if $k\not\equiv cte$, for any $r_0\neq 0$, there is some $t_0\in I$ and a neighbourhood $\mathcal{I}_0$ of $(t_0,r_0)$ such that $k,k',\partial_tS\neq 0$ on it. Thus, for $(t,r)\in \mathcal{I}_0$, $t>t_0$, we can rewrite (\ref{ConfFlatCondition}) as
\begin{align*}
\partial_t\log|S'|=\partial_t\log(S),
\end{align*}
where we have denoted $S'=\partial_tS$. The above implies that
\begin{align}\label{ConformalFlatRigity.1}
\frac{S'(t,r)}{S(t,r)}=\frac{S'(t_0,r)}{S(t_0,r)}
\end{align}

Assuming that $k<0$ on $\mathcal{I}_0$ one obtains that the above is equivalent to
\begin{align*}
\frac{\sqrt{-k}r\frac{\cosh(\sqrt{-k}r)}{\sinh(\sqrt{-k}r)} - 1}{\sqrt{-k_0}r\frac{\cosh(\sqrt{-k_0}r)}{\sinh(\sqrt{-k_0}r)} - 1}=\frac{k'_0}{k_0}\frac{k}{k'},
\end{align*}
where the right-hand side is independent of $r$. Let us denote this right-hand side by $u(t)$ and, for fixed $t\in \mathcal{I}_0$, let us examine the left-hand side near $r=0$ as follows\footnote{Notice that, for $t$ fixed in a neighbourhood of $t_0$, the function on the left-hand side extends to a well-defined continuous function of $r$ in a neighbourhood of $r=0$. In particular, it equals $\frac{k(t)}{k_0}$ at $r=0$.}
\begin{align*}
u(t)
&=\frac{\frac{\sinh(\sqrt{-k_0}r)}{\sqrt{-k_0}r}}{\frac{\sinh(\sqrt{-k}r)}{\sqrt{-k}r}}\frac{\cosh(\sqrt{-k}r) - \frac{\sinh(\sqrt{-k}r)}{\sqrt{-k}r}}{\cosh(\sqrt{-k_0}r) - \frac{\sinh(\sqrt{-k_0}r)}{\sqrt{-k_0}r}}
=\frac{k}{k_0} + \frac{1}{15}\left( k- k_0\right)\frac{k}{k_0}r^2 + O(r^4),
\end{align*}
where the above result is obtained by Taylor expansion of the associated hyperbolic functions up to fourth and sixth order respectively for the cosine and sine. We can then see that a necessary condition for the right-hand side in the above expression to be independent of $r$ is that the coefficient in front of the $r^2$-term must vanish. That is, we must have that $k(t)=k_0$ on $\mathcal{I}_0$, but this contradicts our initial hypothesis that $k'\neq 0$ on $\mathcal{I}_0$. We therefore conclude that no such neighbourhood $\mathcal{I}_0$ can exist for $r_0$ in a neighbourhood of $r=0$, which implies that $k$ must be a constant by our initial discussion. The same line of reasoning also works after (\ref{ConformalFlatRigity.1}) for the case $k>0$, which concludes the proof.
\end{proof}

The above theorem allows us to finally conclude that the BCCCM family of space-times $V^{n+1}_{k(t)}$ are non-isometric to any FLRW, unless they are trivial. That is:

\begin{cor}
Consider a BCCCM space-time given by $V^{n+1}_{k}\cong P_{+}\times_{S_{k}}\mathbb{S}^{n-1}$. Then, $V^{n+1}_{k(t)}$ is locally isometric to a FLRW space-time iff $k$ is constant.
\end{cor}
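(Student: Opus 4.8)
The plan is to deduce this corollary almost immediately from the conformal characterisation established in Theorem \ref{BCCCMConfFlatThm}, combined with two standard facts: that every FLRW space-time is conformally flat (this is proven in the Appendix), and that conformal flatness, as formulated in Definition \ref{ConformallFlatness}, is a purely local property and hence invariant under local isometries. With these in hand the biconditional splits into its two implications, neither of which carries any new analytic difficulty.

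For the direction ``$k$ constant $\Rightarrow$ locally isometric to a FLRW space-time'', I would first observe that if $k\equiv k_0$ then $S_{k(t)}(r)=S_{k_0}(r)$ is independent of $t$, so the metric (\ref{MiguelMetric}) becomes $g=-dt^2+\big(dr^2+S_{k_0}^2(r)g_{\mathbb{S}^{n-1}}\big)=-dt^2+\gamma_{k_0}$, i.e.\ the warped product $V^{n+1}_{k_0}\cong I\times_{f}M^n_{k_0}$ with constant warping function $f\equiv 1$; this is exactly the rigid case $-I\times M^n_k$ appearing in Theorem \ref{ThmAIntro} (and in items (ii)--(iii) of Theorem \ref{ONthm} in the trivial instance). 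It then only remains to note that, after rescaling the fibre metric by the constant $|k_0|$ when $k_0\neq 0$, this is literally a FLRW metric $-dt^2+a^2(t)\gamma_{\epsilon}$ with constant scale factor $a\equiv|k_0|^{-1/2}$ and $\epsilon=\mathrm{sign}(k_0)$ (respectively $a\equiv 1$, $\epsilon=0$ when $k_0=0$), so that $V^{n+1}_{k_0}$ sits inside the FLRW family.

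For the converse, suppose $V^{n+1}_{k}$ is locally isometric to some FLRW space-time. Since $n\geq 3$ the total dimension is $n+1\geq 4$, and every FLRW space-time is conformally flat; because conformal flatness in the sense of Definition \ref{ConformallFlatness} is a local property preserved by local isometries, $V^{n+1}_{k}$ is conformally flat as well. Theorem \ref{BCCCMConfFlatThm} then forces $k:I\to\mathbb{R}$ to be a constant function, which is what we wanted.

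I do not expect a genuine obstacle here: the whole analytic content --- the rigidity of the ODE condition (\ref{ConfFlatCondition}) on the curvature function $k$ --- has already been absorbed into Theorem \ref{BCCCMConfFlatThm} via Lemma \ref{WeylSplittingLemma} and Corollary \ref{CondFlatCoro}. The only two points deserving a line of care are the invariance of conformal flatness under local isometries (so that the hypothesis ``locally isometric to a FLRW'' can be converted into ``conformally flat'') and the verification that the constant-curvature case really lands inside the FLRW family after the curvature normalisation; both are routine.
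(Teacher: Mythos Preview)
Your proposal is correct and follows essentially the same argument as the paper's own proof: the forward direction is dismissed as obvious (you spell out the identification with $-I\times M^n_{k_0}$ more carefully), and the converse invokes conformal flatness of FLRW together with Theorem \ref{BCCCMConfFlatThm}. The only addition you make explicit is the local-isometry invariance of conformal flatness, which the paper leaves implicit.
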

\begin{proof}
If $k$ is constant the result is obvious. The converse follows because FLRW space-times are known to be conformally flat (see, for instance, Theorem \ref{ConfFlatWP}), but if $k$ is not a constant, the space-times $V^{n+1}_{k(t)}$ are not conformally flat due to the previous theorem.
\end{proof}

One can use the the above results to establish that Definition \ref{CBIsotropy} is strictly weaker than Definition \ref{WaldIsotropy}, since due to Theorem \ref{ONthm}, a cosmological space-time with $M^n_t$ simply connected and $g_t$ complete which is isotropic under Definition \ref{WaldIsotropy} must be in the FLRW family. Therefore, the well-known rigidity associated to isotropic cosmological space-times is only true under Definition \ref{WaldIsotropy}.

\subsection*{Acknowledgements}
The author would like to thank professor Miguel Sánchez (University of Granada, Spain), professor Carlos A. Romero (Federal University of Paraíba, Brazil) and professor Fábio Dahia (Federal University of Paraíba, Brazil) for discussions about the object of this paper, several useful suggestions and comments, as well as the reading of a preliminary version of it. Also, the author would like to thank the Alexander von Humboldt Foundation and the Fundação Cearense de Apoio ao Desenvolvimento Científico e Tecnológico (FUNCAP) for partial financial support during the writing of this paper.

\appendix
\markboth{Appendix}{Appendix}
\addcontentsline{toc}{section}{Appendices}
\renewcommand{\thesubsection}{\Alph{subsection}}
\numberwithin{equation}{section}
\numberwithin{theo}{section}
\numberwithin{remark}{section}

\section{Appendix}

\subsection*{Conformally Flat Warped-Product Space-Times}

The objective of this appendix is to prove Theorem \ref{ConfFlatWP} below. We present here this result for the sake of completeness and because a geometric explicit proof of the conformal flatness of the FLRW family is not so easy to find in the literature. 
With this in mind, we first need the follow Lemma.

\begin{lem}\label{ConfFlatWPSTLemma}
Let $(M^n,\gamma)$ be a connected $n$-dimensional Riemannian manifold and consider the warped-product $I\times_{S}M^n$, $I\subset \mathbb{R}$ and open set, equipped with the metric 
\begin{align}
g=-dt^2+S^2(t)\gamma.
\end{align}
Then, if $U,V,W$ are vertical vector fields, the following decompositions follow for the curvature and Ricci tensor
\begin{align}\label{WPCurvatureSplittingAP}
\begin{split}
R(V,\partial_t)\partial_t&=-S^{-1}\nabla^2S(\partial_t,\partial_t)V,\\
R(\partial_t,\partial_t)V&=R(V,W)\partial_t=0,\\
R(\partial_t,V)W&=-S\gamma(V,W)\nabla_{\partial_t}\mathrm{grad}S,\\
R(V,W)U&=R_\gamma(V,W)U+|\nabla S|^2(\gamma(U,V)W-\gamma(U,W)V),
\end{split}
\end{align}
and
\begin{align}\label{WPRicciSplittingAP}
\begin{split}
\mathrm{Ric}(\partial_t,\partial_t)&=-\frac{n}{S}\nabla^2S(\partial_t,\partial_t)=-n\frac{S''}{S},\\
\mathrm{Ric}(V,\partial_t)&=0,\\
\mathrm{Ric}(V,W)&=\mathrm{Ric}_{\gamma}(V,W)-S^2S^{\#}\gamma(V,W),
\end{split}
\end{align}
where $S^{\#}=-\frac{S''}{S}-(n-1)\left( \frac{S'}{S}\right)^2$. Furthermore, the scalar curvature can be written as
\begin{align}\label{WPScalCurvSplittingAP}
R_g=S^{-2}R_{\gamma} + n(n-1)\left(\frac{S'}{S}\right)^2 + 2n\frac{S''}{S}.
\end{align}
\end{lem}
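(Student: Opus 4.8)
The plan is to derive every identity in the statement from O'Neill's general warped-product curvature formulas, specialised to the situation in which the base is the one-dimensional flat Lorentzian manifold $(I,-dt^2)$. Indeed, $I\times_S M^n$ is exactly the warped product $B\times_S F$ with $B=(I,-dt^2)$ and $F=(M^n,\gamma)$, the warping function $S$ depending only on the base coordinate $t$. With this identification the decompositions (\ref{WPCurvatureSplittingAP}) are precisely \cite[Chapter 7, Proposition 42]{ONeillBook} --- after reconciling O'Neill's curvature-sign convention with the one fixed in Section \ref{Preliminaries}, which amounts to a global sign --- and (\ref{WPRicciSplittingAP}) is either read off from \cite[Chapter 7, Corollary 43]{ONeillBook} or obtained by contracting (\ref{WPCurvatureSplittingAP}); this is exactly how Lemma \ref{CurvatureSplittings} was handled in the main text.

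The first step is to record the simplifications forced by a one-dimensional flat base. The Levi-Civita connection of $(I,-dt^2)$ has vanishing Christoffel symbols and $\partial_t$ is geodesic in the warped product, so the Hessian of $S$ on $B$ reduces to $\nabla^2S(\partial_t,\partial_t)=\partial_t^2S=S''$ (hence the first line of (\ref{WPCurvatureSplittingAP}) reads $R(V,\partial_t)\partial_t=-(S''/S)V$), the base curvature vanishes identically, and $\mathrm{grad}\,S=-S'\partial_t$, so that $\nabla_{\partial_t}\mathrm{grad}\,S=-S''\partial_t$ and $g(\nabla S,\nabla S)=-(S')^2$. The quantity denoted $|\nabla S|^2$ in the fourth line of (\ref{WPCurvatureSplittingAP}) is to be read as this base inner product $g(\nabla S,\nabla S)=-(S')^2$ --- negative because $\partial_t$ is timelike --- in complete analogy with the use of $g(\nabla S,\nabla S)$ in Lemma \ref{CurvatureSplittings}. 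Substituting these, together with $g(U,V)=S^2\gamma(U,V)$ for vertical fields, into O'Neill's five curvature cases yields (\ref{WPCurvatureSplittingAP}); the vanishings $R(\partial_t,\partial_t)V=R(V,W)\partial_t=0$ are immediate.

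The second step is to contract (\ref{WPCurvatureSplittingAP}). Tracing over a $g$-orthonormal frame $\{\partial_t,E_1,\dots,E_n\}$ with the $E_a$ spacelike and vertical --- so that $\epsilon_0=g(\partial_t,\partial_t)=-1$, $\epsilon_a=+1$, and $\tilde E_a\doteq SE_a$ is a $\gamma$-orthonormal frame on the fibre --- produces $\mathrm{Ric}(\partial_t,\partial_t)=-nS''/S$ from the $n$ contributions $R(E_a,\partial_t)\partial_t=-(S''/S)E_a$; produces $\mathrm{Ric}(V,\partial_t)=0$ because $R(\partial_t,\partial_t)V=0$ while $R(E_a,\partial_t)V$ is horizontal and hence $g$-orthogonal to every $E_a$; and produces $\mathrm{Ric}(V,W)=\mathrm{Ric}_\gamma(V,W)-S^2S^{\#}\gamma(V,W)$ after inserting the fibre traces $\sum_a\gamma(R_\gamma(\tilde E_a,V)W,\tilde E_a)=\mathrm{Ric}_\gamma(V,W)$ and $\sum_a\gamma(V,\tilde E_a)\gamma(W,\tilde E_a)=\gamma(V,W)$, with $S^{\#}=-S''/S-(n-1)(S'/S)^2$ being exactly (\ref{RicciSplittingNotation}) specialised to $\Box_B=-\partial_t^2$, base gradient norm $g(\nabla S,\nabla S)=-(S')^2$, and an $n$-dimensional fibre (so the coefficient $n-2$ there becomes $n-1$). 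A final contraction $R_g=-\mathrm{Ric}(\partial_t,\partial_t)+\sum_a\mathrm{Ric}(E_a,E_a)$, using $\sum_a\mathrm{Ric}_\gamma(E_a,E_a)=S^{-2}R_\gamma$ and $\sum_a\gamma(E_a,E_a)=nS^{-2}$, then yields (\ref{WPScalCurvSplittingAP}) after substituting $S^{\#}$.

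I do not expect any conceptual obstacle: the whole lemma is a bookkeeping exercise. The places that demand care --- and the only spots where an error could slip in --- are the global sign flip between O'Neill's and the paper's curvature conventions, the systematic rescaling $g(U,V)=S^2\gamma(U,V)$ when passing between $g$-orthonormal and $\gamma$-orthonormal vertical frames, the timelike sign $\epsilon_0=-1$ entering every trace, and the reading of $|\nabla S|^2$ as the (negative) quantity $g(\nabla S,\nabla S)=-(S')^2$. All of these are dealt with exactly as in the proof of Lemma \ref{CurvatureSplittings}.
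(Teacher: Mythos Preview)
Your proposal is correct and follows essentially the same approach as the paper: invoke O'Neill's warped-product curvature and Ricci formulas (Proposition 42 and Corollary 43 of Chapter 7) for (\ref{WPCurvatureSplittingAP})--(\ref{WPRicciSplittingAP}), then obtain (\ref{WPScalCurvSplittingAP}) by tracing against a $g$-orthonormal frame $\{\partial_t,E_1,\dots,E_n\}$ and using that $\{SE_i\}$ is $\gamma$-orthonormal on the fibre. If anything, you have spelled out more of the bookkeeping (sign conventions, the one-dimensional base specialisations, the reading of $|\nabla S|^2$) than the paper's proof does.
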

\begin{proof}
The relations (\ref{WPCurvatureSplittingAP}) and (\ref{WPRicciSplittingAP}) follow directly from \cite[Chapter 9, Proposition 42 and Corollary 43]{ONeillBook}. In the case of (\ref{WPScalCurvSplittingAP}), considering a local orthonormal frame $\{E_{\alpha}\}_{\alpha=0}^n$, with $E_0=\partial_t$, it follows that
\begin{align*}
R_g&=\sum_{\alpha=0}^n\epsilon_{\alpha}\mathrm{Ric}_g(E_{\alpha},E_{\alpha})=-\mathrm{Ric}_g(\partial_t,\partial_t) + \sum_{i=1}^n\mathrm{Ric}_g(E_i,E_i),\\
&=n\frac{S''}{S} + \sum_{i=1}^n\mathrm{Ric}_{\gamma}(E_i,E_i) - S^2S^{\#}\sum_{i=1}^n\gamma(E_i,E_i)
\end{align*}
Since $\{SE_i\}_{i=1}^n$ is a local orthonormal frame on $(M,\gamma)$, we get
\begin{align*}
R_g&=S^{-2}R_{\gamma}-nS^{\#}+n\frac{S''}{S}=S^{-2}R_{\gamma} +n(n-1)\left( \frac{S'}{S}\right)^2 + 2n\frac{S''}{S}.
\end{align*}
\end{proof}

We can now establish the following result:
\begin{theo}\label{ConfFlatWP}
Let $M^n_{\gamma}\doteq (M^n,\gamma)$ be an $n$-dimensional Riemannian manifold and consider the warped-product $I\times_{S}M^n_{\gamma}$, $I\subset \mathbb{R}$ and open set. 
If $(M^n,\gamma)$ is Einstein and conformally flat, then $I\times_{S}M^n_{\gamma}$ is conformally flat.
\end{theo}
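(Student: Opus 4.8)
The plan is to reduce the statement to the case of a fibre of constant sectional curvature and then to compute the Weyl tensor of the warped product directly, component by component, using the splittings of Lemma \ref{ConfFlatWPSTLemma} together with Theorem \ref{ConformallFlatnessTHM}. For the reduction, I would first note that a connected Einstein manifold $(M^n,\gamma)$, $n\geq 3$, which is conformally flat has constant sectional curvature: for $n\geq4$ this follows from Theorem \ref{ConformallFlatnessTHM}, since vanishing of the Weyl tensor of $\gamma$ forces the curvature operator to coincide with its Ricci part, and the Einstein condition -- together with Schur's lemma, which makes $R_\gamma$ constant -- then yields $R_\gamma(V,W)U=\kappa(\gamma(W,U)V-\gamma(V,U)W)$ with $\kappa=R_\gamma/(n(n-1))$; for $n=3$ conformal flatness carries no information, but the Einstein condition alone gives constant curvature (again by Schur). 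Since conformal flatness is a purely local property (Definition \ref{ConformallFlatness}), one may pass to connected neighbourhoods at will, so it suffices to prove that $I\times_S M^n_\gamma$ is conformally flat when $\gamma$ has constant sectional curvature $\kappa$, so that $\mathrm{Ric}_\gamma=(n-1)\kappa\,\gamma$ and $R_\gamma=n(n-1)\kappa$.

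Next I would compute the Weyl tensor of $g=-dt^2+S^2(t)\gamma$. Because the base $I$ is one-dimensional, $\partial_t$ can occur at most once in each of the two skew pairs of an argument of $W$, which leaves exactly three types of component to check: all four arguments vertical; exactly one $\partial_t$, e.g. $W(\partial_t,V_1,V_2,V_3)$; and one $\partial_t$ in each pair, $W(\partial_t,V_1,\partial_t,V_2)$. Into the formula (\ref{WeylTensor}) I would feed the curvature and Ricci splittings (\ref{WPCurvatureSplittingAP})--(\ref{WPRicciSplittingAP}), the scalar curvature (\ref{WPScalCurvSplittingAP}), and the elementary Lorentzian identities $\mathrm{grad}\,S=-S'\partial_t$, $|\nabla S|^2=-(S')^2$, $\nabla^2S(\partial_t,\partial_t)=S''$, $\nabla_{\partial_t}\mathrm{grad}\,S=-S''\partial_t$, and the fact that $g$ restricts to $S^2\gamma$ on vertical vectors. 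The one-$\partial_t$ component vanishes at once: $R(V_2,V_3)V_1$ is vertical, so $g(R(V_2,V_3)V_1,\partial_t)=0$, while every Kulkarni--Nomizu term carries a factor $g(\partial_t,V_i)=0$ or a mixed Ricci component $\mathrm{Ric}(\partial_t,V_i)=0$. For the all-vertical component the computation parallels the one in Lemma \ref{WeylSplittingLemma}: constant sectional curvature of $\gamma$ turns $W(V_1,V_2,V_3,V_4)$ into a scalar multiple of $\gamma\KN\gamma$, and substituting $S^{\#}$ and $R_g$ from Lemma \ref{ConfFlatWPSTLemma} makes that scalar collapse to zero. For the mixed component $W(\partial_t,V_1,\partial_t,V_2)$ the three contributions reduce to scalar multiples of $g(V_1,V_2)$, namely $-S''/S$ from the curvature term, $-\frac{1}{n-1}(-nS''/S-(n-1)\kappa S^{-2}+S^{\#})$ from the Ricci term and $-R_g/(n(n-1))$ from the scalar term; inserting $S^{\#}=-S''/S-(n-1)(S'/S)^2$, $R_g=S^{-2}R_\gamma+n(n-1)(S'/S)^2+2nS''/S$ and $R_\gamma=n(n-1)\kappa$ then makes the coefficients of $S''/S$, of $(S'/S)^2$ and of $\kappa S^{-2}$ cancel identically. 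Hence $W\equiv0$, and since $\dim(I\times_S M^n_\gamma)=n+1\geq4$, Theorem \ref{ConformallFlatnessTHM} gives the conformal flatness of $I\times_S M^n_\gamma$.

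The expected main obstacle is computational rather than conceptual: the cancellations in the all-vertical and mixed components are not term-by-term, so the whole argument hinges on carrying the Lorentzian signs (in $\mathrm{grad}\,S$, $|\nabla S|^2$, $S^{\#}$ and $R_g$) through without error and on bookkeeping the Kulkarni--Nomizu expansions correctly, since a single sign slip destroys the final identity. A secondary point to state with care is the reduction step in dimension $n=3$, where one must invoke the Einstein hypothesis directly via Schur's lemma, conformal flatness being vacuous there.
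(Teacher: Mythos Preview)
Your proposal is correct and follows essentially the same route as the paper: a direct verification that all three types of Weyl components vanish using the splittings of Lemma~\ref{ConfFlatWPSTLemma} and then an appeal to Theorem~\ref{ConformallFlatnessTHM}. The only organizational difference is that you first reduce to constant sectional curvature of $\gamma$ (which is precisely the content of the Proposition the paper states \emph{after} Theorem~\ref{ConfFlatWP}), whereas the paper carries the hypotheses $W_\gamma=0$ and $\mathrm{Ric}_\gamma=c\gamma$ through the all-vertical computation directly; the two are equivalent and the algebra is the same.
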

\begin{proof}
In order to establish this result, we need to prove that the Weyl tensor of the warped product vanishes identically. Thus, let us first consider $U,V$ to be two arbitrary vertical vector fields and compute
\begin{align*}
W(\partial_t,U,\partial_t,V)&=g(R(\partial_t,V)U,\partial_t)-\frac{1}{n-1}\mathrm{Ric}_g\KN g(\partial_t,U,\partial_t,V) + \frac{R_g}{2n(n-1)}g\KN g(\partial_t,U,\partial_t,V).
\end{align*}
Using (\ref{WPCurvatureSplittingAP}) and the orthogonoality of horizontal and vertical fields, the following relations follow:
\begin{align*}
g(R(\partial_t,V)U,\partial_t)&=-\gamma(U,V)S\nabla^2S(\partial_t,\partial_t)=-\gamma(U,V)SS'',\\
\mathrm{Ric}\KN g(\partial_t,U,\partial_t,V)&=\mathrm{Ric}_g(\partial_t,\partial_t)g(U,V)+\mathrm{Ric}_g (U,V)g(\partial_t,\partial_t)=-n\frac{S''}{S}g(U,V) - \mathrm{Ric}_g(U,V),\\
g\KN g(\partial_t,U,\partial_t,V)&=-2g(U,V).
\end{align*}
Therefore, using (\ref{WPRicciSplittingAP}) and the fact that $\mathrm{Ric}_{\gamma}=c\gamma$, for some fixed constant $c$, we find
\begin{align*}
W(\partial_t,U,\partial_t,V)
&=-\gamma(U,V)SS''+\frac{1}{n-1}(nSS''\gamma(U,V) + \mathrm{Ric}_{\gamma}(U,V) - S^2S^{\#}\gamma(U,V) )- \frac{S^{2}R_g}{n(n-1)}\gamma(U,V),\\
&=\frac{1}{n-1}\left( - S^2S^{\#} -(n-1)(S')^2 -SS''\right)\gamma(U,V).
\end{align*}
Using the the definition of $S^{\#}$, we see that the factor between brackets in the last line of the above expression vanishes identically, so
\begin{align}\label{WPConfFlatAP.0}
W(\partial_t,U,\partial_t,V)=0
\end{align}

Let us now consider $U,V$ and $W$ to be vertical vector fields and analyse the following expression:
\begin{align*}
W(\partial_t,U,V,W)&=g(R(V,W)U,\partial_t) - \frac{1}{n-1}\mathrm{Ric}_g\KN g(\partial_t,U,V,W) +\frac{R_g}{2n(n-1)}g\KN g(\partial_t,U,V,W).
\end{align*}
Using (\ref{WPRicciSplittingAP}) one can see that the second term above vanishes, while the orthogonality between horizontal and vertical vectors implies the vanishing of the $g\KN g$-term. Finally, noticing that $R(V,W)U$ is also vertical, we see that the first term vanishes as well, implying:
\begin{align}\label{WPConfFlatAP.1}
W(\partial_t,U,V,W)&=0.
\end{align}

Now consider $V_1,\cdots,V_4$ vertical vector fields and compute 
\begin{align*}
W(V_1,V_2,V_3,V_4)&=g(R(V_3,V_4)V_2,V_1) -\frac{1}{n-1}\mathrm{Ric}_g\KN g(V_1,V_2,V_3,V_4) + \frac{R_g}{2n(n-1)}g\KN g(V_1,V_2,V_3,V_4),\\
&=g(R_{\gamma}(V_3,V_4)V_2,V_1) + g(\nabla S,\nabla S)(\gamma(V_2,V_3)g(V_4,V_1) - \gamma(V_2,V_4)g(V_3,V_1)) \\
&-\frac{1}{n-1}\left(\mathrm{Ric}_{\gamma}\KN g(V_1,V_2,V_3,V_4) - S^2S^{\#}\gamma\KN g(V_1,V_2,V_3,V_4)\right) \\
&+\frac{R_g}{2n(n-1)}g\KN g(V_1,V_2,V_3,V_4),\\
&=S^2\gamma(R_{\gamma}(V_3,V_4)V_2,V_1) - \frac{S^2c}{2(n-1)}\gamma\KN \gamma(V_1,V_2,V_3,V_4).
\end{align*}
One can directly see that $W_{\gamma}=0$ and $\mathrm{Ric}_{\gamma}=c\gamma$ imply that the last line in the above expression vanishes identically.
Therefore
\begin{align}\label{WPConfFlatAP.2}
W(V_1,V_2,V_3,V_4)&=0.
\end{align}
Finally, the symmetries of the Weyl tensor put together with (\ref{WPConfFlatAP.0}),(\ref{WPConfFlatAP.1}) with (\ref{WPConfFlatAP.2}) imply that $W_g=0$, and therefore the theorem follows.
\end{proof}

Finally, let us just notice that the class of manifolds considered above is precisely the one of relevance in the study of isotropic space-times (in particular for the FLRW family). This follows in view of Theorem \ref{ONthm} and the following simple proposition below:

\begin{prop}
A connected Einstein semi-Riemannian manifold $(M^n,g)$ is conformally-flat iff it has constant sectional curvature.
\end{prop}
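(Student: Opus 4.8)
The plan is to reduce everything to the Ricci (Weyl) decomposition of the curvature tensor, together with the elementary fact that a $(0,4)$-curvature tensor which equals a constant multiple of $g\KN g$ encodes constant sectional curvature. Throughout I would assume $n\geq 3$: for $n=2$ every metric is conformally flat while ``Einstein'' already forces constant Gauss curvature, so the equivalence read in that direction is false; in the cosmological setting of interest $n\geq 3$ in any case.

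For the easy implication, if $(M^n,g)$ has constant sectional curvature $\kappa$, then its $(0,4)$-curvature tensor is $R=\tfrac{\kappa}{2}\,g\KN g$, whence $\mathrm{Ric}=(n-1)\kappa\,g$ (so $g$ is Einstein) and $R_g=n(n-1)\kappa$. Substituting these into the Weyl formula (\ref{WeylTensor}) written in dimension $n$ — that is, with $n+1$ replaced by $n$, so the coefficients become $-\tfrac{1}{n-2}$ and $\tfrac{R_g}{2(n-1)(n-2)}$ — all the $g\KN g$ contributions cancel and $W\equiv 0$. For $n\geq 4$, Theorem \ref{ConformallFlatnessTHM} then gives conformal flatness. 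For $n=3$ the Weyl tensor vanishes identically, and conformal flatness instead follows from the vanishing of the Cotton tensor: for an Einstein metric the Schouten tensor is a constant multiple of $g$, so its covariant derivative, hence the Cotton tensor, vanishes, and in dimension three this is exactly the criterion for conformal flatness (equivalently, one may just note that $\mathbb{E}^3,\mathbb{S}^3,\mathbb{H}^3$ are explicitly conformally flat).

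For the converse, suppose $(M^n,g)$ is Einstein and conformally flat. Einstein means $\mathrm{Ric}=\lambda g$, and for $n\geq 3$ the contracted second Bianchi identity forces $\lambda$, equivalently $R_g=n\lambda$, to be constant. Conformal flatness gives $W\equiv 0$ (by Theorem \ref{ConformallFlatnessTHM} for $n\geq 4$, and automatically for $n=3$). Solving the decomposition for $R$, i.e.\ $R=W+\tfrac{1}{n-2}\,\mathrm{Ric}\KN g-\tfrac{R_g}{2(n-1)(n-2)}\,g\KN g$, and inserting $W=0$, $\mathrm{Ric}=\lambda g$, $R_g=n\lambda$, a one-line computation collapses everything to $R=\tfrac{\lambda}{2(n-1)}\,g\KN g$. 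Evaluating on an orthonormal pair $\{X,Y\}$ and using $(g\KN g)(X,Y,X,Y)=2$ shows that the sectional curvature of every tangent plane at every point equals the constant $\tfrac{\lambda}{n-1}$, so $(M^n,g)$ has constant sectional curvature.

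Both directions are essentially bookkeeping; the genuine care points are the dimension hypothesis $n\geq 3$, the fact that Theorem \ref{ConformallFlatnessTHM} applies only for $n\geq 4$ so the case $n=3$ must be argued separately (via the Cotton tensor or the explicit space-form models), and keeping the Kulkarni--Nomizu normalisations consistent with (\ref{WeylTensor}), remembering that the latter was stated for an $(n+1)$-dimensional manifold and must be re-indexed to dimension $n$ here. I expect the $n=3$ case to be the part most easily glossed over.
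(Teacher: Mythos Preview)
Your proof is correct and follows essentially the same route as the paper: both hinge on the observation that for an Einstein metric the Weyl decomposition collapses to $W = R - \dfrac{R_g}{2n(n-1)}\,g\KN g$, so $W=0$ is equivalent to $R$ having the constant-curvature form. Your treatment is in fact more careful than the paper's: you correctly flag the need for $n\geq 3$, you re-index (\ref{WeylTensor}) to dimension $n$ (note, though, that after the Einstein substitution the $(n-2)$ factors cancel, which is why the paper's final formula with $n$ in the denominator happens to be correct despite the apparent index mismatch), and you handle $n=3$ separately via the Cotton tensor, whereas the paper tacitly applies Theorem~\ref{ConformallFlatnessTHM} without addressing that case.
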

\begin{proof}
From (\ref{WeylTensor}), we see that the Einstein condition implies
\begin{align*}
\begin{split}
W(V,X,Y,Z)
&=g(R(Y,Z)X,V) - \frac{R_g}{2n(n-1)}g\KN g (V,X,Y,Z).
\end{split}
\end{align*}
Therefore, $(M^n,g)$ is conformally flat iff
\begin{align*}
R(V,X,Y,Z)=g(R(Y,Z)X,V) = \frac{R_g}{2n(n-1)}g\KN g (V,X,Y,Z)
\end{align*}
Therefore, given any point $p\in M$ and any non-degenerate plane $P$ generated by two tangent vectors $\{v,w\}$, we see that
\begin{align*}
k(P)=\frac{R_g}{n(n-1)},
\end{align*}
which is constant from connectedness. 
\end{proof}

\section*{Declarations}

\subsection*{Funding}

The author would like to thank the Alexander von Humboldt Foundation and the Fundação Cearense de Apoio ao Desenvolvimento Científico e Tecnológico (FUNCAP) for partial financial support during the writing of this paper.

\subsection*{Data availability statement}

Data sharing is not applicable to this article as no datasets were generated or analysed during the current study.

\subsection*{Conflict of interest}

The corresponding author states that there is no conflict of interest.

\addcontentsline{toc}{section}{References}
\printbibliography

\end{document}